\theoremstyle{plain}   
\newtheorem{theorem}{Theorem}[section]
\newtheorem{thm}{Theorem}[section]
\newtheorem{lem}{Lemma}[section]
\newtheorem{cor}{Corollary}[section]
\newtheorem{prop}{Proposition}[section]
\newtheorem{defi}{Definition}[section]
\theoremstyle{remark}
\newcommand*{\fancyrefthmlabelprefix}{thm}
\newcommand*{\fancyreflemlabelprefix}{lem}
\newcommand*{\fancyrefcorlabelprefix}{cor}
\newcommand*{\fancyrefdefilabelprefix}{defi}
\newcommand*{\fancyrefalglabelprefix}{alg}
\newcommand*{\frefalgname}{algorithm}
\newcommand*{\Frefalgname}{Algorithm}
\newcommand*{\fancyrefapplabelprefix}{app}
\newcommand*{\Frefappname}{Appendix}
\definecolor{Green}{HTML}{00AD69}  %
\def\beq{\begin{equation}}
\def\eeq{\end{equation}}
\def\bq{\begin{quote}}
\def\eq{\end{quote}}
\def\ben{\begin{enumerate}}
\def\een{\end{enumerate}}
\def\bit{\begin{itemize}}
\def\eit{\end{itemize}}
\def\l|{\left|}
\def\r|{\right|}
\newcommand\R{\mathbbm{R}}
\newcommand\N{\mathbbm{N}}
\newcommand\M{\mathcal{M}}
\newcommand\D{\mathcal{D}}
\newcommand{\cL}{\mathcal{L}}
\newcommand{\avg}[1]{\langle #1 \rangle} %
\newcommand{\linGen}{\cG}
\newcommand{\tr}[1]{\operatorname{tr}\left[#1\right]}
\newcommand{\id}{\text{id}}
\definecolor{mulberry}{rgb}{0.77, 0.29, 0.55}
\newcommand{\cG}{\mathcal{G}}
\newcommand{\norm}[1]{\left\|#1\right\|}
\newcommand{\cO}{\mathcal{O}}
\newcommand{\tcO}{\tilde{\mathcal{O}}}
\newcommand{\Tr}{\operatorname{tr}}
\newcommand{\polylog}{\textrm{polylog}}
\begin{document}
\title{Efficient and robust estimation of many-qubit Hamiltonians}
\author{Daniel Stilck Fran\c{c}a}
 \email{daniel.stilck\_franca@ens-lyon.fr}
 \affiliation{QMATH, Department of Mathematical Sciences, University of Copenhagen, Universitetsparken 5, 2100 Copenhagen, Denmark}
 \affiliation{Univ Lyon, ENS Lyon, UCBL, CNRS, Inria, LIP, F-69342, Lyon Cedex 07, France}

 \author{Liubov A. Markovich}
\affiliation{QuTech and Kavli Institute of Nanoscience, Delft University of Technology, 2628 CJ, Delft, The Netherlands}

\author{V. V. Dobrovitski}
\affiliation{QuTech and Kavli Institute of Nanoscience, Delft University of Technology, 2628 CJ, Delft, The Netherlands}

\author{Albert H. Werner}
 \affiliation{QMATH, Department of Mathematical Sciences, University of Copenhagen, Universitetsparken 5, 2100 Copenhagen, Denmark}
\affiliation{NBIA, Niels Bohr Institute, University of Copenhagen, Blegdamsvej 17, 2100 Copenhagen, Denmark}

\author{Johannes Borregaard}
 \affiliation{QuTech and Kavli Institute of Nanoscience, Delft University of Technology, 2628 CJ, Delft, The Netherlands}%

\begin{abstract}
Characterizing the interactions and dynamics of quantum mechanical systems is an essential task in the development of quantum technologies. We propose an efficient protocol based on the estimation of the time-derivatives of few qubit observables using polynomial interpolation for characterizing the underlying Hamiltonian dynamics and Markovian noise of a multi-qubit device. For finite range dynamics, our protocol exponentially relaxes the necessary time-resolution of the measurements and quadratically reduces the overall sample complexity compared to previous approaches. Furthermore, we show that our protocol can characterize the  dynamics of systems with algebraically decaying interactions. The implementation of the protocol requires only the preparation of product states and single-qubit measurements. Furthermore, we develop a shadow tomography method for quantum channels that is of independent interest. This protocol can be used to parallelize  to learn the Hamiltonian, rendering it applicable for the characterization of both current and future quantum devices.
\end{abstract}

\maketitle

\section{Introduction}
Large quantum devices consisting of tens to hundreds of qubits have been realized across various hardware architectures~\cite{Arute2019,Zhong2020,Scholl2021,Ebadi2021} representing a significant step towards the realization of quantum computers and simulators with the potential to solve outstanding problems intractable for classical computers~\cite{Cirac2012,MichaelA.Nielsen2010}. 
However, continued progress towards this goal requires careful characterization of the underlying Hamiltonians and dissipative dynamics of the hardware to mitigate errors and engineer the desired dynamics. The exponential growth of the dimension of the state space of a quantum device with the number of qubits renders this an outstanding challenge broadly referred to as the Hamiltonian learning problem~\cite{Eisert2020,Shulman2014,Zhang2014,Zhang_2015,PhysRevLett.102.187203,Sheldon_2016,Sone_2017,lindblad_tomo,Wang_2015,Wang2017,Valenti2019,Valenti2021,Granade2012,PhysRevLett.112.190501,PhysRevA.89.042314,PhysRevA.84.012107,Qi2019,Chertkov_2018,PhysRevLett.122.020504,Anshu2021,Evans2019,Li_2020,2108.04842,PhysRevLett.107.210404,Bairey2020,Zubida2021,Gu2022,Yu2022,Rattacaso2022}.   

To tackle this challenge, previous approaches make strong assumptions such as the existence of a trusted quantum simulator capable of simulating the unknown Hamiltonian~\cite{PhysRevLett.112.190501,PhysRevA.89.042314} or the capability of preparing particular states of the Hamiltonian such as steady states and Gibbs states~\cite{PhysRevA.92.052322,PhysRevLett.122.020504,Anshu2021,2108.04842,Qi2019,2107.03333}, which may be difficult for realistic devices subject to various decoherence mechanisms. 

Alternatively, several works~\cite{PhysRevLett.107.210404,Bairey2020,Zubida2021} are built on the observation that a Master equation describes the evolution of any system governed by Markovian dynamics. 
Through this, one obtains a simple linear relation between time derivatives of expectation values and the parameters of the Hamiltonian, jump operators and decay rates (jointly referred to as the parameters of the Lindbladian $\cL$) governing the system. Furthermore, for finite range interactions, these approaches can estimate the parameters of the Lindbladian to a given precision from a number of samples that is independent of the system's size~\cite{PhysRevLett.107.210404,Bairey2020,Zubida2021}. %
\begin{center}
    \begin{figure}[t]
    \includegraphics[width=1.\linewidth]{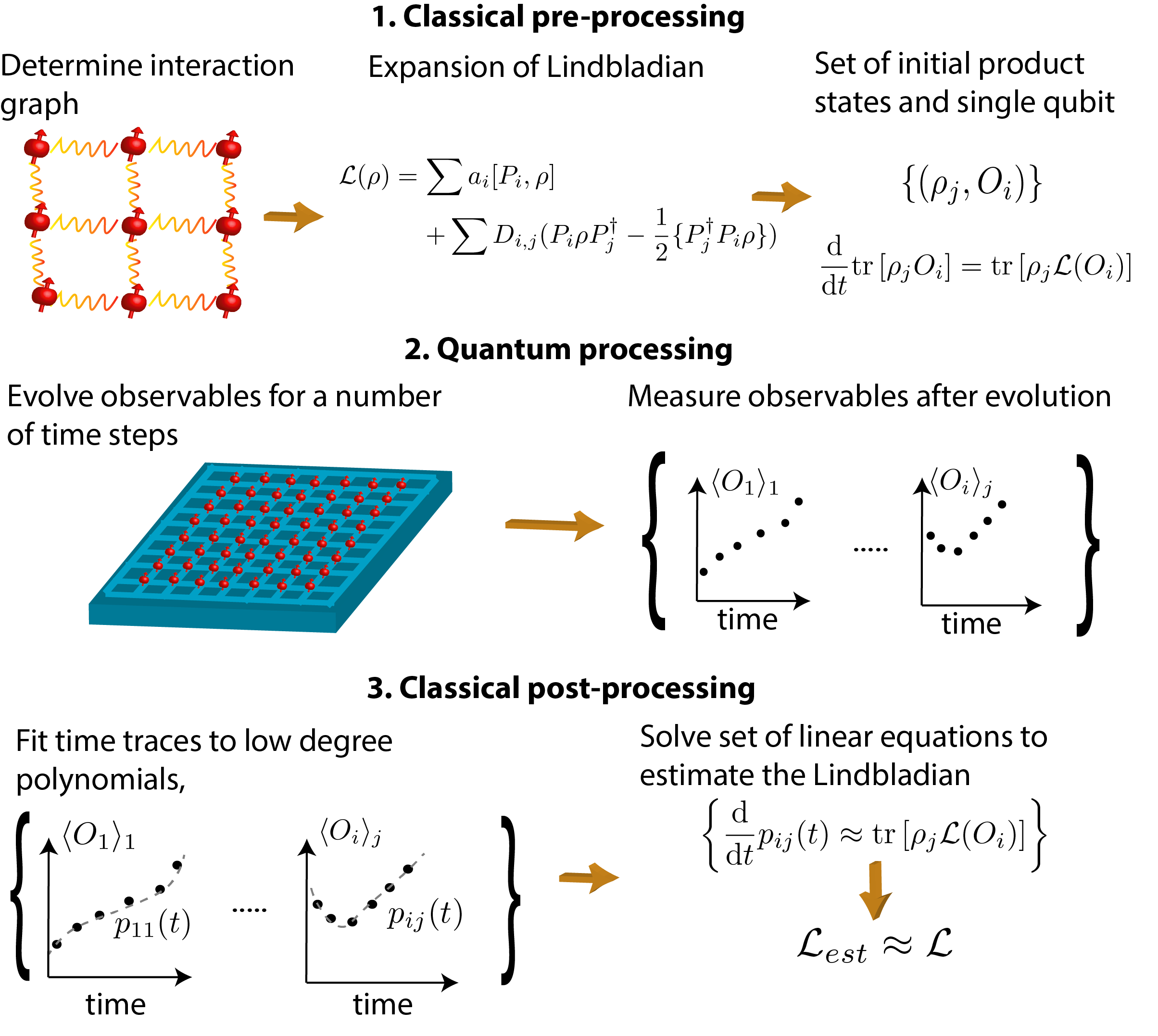}
    \label{fig:precision_0}
    \caption{Sketch of the proposed protocol to estimate an unknown Lindbladian, $\cL$, of a multi-qubit device. In the first step of classical pre-processing, the interaction graph between qubits is identified from the physical connectivity of the device. Then the unknown Lindbladian is written in a general form using an operator basis of Pauli strings, $\{P_i\}$ (see main text) and a suitable set of initial states and observables, $\{(\rho_j,O_i)\}$ is chosen. In the second step of quantum processing, a time trace of each element of the set is obtained from evolution and measurement on the quantum device. In the last step of classical post-processing, each time trace is fitted to a low-degree polynomial to estimate the derivative of the observable. From these, an estimate of the Lindbladian,$\cL_{est}$, is obtained from the Master equation.}
    \end{figure}
\end{center}

A significant drawback of these approaches is that the time derivatives are estimated using finite difference methods. Obtaining a good precision thus requires high time resolution, which is experimentally challenging given the finite operation time of gates and measurements. To estimate a Lindbladian parameter up to an additive error $\epsilon$, the system has to be probed at times $\cO(\epsilon)$ apart and expectation values of observables have to be estimated up to a precision of $\cO(\epsilon^2)$, which translates to an overall $\cO(\epsilon^{-4})$ sample complexity to estimate each parameter.

In this article, we propose a novel protocol that alleviates these daunting experimental requirements. Our protocol requires only a time resolution of $\cO(\polylog(\epsilon^{-1}))$ representing an exponential improvement compared to previous protocols and gives an overall sample complexity to recover \emph{all} parameters of a local $n$ qubit Lindbladian up to precision $\epsilon$ of $\cO(\epsilon^{-2}\textrm{polylog}(n,\epsilon^{-1}))$. We obtain this by estimating time derivatives using multiple temporal sampling points and robust polynomial interpolation \cite{kane_robust_2017}. Furthermore, we show how to use shadow process tomography methods to estimate multiple parameters in parallel. In particular, we address shortcomings of previous results~\cite{processtomo,processtomo2} in extending the framework of classical shadows to processes, a result that is of independent interest.
We also extend our analysis to long-range (algebraically decaying) interactions in the systems, obtaining the first results for such systems to the best of our knowledge. The necessary operations for our protocol are measurements in Pauli basis on time-evolved product states consisting of Pauli eigenstates. These minimal requirements make our protocol feasible for characterization of both current and future quantum devices.  

\section{Results}
In order to use our protocol for an efficient characterization of a quantum device, two assumptions should be fullfilled: 
\begin{enumerate}
    \item The quantum device implements an (unknown) Markovian quantum evolution on $n$ qubits described by a time-independent Lindbladian, $\cL$. 
    \item We assume knowledge of the general structure of the interaction graph of the device i.e. which qubits are coupled to each other. Importantly, no assumptions are made regarding the exact strength or even form of the couplings.  
\end{enumerate}

The first assumption ensures that the evolution of a general observable, $\avg{O}$ is described by the Master equation, i.e. $\frac{d}{dt}\avg{O}=\avg{\cL(O)}.$ We note that the Lindbladian captures both the Hamiltonian evolution and the dissipative dynamics of the device. Furthermore, the assumption of time-independence applies to the run-time of the experimental characterization.  

The second assumption bounds the size of the estimation task. If the interaction graph was completely unknown, our protocol could, in principle, be applied but would require the estimation of an exponentially growing number of general multi-qubit coupling terms as the number of qubits increases. However, having prior knowledge that, e.g. nearest neighbor couplings dominate in the device, makes the estimation task tractable.   

Using the knowledge of the interaction graph, one can expand the Lindbladian in an operator basis, $\{\sigma_i\}$ constructed from tensor products of single-qubit Pauli matrices and the identity. Such an expansion is always possible since this basis amounts to a Hilbert-Schmidt orthogonal set of traceless Hermitian operators spanning the entire vector space. Estimating the set of expansion coefficients $\{\alpha_i\}$ gives an estimation of $\cL$ and thus a full characterization of the system. 

It is well known that the Master equation for the time derivative of the expectation value of a local observable $O$ at time $t=0$ for a given initial state $\rho$ of the system gives us a linear equation for the expansion coefficients~\cite{PhysRevLett.107.210404,Bairey2020,Zubida2021}. We use this to estimate the expansion coefficients going through three stages of \emph{classical pre-processing}, \emph{quantum processing}, and \emph{classical post-processing}.  \newline \newline
\textbf{Classical pre-processing} 

After expanding $\cL$ in an operator basis, the following steps are performed. 
\begin{enumerate}
    \item Find a suitable complete set, $\{(\rho_j,O_i)\}$ of multi-qubit product states $(\rho_j)$ and observables $(O_i)$ for which the Master equation involves only a few selected expansion parameters of the Lindbladian for each element of the set. The set is complete in the sense that all expansion coefficients can be found by solving the Master equations for all elements in the set. As we show below, such a set can readily be found by considering initial states where only a few qubits are initialized as different eigenstates of the Pauli matrices while the remaining qubits are prepared in the maximally mixed state $I/2$.  
    \item Calculate the expectation values appearing on the right hand side of the Master equations $\frac{\text{d}}{\text{d}t}\!\tr{\rho_j O_i}=\tr{\rho_j \cL(O_i)}$ for all elements in the set $\{(\rho_j,O_i)\}$. Since both the initial states and the observables are products, this can be done efficiently. 
    \end{enumerate}  
\textbf{Quantum processing}

In order to solve for the expansion coefficients $\{\alpha_i\}$, we also need the values of the time-derivatives appearing on the left hand side of the Master equations, i.e. $\frac{\text{d}}{\text{d}t}\!\tr{\rho_j O_i}$. These are estimated using the quantum device in the following way.   
    \begin{enumerate}
    \item The quantum device is prepared in initial state $\rho_j$ and evolved for a time $t_k\in\{t_0,t_1,\ldots, T\}$ after which the observable $O_i$ is measured.  
    \item The above procedure is repeated for each element in the set $\{(\rho_j,O_i)\}$ for all evolution times $t_k$ to obtain estimates of $\avg{O_i(t_k)}_j=\tr{\rho_j(t_k) O_i}$ where $\rho_j(t_k)$ is the state of the system having evolved for time $t_k$ from the initial state $\rho_j$. We note that the single qubit mixed states can be simulated by sampling eigenstates of the Pauli matrices at random. 
\end{enumerate}
\textbf{Classical post-processing}

The final part of the characterization involves estimating $\frac{\text{d}}{\text{d}t}\!\tr{\rho_j O_i}$ from the experimentally obtained time trace of $\avg{O_i(t_k)}_j$ and solving for the expansion coefficients $\{\alpha_i\}$. This involves 
\begin{enumerate}
    \item Fit the time trace of $\avg{O_i(t_k)}_j$ with a low-degree polynomial in the time, $p_{i,j}(t)$ and estimate $\frac{\text{d}}{\text{d}t}\!\tr{\rho_j O_i}$ as $\frac{\text{d}}{\text{d}t}p_{i,j}(t)|_{t=0}$. This is done for each element in the set $\{(\rho_j,O_i)\}$. 
    \item Solve the set of linear equations from the Master equations $\frac{\text{d}}{\text{d}t}\!\tr{\rho_j O_i}=\tr{\rho_j \cL(O_i)}$ with respect to the expansion coefficients ($\{\alpha_i\}$). This is possible since  $\frac{\text{d}}{\text{d}t}\!\tr{\rho_j O_i}$ has been estimated from the polynomial fits and all expectation values appearing in $\tr{\rho_j \cL(O_i)}$ have been calculated leaving the $\alpha_i$'s as the only unknown parameters.  
\end{enumerate}

Following the steps above, a complete characterization of the underlying Hamiltonian and dissipative dynamics of the quantum device as given by the Lindbladian is obtained. The two key steps of the protocol are the choice of the set $\{(\rho_j,O_i)\}$ and the polynomial interpolation used to obtain estimates of the time derivatives. Below, we outline the details of both steps, show how shadow tomography methods can be used to parallelize the procedure, and provide rigorous guarantees on the precision of the protocol. Importantly, we show that Lieb-Robinson bounds on the spread of correlations in the system can be used to ensure robust polynomial fitting of the time traces of expectation values allowing for an exponential relaxation of the temporal resolution compared to finite difference methods rendering the protocol feasible for near-term quantum devices. 

\subsection{Choosing the set of initial states and observables} 
The first step in the classical pre-processing is to expand $\cL$ in an operator basis constructed from tensor products of single-qubit Pauli matrices and the identity. 
The right hand side (rhs) of the Master equation $\frac{\text{d}}{\text{d}t}\!\tr{\rho_j O_i}=\tr{\rho_j \cL(O_i)}$ 
can be expanded as a sum of single Pauli matrices and their products. Our goal is to isolate the unknown expansion coefficients.
To this end, we consider an initial state of the form
\begin{eqnarray}\label{1326}
\rho_{k,l}^{(i,j)}=\frac{(I+\sigma^{(i)}_{k})}{2}\otimes \frac{(I+\sigma^{(j)}_{l})}{2}\otimes \rho_{N-2},
\end{eqnarray}
where the $i$'th and $j$'th qubit are prepared in eigenstates of the Pauli matrices while the state of the remaining $N-2$ qubits, $\rho_{N-2}$, is assumed to be the maximally mixed state. 

For a state of the form in Eq.~(\ref{1326}) the rhs of the Master equation (see above) can be simplified greatly depending on the choice of the observable $O$. This is due to the properties of the Pauli matrices namely that they have vanishing trace and that 
\begin{equation} \label{1327}
\sigma_{k} \sigma_{l}=\delta_{kl}I+i\varepsilon_{klm}\sigma_{m}, 
\end{equation}
where $\delta_{kl}$ is the Kroenecker delta function and $\varepsilon_{klm}$ is the Levi-Civita symbol. From this, it follows that if a single qubit Pauli observable ($O=\sigma^{(i)}_{m}$) is chosen, then only the single qubit terms of the rhs of the Master equation involving the $i$'th qubit will have non-vanishing trace and, using the relation in Eq.~(\ref{1327}), the different single qubit Pauli expansion coefficients (the coefficients of terms in the expansion that only involves single qubit Pauli matrices) can be isolated. 

After isolating the single qubit expansion coefficients, the expansion coefficients related to two-qubit Pauli terms ($\sigma^{(i)}_{m}\otimes\sigma^{(j)}_{n}$) can be isolated by choosing observables of the form $O=\sigma^{(i)}_{m}\otimes\sigma^{(j)}_{n}$ in a similar manner. This procedure can be iterated to isolate higher and higher order expansion coefficients by considering observables involving more and and more qubits.

In the supplemental material, we provide a detailed derivation of how all expansion parameters can be isolated for a general Hamiltonian with terms coupling from two to $k$ qubits and arbitrary single qubit dissipation terms. We note that already for two qubit dissipation terms, deriving linear combinations of initial states and expectation values that allow us to isolate different parameters is quite cumbersome and we do not do this explicitly. However, from a numerical point of view this is a trivial task. Indeed, as remarked before, each pair of Pauli strings gives us access to a linear equation for the different parameters of the evolution. 

After collecting enough equations to ensure that the linear system is invertible, the precision with which we need to estimate each expectation value to ensure a reliable estimation of the parameters is controlled by the condition number of the matrix describing the system of linear equations. As both estimating the condition number and solving the linear system can be done efficiently, we conclude that estimating dissipative terms acting on a constant number of qubits does not posses a significant challenge from a numerical perspective.

A desirable feature for a Hamiltonian learning protocol is that the state preparation and measurement steps are simple, parallelizable and even independent of the parameter being estimated. In this paper, we propose a variation of the classical shadows protocol of~\cite{Huang2020} for process tomography that achieves that and is of interest on its own. Given a quantum channel $\Phi$ acting on $N$ qubits, our protocol estimates overlaps of the form $2^{-n}\tr{P^a_{m}\Phi(P^b_{l})}$ for Pauli strings $P^a_m,P^b_l$. Here $P^a_m$ refers to the $m$'th Pauli string in the collection of Pauli strings that differ from the identity on at most $\omega_a$ sites. Note that estimates of such overlaps is all that is required as input for our Hamiltonian learning algorithm. By only requiring the preparation of random product Pauli eigenstates and measurements in a random Pauli bases, our protocol only requires $\mathcal{O}(3^{w_a+w_b}\epsilon^{-2}\log(K_1K_2))$ samples to estimate such overlaps up to error $\epsilon$ for $K_1K_2$ pairs of Pauli strings of weight at most $w_a$ and $w_b$ respectively. For local Hamiltonians on a lattice, using this protocol gives a logarithmic in system size sample complexity to determine \emph{all} parameters of the evolution.

\subsection{Robust polynomial interpolation} As described above, a key step in our learning algorithm is  to obtain information about the time-derivatives of observables at $t=0$. For this, we rely on robust polynomial interpolation. Accordingly, based on expectation values $\avg{O_i(t_k)}_j$ for a set of times $t_k$ we want to extract a polynomial $p_{i,j}(t)$  such that swe can estimate $\frac{\text{d}}{\text{d}t}\!\tr{\rho_j O_i}$ as $\frac{\text{d}}{\text{d}t}p_{i,j}(t)|_{t=0}$. For this approach to work, we have to be able to control the degree of the polynomial $p_{i,j}(t)$ in order to give an upper bound on the number of sampling points $t_k$ for which we will have to determine $\avg{O_i(t_k)}_j$ experimentally. In the following, we briefly outline how such a guarantee on the degree of $p_{i,j}(t)$ can be obtained and refer to Append.~\ref{sec:quality_approx_poly} for a detailed proof.

Our argument proceeds in two steps. Firstly, we establish that the expectation value $\avg{O(t)}$ of a local observable $O$ that evolves under a Lindbladian $\cL_B$ restricted to some sub-region $B$ up to some time $t_{max}$, can indeed be approximated up to error $\varepsilon$ by a degree-$d$ polynomial, where $d$ depends linearly on the size of $B$, $t_{max}$ and $\log(\varepsilon^{-1})$. 
Hence, for the second step of our argument, it remains to show under which circumstances, we can restrict the evolution of the Pauli-strings $P^a_{m}$ we identified in the previous step to a local generator. The 
main insight here is that for finite range (or sufficiently quickly fast decaying) interactions, the dynamics of any local observable $O$ exhibits an effective light cone quantified by a Lieb-Robinson bound (LRB) \cite{Lieb_1972,poulin_lieb-robinson_2010,bach_lieb-robinson_2014,hastings_locality_2010,kliesch2014lieb,kuwahara_strictly_2020}. The LR-bound in turn allows us to restrict the Lindbladian on the full system to a generator coupling only systems in the vicinity of the support of $O$, where the size of this shielding region only grows linearly with $t_{max}$. Hence, bringing these two arguments together, we can first employ the LR-bound to restrict the dynamics to a sub-region around the support of the Pauli-string, $P^a_{m}$, and then approximate  the corresponding evolution on that finite region up to error $\varepsilon$ by a polynomial of degree $\cO\left[\operatorname{poly}(t_{\max},\log(\epsilon^{-1}))\right]$. Now, making use of the techniques from Ref.~\cite{kane_robust_2017}, we can extract the first derivative of this polynomial from measurements at $\cO\Big[\operatorname{polylog}(\epsilon^{-1})\Big]$ different times $t_k$.

\section{Numerical examples}\label{sec:numerics}
To investigate the performance of our protocol for experimentally relevant parameters, we performed numerical simulations of a multi-qubit superconducting device. We
consider a system with tunable couplers similar to the Google Sycamore chip~\cite{Arute2019}. This design relies on a cancellation of the next-next-nearest coupling between two qubits through the direct coupling with a coupler~\cite{Yan2018,Sung2021}. 
We consider a generic system consisting of a 2D grid of qubits with exchange coupling between nearest neighbors. 
The dynamics are described through a Lindblad equation
with the effective two-qubit Hamiltonian for each neighboring qubit pair ($i,j$)~\cite{Yan2018,Sung2021} 
\begin{eqnarray}
\label{eq:hamil1}
\!\!\!H_{ij}=\sum_{k=i,j}\frac{1}{2}\tilde{\omega}_k\sigma^{(k)}_z\!\!+\!\!\left[\frac{g_ig_j}{\Delta_{ij}}+g_{ij}\right](\sigma^{(i)}_+\sigma^{(j)}_-\!+\!\sigma^{(i)}_-\sigma^{(j)}_+)\quad
\end{eqnarray} 
for $i\neq j=1,\dots, n$ and a dissipation term $\D^{(i)}$ acting on the $i$'th qubit and  having jump operators $\sigma^{(i)}_-, \sigma^{(i)}_+$ (generalised amplitude damping) and $\sigma^{(i)}_z$ (pure dephasing).
Here $\tilde{\omega}_j=\omega_j+\frac{g_j^2}{\Delta_j}$ is the Lamb-shifted qubit frequency, $g_i$ is the coupling between the $i$'th qubit and the coupler, and $g_{ij}$ is the direct two-qubit coupling. We have assumed that $\Delta_j=\omega_j-\omega_c<0$ where $\omega_c$ ($\omega_j$) is the frequency of the coupler ($j$'th qubit) and have defined  $1/\Delta_{ij}=(1/\Delta_i+1/\Delta_j)/2$. By adjusting the frequencies of the coupler and the qubits, the effective qubit-qubit interaction can be cancelled up to experimental precision. Typical qubit frequencies are around $5-6$ GHz~\cite{Arute2019}), while $\Delta\sim-1$ GHz, $g_{ij}\sim10-20$ MHz, and $g_j\sim100$ MHz~\cite{Yan2018,Sung2021}. 
In our simulation, we assume that all qubit frequencies and couplings have been characterized up to a precision of $100$ kHz using standard characterization techniques~\cite{Arute2019} and consequently, that all couplers have been tuned off with the same precision i.e. $\frac{g_ig_j}{\Delta_{ij}}+g_{ij}\sim 100$ kHz. Considering a layout of 16 qubits (see Fig.~\ref{fig:precision_2d} for the interaction graph), we randomly sample all qubit frequencies and qubit-qubit interactions according to Gaussian distributions with zero mean and standard deviation of $100$ kHz. 

In addition to the Hamiltonian evolution, we also include dissipative dynamics in our numerical simulation. We include quasi-static random frequency shifts of the qubits leading to effective dephasing with a characteristic timescale of $T_2^* \sim 150$ $\mu$s as well as pure dephasing resulting in a transverse relaxation on a timescale $T_2\sim 60$ $\mu$s representing state of the art coherence times~\cite{Arute2019,Sung2021,Sung2021}. Finally, we include longitudinal relaxation of the qubits through an amplitude damping channel on the time scale of $T_1\sim60$ $\mu$s. We refer to Appendix \ref{sec:example_Hamiltonian} for a more detailed discussion and Table~\ref{tab_2_1} for the sampled parameters of our simulation. 

\begin{center}
    \begin{figure}[t]
    \includegraphics[width=1.\linewidth]{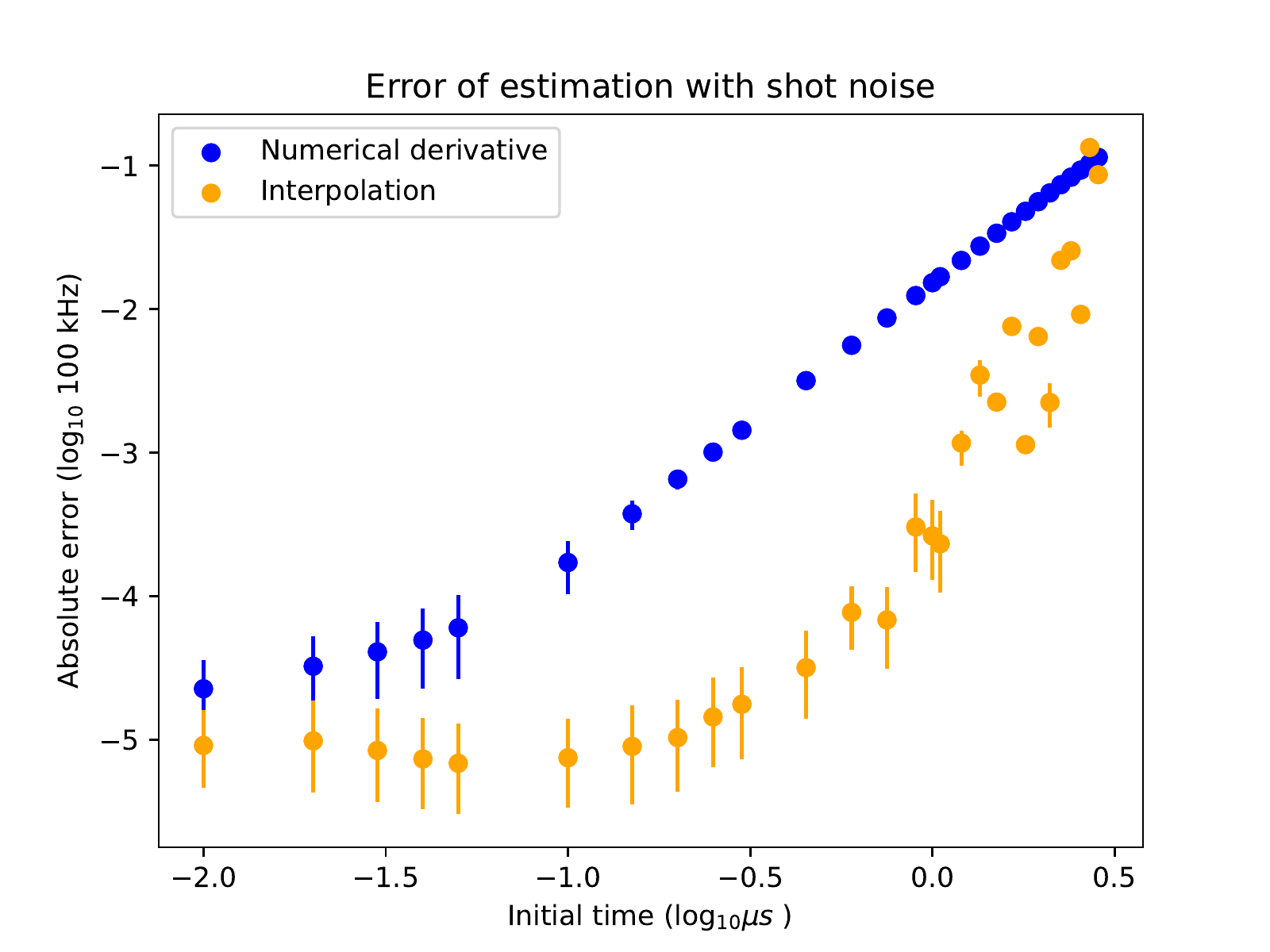}
    \caption{The median quality of recovery of of one $2$-qubit coupling using our method and those based on numerical derivatives \cite{PhysRevLett.107.210404,Bairey2020,Zubida2021} as a function of the initial time. We assumed that the total time of the experiment is fixed. That is, we let the initial time times the total number of samples for each time step to be a constant ($10^7$ for this plot). For each initial time, we simulated $1000$ instances of the recovery protocol, always adding shot noise with the same standard deviation to the data. The dots correspond to the median quality of recovery, whereas the lower and upper end correspond to the $25$ and $75$ percentile. 
    We ran the simulation on a system with $16$ qubits. }
    \label{fig:precision_1}
\end{figure}
\end{center} 

\begin{center}
    \begin{figure}[t]
        \includegraphics[width=1.\linewidth]{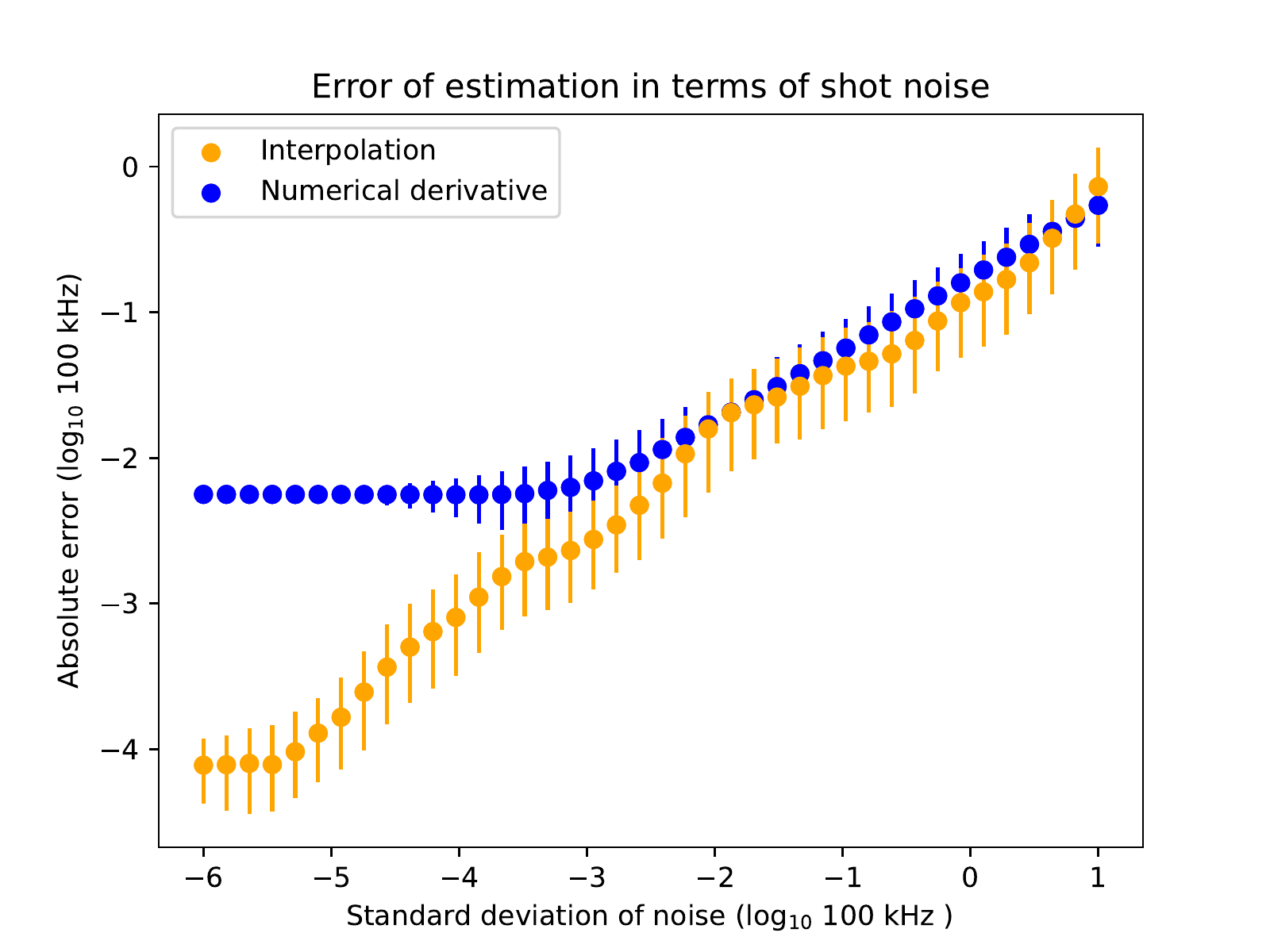}
         \caption{Median quality of recovery of one $2$-qubit coupling using our method and those based on numerical derivatives~\cite{PhysRevLett.107.210404,Bairey2020,Zubida2021} as a function of the standard deviation of the shot noise. The initial time for this estimate is $30$ ns and here we also generated $1000$ instances of the noise with a given standard deviation. The plot shows the median quality of the recovery and the $25$ and $75$ percentiles.
    We see that the quality of the recovery for the interpolation decays approximately linearly with the shot noise. For the numerical derivative, we see two regimes: first a linear decay of the error until a shot of noise of order $10^{-3}$. After that, the error plateaus and does not improve even with smaller shot noise. This is because for numerical derivative methods, at this point the dominant error source comes from the choice of initial time. Importantly, we see that interpolation consistently provides better estimates than the numerical derivatives method.
        }
        \label{fig:shot_noise} 
    \end{figure}  
    
\end{center} 
In Fig.~\ref{fig:precision_1}, we plot the average estimation error as a function of the temporal resolution set by the value of the initial time step, $t_0$. For this plot, we only included the Hamiltonian evolution in the numerical simulation together with quasi-static random frequency shifts of the qubits. This was to lower the run time of the simulation allowing us to investigate the performance for a broad range of initial times. We assumed the total run time of the experiment was fixed such that $t_0\times S$ is constant, where $S$ is the number of samples. From the figure, we clearly see the improved scaling of our protocol of the estimation error with the time-step size compared to using a finite difference  method~\cite{PhysRevLett.107.210404,Bairey2020,Zubida2021}. Besides already performing better at the time resolution for moderate values of the initial time, we see that after a threshold initial time around $10^{-0.7}$, the performance is not limited by the initial time, only the shot noise. In contrast, the finite difference method still require smaller initial times to improve on the error with the same shot noise. 

We also investigated the robustness of our method with respect to shot-noise for a fixed time resolution. For these simulations, we again only included the Hamiltonian evolution together with quasi-static random frequency shifts of the qubits to have a practical run time of the simulation. From Fig.~\ref{fig:shot_noise} we see that for a fixed time resolution of $30$ ns our protocol results in an average estimation error that improves linearly with the shot-noise down to an error below $10^{-4}$. This is in contrast to finite difference methods, where the estimation error plateaus around $10^{-3}$ since it becomes limited by the time resolution. This is a clear effect of the exponential improvement of our protocol w.r.t. the time resolution compared to finite difference methods. 

Finally, we performed a full numerical simulation including also the pure dephasing and amplitude damping noise as described above and estimated the $\sigma_X\sigma_X$ couplings between the qubits. As shown in Fig.~\ref{fig:precision_2d}, we obtain reliable estimates of all 22 couplings demonstrating how our method allows the estimation of specific terms in the Lindbladian despite the dynamics being governed by the full (dissipative) Lindbladian. For simplicity, we did not explicitly estimate the single qubit Hamiltonian parameters and the Lindbladian decay rates. 

For all estimations above, we fitted to degrees $1-7$ and picked the one with the smallest average error. 
We note that, although the robust interpolation methods of~\cite{kane_robust_2017} in principle require random times, we performed numerical experiments with deterministic times on systems with $16$ qubits.

\begin{center}
    \begin{figure}[t]
    \includegraphics[width=1.\linewidth]{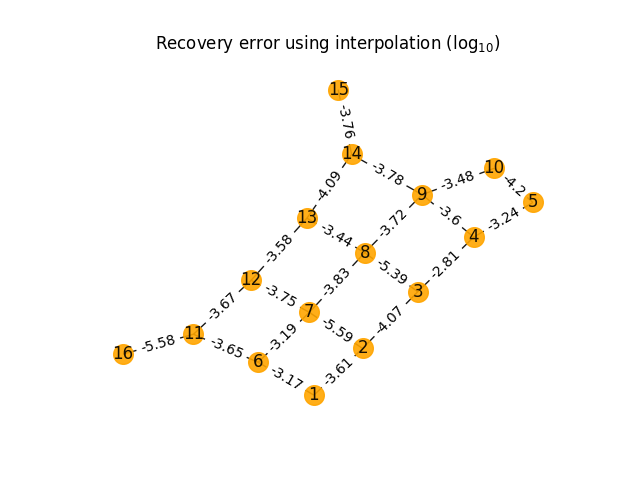}
    \includegraphics[width=1.\linewidth]{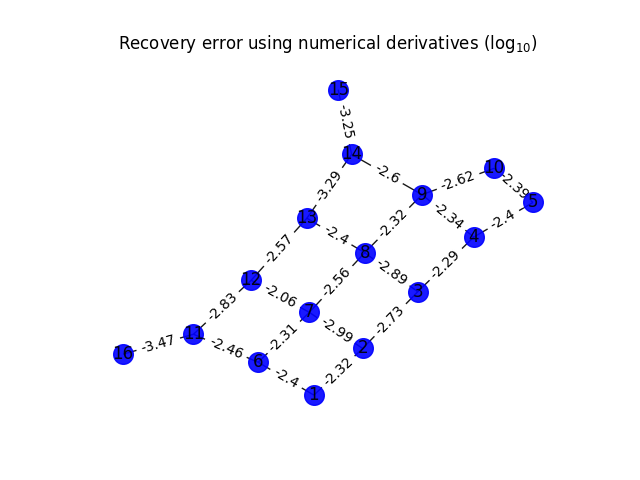}
    \caption{
        Error in recovery of $\sigma_X\sigma_X$ couplings of a quantum systems with a geometry similar to the Sycamore processor usign numerical derivatives and interpolation. Note that while we only plot the estimation of the Hamiltonian couplings, the numerical simulation included the full Lindbladian including both dephasing due to quasi-static random frequency shifts of the qubits, pure dephasing and amplitude damping noise. The initial time for each coupling was $0.1$ $\mu$s in the simulation. Note that interpolation consistently outperforms numerical derivatives, sometimes by several orders of magnitude.}
    \label{fig:precision_2d}
\end{figure}
\end{center}

\section{Conclusion and Discussion}
In conclusion, we have proposed a novel Hamiltonian learning protocol based on robust polynomial interpolation that has rigorous guarantees on the estimation error. Our protocol offers an exponential reduction in the required temporal resolution of the measurements compared to previous methods and a quadratic reduction in the overall sampling complexity for finite range interactions. Our protocol only requires the preparation of single qubit states and single qubit measurements in the Pauli bases making it suitable for characterization of both near term and future quantum devices.

Furthermore, the recovery of multiple parameters can be highly paralelized by resorting to a variation of classical shadows to quantum channels we introduce here. Besides being a protocol of independent interest, we believe our work constitutes the first application of classical shadows to process tomography.

Our method allows for the characterization of a general local Markovian evolution consisting of a unitary Hamiltonian part and a dissipative part. While we have only explicitly considered single qubit dissipation here, we believe that our protocol is also valid for general multi-qubit dissipation as outlined above but leave the explicit analysis of this to future work.  We have also analysed the performance of our protocol for algebraically decaying interactions which we believe to be the first results for Hamiltonian learning of such systems. The convergence of our method can be ensured for interactions decaying faster than the dimension of the system. We note, however, that improved bounds on the locality of such systems might improve this result in the future.

\section{Methods}
Here we detail and formalize our results regarding the estimation error guarantees of our protocol. In particular, we detail the use of Lieb-Robinson bounds on the spread of correlations in the system to bound the error. Furthermore, we outline the shadow tomography method for the parallization of the measurements.  

\subsection{Derivative estimation} 

Define $f(t)=\tr{e^{t\cL}(O)\rho}$ and $\cL_B$ to be the Lindbladian truncated to a subregion $B$ of the interaction graph. Our protocol consists of first estimating $f(t_i)$ up to an error $\cO(\epsilon)$ for random times $t_1,\ldots,t_m$. The curve of $f(t)$ is then fitted to a low-degree polynomial $p$, and $p'(0)$ is taken as an estimate for the derivative $f'(0)=\tr{\cL(O)\rho}$. Below we prove the accuracy and robustness of this method. The first step is Theorem \ref{thm:polynomialswork_main}, which establishes under what conditions $f(t)$ is indeed well-approximated by a low-degree polynomial. 

\begin{theorem}\label{thm:polynomialswork_main}
    Let $\cL$ be a local Lindbladian on a $D$-dimensional lattice. Moreover, let $t_{\max},\epsilon>0$ and $O_Y$ be a $2$-qubit observable, such that $\|O_Y\|\leq 1$, holds. 
    Then there is a polynomial $p$ of degree 
    \begin{align}\label{equ:polynomial_thm1}
        d=\cO\left[\operatorname{poly}(t_{\max},\log(\epsilon^{-1}))\right],
    \end{align}
    such that for all $0\leq t\leq t_{\max}$:
    \begin{align}\label{equ:quality_approx_1}
        \left|\tr{e^{t\cL_V}(O_Y)\rho}-p(t)\right|\leq\epsilon,
    \end{align}
    and $     p'(0)=\tr{\linGen(O_Y)\rho}$,     holds.
    \end{theorem}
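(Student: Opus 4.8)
The plan is to take $p$ to be the degree-$d$ truncation of the Taylor expansion of $f_V(t)=\tr{e^{t\cL_V}(O_Y)\rho}$ about $t=0$. Since $V$ is finite, $\cL_V$ is a bounded superoperator and the Heisenberg semigroup converges absolutely as a power series,
\[
f_V(t)=\sum_{k=0}^{\infty}\frac{t^k}{k!}\,\tr{\cL_V^k(O_Y)\rho},
\]
so I would set $p(t)=\sum_{k=0}^{d}\frac{t^k}{k!}\,\tr{\cL_V^k(O_Y)\rho}$. This choice gives the derivative claim for free: the coefficient of $t$ in $p$ is exactly $\tr{\cL_V(O_Y)\rho}$, hence $p'(0)=\tr{\cL_V(O_Y)\rho}$. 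Because $V$ is the shielding region it contains $\operatorname{supp}(O_Y)$ together with every local term of $\cL$ whose support meets $\operatorname{supp}(O_Y)$; the terms dropped in passing from $\cL$ to $\cL_V$ all act trivially on $O_Y$ and annihilate it at first order, so $\cL_V(O_Y)=\cL(O_Y)=\linGen(O_Y)$ and $p'(0)=\tr{\linGen(O_Y)\rho}$ as required.

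The approximation bound \eqref{equ:quality_approx_1} then reduces to controlling the Taylor tail. Using $\norm{\rho}_1\le 1$, $\norm{O_Y}\le1$ and submultiplicativity of the induced superoperator norm, I would estimate
\[
\left|f_V(t)-p(t)\right|\le\sum_{k=d+1}^{\infty}\frac{t^k}{k!}\,\norm{\cL_V^k(O_Y)}\le\sum_{k=d+1}^{\infty}\frac{(t\lambda)^k}{k!},\qquad \lambda:=\norm{\cL_V}_{\infty\to\infty}.
\]
The crucial quantitative input is $\lambda=\cO(|V|)$: the truncated generator is a sum of $\cO(|V|)$ local terms, each of uniformly bounded $\infty\to\infty$ norm (the commutator part bounded by $2\norm{H_X}$, the dissipative part by a constant times $\sum_j\norm{L_j}^2$), so the triangle inequality yields the linear-in-$|V|$ bound. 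Since $V$ is a ball whose Lieb--Robinson radius grows linearly in $t_{\max}$ on a $D$-dimensional lattice, $|V|=\cO(\operatorname{poly}(t_{\max}))$ and therefore $\lambda\,t_{\max}=\cO(\operatorname{poly}(t_{\max}))$.

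It then remains to choose $d$. Applying the standard tail estimate for the exponential series, $\sum_{k>d}x^k/k!\le 2(ex/d)^{d}$ for $d\ge 2ex$ with $x=\lambda t_{\max}$, I would take $d=\max\{2e\lambda t_{\max},\,C\log(\epsilon^{-1})\}$ for a suitable constant $C$. With this choice $ex/d\le\tfrac12$, so the tail is at most $2\cdot 2^{-d}\le\epsilon$, establishing \eqref{equ:quality_approx_1}. As $\lambda t_{\max}=\cO(\operatorname{poly}(t_{\max}))$, this gives $d=\cO(\operatorname{poly}(t_{\max},\log(\epsilon^{-1})))$, matching \eqref{equ:polynomial_thm1}.

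The main obstacle I anticipate is the quantitative control of $\lambda=\norm{\cL_V}_{\infty\to\infty}$ and confirming that the blunt submultiplicative bound $\norm{\cL_V^k(O_Y)}\le\lambda^k$ does not spoil the scaling. One might worry that repeated application of $\cL_V$ enlarges the operator's support and demands a tighter, support-aware estimate; but since $V$ is finite and fixed, the operator-norm bound already suffices, and the only care needed is to verify that the local terms have uniformly bounded norm so that $\lambda$ scales linearly in $|V|$ rather than worse. A secondary subtlety is the rigorous justification of $\cL_V(O_Y)=\linGen(O_Y)$ from the locality of the interaction terms, which is precisely where the shielding property of $V$ enters.
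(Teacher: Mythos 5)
Your construction of the polynomial itself---the truncated Taylor series of a spatially truncated evolution, the tail bound via $\norm{\cL_V}_{\infty\to\infty}=\cO(|V|)$, and the shielding argument giving $p'(0)=\tr{\linGen(O_Y)\rho}$---is exactly the paper's Lemma~\ref{lem:local_derivatives_bounded} and Corollary~\ref{cor:expectation_values_low_degree_poly}. But there is a genuine gap: you have quietly changed which curve the polynomial approximates. In the theorem, $\cL_V$ denotes the generator of the \emph{whole} lattice ($V$ is the full vertex set; compare Eq.~\eqref{equ:LRexpo_main_text}, where the truncated generator is $\cL_B$ and the global one is $\cL_V$), and the theorem is introduced precisely as the statement that the experimentally measured curve $f(t)=\tr{e^{t\cL}(O)\rho}$ is well approximated by a low-degree polynomial; its formal appendix version, Thm.~\ref{thm:polynomialswork}, is explicitly about $\tr{e^{t\cL_{\Gamma}}(O_Y)\rho}$ with $\cL_\Gamma$ the global generator. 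Your proof instead treats $V$ as a finite shielding ball and approximates the truncated curve $f_V$. For the global curve your tail estimate collapses: $\norm{\cL}_{\infty\to\infty}$ scales with the system size $n$, so the naive Taylor bound would force $d=\Omega(n\,t_{\max})$, which is exactly the system-size dependence the theorem is designed to avoid.

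The missing idea is the Lieb-Robinson step, Prop.~\ref{prop:lieb-robinson_approx} of the appendix: one must prove that $\left|\tr{\left(e^{t\cL}-e^{t\cL_{\Lambda_l(Y)}}\right)(O_Y)\rho}\right|\leq\epsilon/2$ uniformly for $t\in[0,t_{\max}]$, which by the LR bound requires choosing the radius as $l=\cO\left(\left(v t_{\max}+\log(\epsilon^{-1})\right)/\mu\right)$---note the $\log(\epsilon^{-1})$ dependence that your ``radius grows linearly in $t_{\max}$'' misses---and then to combine this with your Taylor bound, applied with accuracy $\epsilon/2$ on the region $B=\Lambda_l(Y)$ with $|B|=\cO(l^D)$, via a triangle inequality. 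You invoke the Lieb-Robinson radius only to bound $|V|$, but never use the Lieb-Robinson bound to control the difference between the global and truncated dynamics; without that step, the statement you prove concerns an auxiliary truncated evolution rather than the dynamics one actually measures, and the theorem loses the content needed for the learning protocol.
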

The main technical tool required for the proof are Lieb-Robinson bounds (LRB) \cite{Lieb_1972,poulin_lieb-robinson_2010,bach_lieb-robinson_2014,hastings_locality_2010,kliesch2014lieb,kuwahara_strictly_2020}, which ascertain that the dynamics of local observables under a time evolution with a local Lindbladian have an effective lightcone. More precisely, we need 
\begin{align}\label{equ:LRexpo_main_text}
    &\|(e^{t\cL_B}-e^{t\cL_V})(O_Y)\|\leq \nonumber\\&c_1\operatorname{exp}(-\mu\operatorname{dist}(Y,V\backslash\{B\}))(e^{vt}-1),
\end{align}
to hold for constants $c_1,\mu$ and $v$, where $\operatorname{dist}()$ is the distance in the graph.

From the LRB we can show that the dynamics is well-approximated by a low-degree polynomial. 
We leave the details of the proof to Sec.~\ref{sec:quality_approx_poly} in the supplemental material and only discuss the main steps here. 
The general idea of going from the LRB to the low-degree polynomial is to truncate the Taylor series of the evolution under $\cL_B$ for $B$ large enough and take that as the approximating polynomial. As the derivatives of the evolution under $\cL_B$ only scale with the size of the region $B$, this allows us to show that the Taylor series converges quickly.

Now that we have concluded that the expectation value is well-approximated by a small degree polynomial, we continue to show that we can reliably extract the derivative from approximations of the expectation values for different $t$. This is formally stated in the following theorem.  

\begin{theorem}\label{thm:interpolation_robust}
   Let $\cL$ be a Lindbladian on a $D$-dimensional regular lattice.
    Suppose we can measure the expectation value of two-body Pauli observables on Pauli eigenstates in the time interval $[t_0,t_{\max}]$ under $\cL$ for $t_0$ as
    \begin{align}
        t_0^{-1}=\cO\left[\operatorname{polylog}(\epsilon^{-1})\right]
    \end{align}
    and $t_{\max}=2+t_0$.
     Then, measuring the expectation values at
    \begin{eqnarray}\label{equ:number_points}
        m=\cO\Big[\operatorname{polylog}(\epsilon^{-1})\Big]
    \end{eqnarray}
    random times up to precision $\cO(\epsilon/\operatorname{polylog}(\epsilon^{-1}))$, is sufficient to obtain an estimate $\hat{a}_{\alpha_1,\alpha_2}^{(m_1)}$ of  $a_{\alpha_1,\alpha_2}^{(m_1)}$ satisfying
    \begin{align}\label{equ:final_error_2}
        \left|\hat{a}_{\alpha_1,\alpha_2}^{m_1}-a_{\alpha_1,\alpha_2}^{m_1}\right|= \epsilon.
    \end{align}
    This yields a total sample complexity of $S=\cO\left(\epsilon^{-2}\operatorname{polylog}(\epsilon^{-1})\right)$.
    \end{theorem}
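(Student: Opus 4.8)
The plan is to combine the polynomial-approximation guarantee of Theorem~\ref{thm:polynomialswork_main} with the robust interpolation scheme of~\cite{kane_robust_2017} and then convert a uniform recovery bound for the fitted polynomial into a bound on its derivative at $t=0$. I would first reduce to a single derivative: up to the identity $p'(0)=\tr{\linGen(O_Y)\rho}$ supplied by Theorem~\ref{thm:polynomialswork_main}, the target $a_{\alpha_1,\alpha_2}^{(m_1)}$ equals the derivative at $t=0$ of one expectation-value curve $f(t)=\tr{e^{t\cL}(O_Y)\rho}$ (or a fixed linear combination of a constant number of them), so by linearity it suffices to control the error in estimating one such derivative. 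Applying Theorem~\ref{thm:polynomialswork_main} with $t_{\max}=2+t_0=\cO(1)$ and approximation error $\epsilon'$, the degree bound~\eqref{equ:polynomial_thm1} collapses to $d=\cO[\polylog(\epsilon'^{-1})]$ because $t_{\max}$ is constant, and we obtain a degree-$d$ polynomial $p$ with $|f(t)-p(t)|\le\epsilon'$ on $[0,t_{\max}]$ and $p'(0)$ equal to the desired quantity.

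Second, I would feed noisy samples of $f$ into the robust regression of~\cite{kane_robust_2017}. Sampling at $m=\cO(d)=\cO[\polylog(\epsilon^{-1})]$ random times in $[t_0,t_{\max}]$, each measured to additive precision $\delta$, every sample deviates from $p$ by at most $\epsilon'+\delta$ (polynomial-approximation error plus shot noise), so their scheme returns, with high probability, a degree-$d$ polynomial $\hat p$ obeying $\|\hat p-p\|_{\infty,[t_0,t_{\max}]}=\cO(\epsilon'+\delta)$.

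The crux is then to turn this uniform bound into one on $\hat p'(0)$. Writing $q=\hat p-p$, a degree-$d$ polynomial with $\|q\|_\infty\le\xi:=\cO(\epsilon'+\delta)$ on an interval of length $t_{\max}-t_0=2$, and noting that $t=0$ lies a distance $t_0$ to the \emph{left} of the sampled interval, estimating $q'(0)$ is a short extrapolation. I would bound it by combining the Markov brothers' inequality, which contributes a factor $\cO(d^2)$ for the derivative on the interval, with a Chebyshev-growth estimate for evaluating just outside it: after rescaling $[t_0,t_{\max}]$ to $[-1,1]$ the point $0$ falls a distance $\cO(t_0)$ beyond the endpoint, where the amplification behaves like $|T_d(1+\cO(t_0))|\approx\cosh(d\sqrt{\cO(t_0)})$. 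This stays $\cO(1)$ exactly when $t_0=\cO(d^{-2})$, i.e. $t_0^{-1}=\cO[\polylog(\epsilon^{-1})]$, which is precisely the time-resolution hypothesis of the theorem. Hence $|q'(0)|=\cO(d^2)\,\xi=\cO[\polylog(\epsilon^{-1})]\,(\epsilon'+\delta)$, and choosing $\epsilon'=\delta=\cO(\epsilon/\polylog(\epsilon^{-1}))$ forces $|\hat p'(0)-p'(0)|\le\epsilon$, giving~\eqref{equ:final_error_2}.

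For the sample count, estimating one bounded two-body Pauli expectation value to precision $\delta=\cO(\epsilon/\polylog(\epsilon^{-1}))$ takes $\cO(\delta^{-2}\log m)=\cO(\epsilon^{-2}\polylog(\epsilon^{-1}))$ shots by Hoeffding, with the $\log m$ term absorbing a union bound over the $m$ sampling times; multiplying by $m=\cO[\polylog(\epsilon^{-1})]$ leaves $S=\cO(\epsilon^{-2}\polylog(\epsilon^{-1}))$. I expect the main obstacle to be the extrapolation step: the derivative at $t=0$ must be recovered from data sampled only on $[t_0,t_{\max}]$, and the whole exponential improvement hinges on keeping the combined Markov/Chebyshev amplification polylogarithmic. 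The delicate interplay is between the degree $d\sim\polylog(\epsilon^{-1})$ and the gap $t_0$, and it is the hypothesis $t_0^{-1}=\cO[\polylog(\epsilon^{-1})]$ that prevents a superpolynomial blow-up; a secondary technicality is matching the sampling distribution and high-probability guarantees of~\cite{kane_robust_2017} to the random-time measurement model used here.
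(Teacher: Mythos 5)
Your proposal is correct and follows essentially the same route as the paper's proof: Theorem~\ref{thm:polynomialswork_main} supplies the degree-$d=\cO(\operatorname{polylog}(\epsilon^{-1}))$ approximating polynomial, the Chebyshev-measure robust regression of~\cite{kane_robust_2017} (Theorem~\ref{thm:poly_interpolation_robust}) recovers $\hat{p}$ with $\|\hat{p}-p\|_{\infty,[t_0,t_{\max}]}=\cO(\epsilon'+\delta)$, an $\cO(d^2)$ amplification factor governs the extrapolation of the derivative to $t=0$ under the hypothesis $t_0=\cO(d^{-2})$, and Hoeffding counting with per-point precision $\cO(\epsilon/\operatorname{polylog}(\epsilon^{-1}))$ gives $S=\cO(\epsilon^{-2}\operatorname{polylog}(\epsilon^{-1}))$; this is exactly the chain assembled in Prop.~\ref{prop:precision_robust}, Cor.~\ref{cor:obtaining_derivative_error} and Thm.~\ref{thm:choice_initial_final}. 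The one place you diverge is the internal mechanics of the extrapolation step: the paper Taylor-expands $q=p-\hat{p}$ about the left endpoint $t_0$ and bounds every higher derivative $q^{(k)}(t_0)$ via Markov brothers' inequality (Lemma~\ref{lem:Markov_brother} feeding into Lemma~\ref{prop:derivative_0}), using $C_M(d,k)\leq d^{2k}/k!!$ and $t_0\leq d^{-2}$ to sum the series to $\cO(d^2)$, whereas you apply Markov once to bound $q'$ on $[t_0,t_{\max}]$ and then push that bound a distance $t_0$ outside the interval via Chebyshev growth, $|T_{d-1}(1+\cO(t_0))|\approx\cosh\bigl(d\sqrt{\cO(t_0)}\bigr)=\cO(1)$. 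The two mechanisms are interchangeable here — both rest on the same extremal-polynomial machinery, both yield $|q'(0)|=\cO(d^2)\,\|q\|_\infty$, and both force precisely the condition $t_0^{-1}=\cO(d^2)$ — so your variant is equally valid; if anything, the Chebyshev-growth phrasing makes it slightly more transparent why the threshold $t_0\sim d^{-2}$ is where the extrapolation stops blowing up, while the paper's term-by-term Taylor bound generalizes more directly to the quantitative statement for long-range interactions in Eq.~\eqref{equ:final_error}.
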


Of course, the same results also hold for the parameters $a_{\alpha_1}^{(m_1)}$ and those of $\cL^{(m_1)}$.
Importantly, Theorem~\ref{thm:interpolation_robust} bypasses both requiring small initial times and $\cO(\epsilon^{-4})$ sample complexities.

To go from Thm.~\ref{thm:polynomialswork_main} to Thm.~\ref{thm:interpolation_robust} we first need to establish that we can robustly infer an approximation of $p$ from finite measurement data subject to shot noise. Subsequently, we need to show that it will also allow us to reliably estimate $p'(0)$. Let us start with approximating $p$.
\paragraph{Robust polynomial interpolation:} 
We will resort to the robust polynomial interpolation methods of \cite{kane_robust_2017} to show Thm.~\ref{thm:interpolation_robust}. We review their methods in more detail in \cite[Sec.~\ref{sec:robust_interpolation}]{SM}. But they depart from the assumption that we get $m$ (randomly sampled) points $x_1,\ldots,x_m\in[t_0,t_{\max}]$ and $y_1,\ldots,y_m\in\R$. In our setting, the $x_i$ correspond to different times and the $y_i$ to approximations of the expectation value of the evolution at that time.
Furthermore, the $y_i$ satisfy the promise that there exists a polynomial $p$ of degree $d$ and some $\sigma>0$ such, that
\begin{align}
    y_i=p(x_i)+w_i,\quad |w_i|\leq\sigma,
\end{align}
hold, for strictly more than half of the $y_i$.  The rest might be outliers. In our setting, the magnitude of $\sigma$ corresponds to amount of shot noise present in the estimates of the expectation values.

The authors of~\cite{kane_robust_2017} then show that by sampling $m=\cO(d\log(d))$ points from the Chebyshev measure on $[t_0,t_{\max}]$, a combination of $\ell_1$ and $\ell_\infty$ regression allows us to find a polynomial $\hat{p}$ of degree $d$ that satisfies:
\begin{align}\label{equ:polynomial_close_main}
    \max\limits_{x\in[t_0,t_{m
    ax}]}|p(x)-\hat{p}(x)|=\cO(\sigma).
\end{align}

Although the details of the $\ell_1$ and $\ell_\infty$ interpolation are more involved and described in Sec.~\ref{sec:robust_interpolation} of the supplemental material, a rough simplification of the procedure is the following. First, we find a polynomial $p_1$ of degree $d$ that minimizes $\sum_i|p_1(x_i)-y_i|$. 
After finding $p_1$ we compute the polynomial $p_{\infty}$ that minimizes $\max_i|p_\infty(x_i)-(y_i-p_1(x_1))|$.
We then output $\hat{p}=p_1+p_\infty$ as our guess polynomial. 
Note that finding both $p_1$ and $p_\infty$ can be cast as linear programs and thus can be solved efficiently~\cite{StephenBoyd2004}.

By combining this result with Thm.~\ref{thm:polynomialswork_main}, we robustly extract a polynomial that approximates the curve $t\mapsto \tr{e^{t\cL}(O_Y)\rho}$ up to $\cO(\epsilon)$ for $t\in[t_0,t_{\max}]$.
Indeed, we only need to estimate the expectation value $f(t_i)$ up to $\epsilon$ for enough $t_i$ and run the polynomial interpolation.

Note that Eq.~\eqref{equ:polynomial_close_main} only allows us to conclude that $p-\hat{p}$ is small. However, we are ultimately interested in the curve's derivative at $t=0$, as the derivative contains information about the parameters of the evolution.
For arbitrary smooth functions, two functions being close on an interval does not imply that their derivatives are close as well. Fortunately, for polynomials the picture is simpler. A classical result from approximation theory, Markov brother's inequality~\cite{markoff_ber_1916}, allows us to quantify the deviation of the derivatives given a bound on the degree and a bound like Eq.~\eqref{equ:polynomial_close_main}. Putting these observations together, we arrive at Thm.~\ref{thm:interpolation_robust}.
The details of the proof are given in Sec.~\ref{sec:robust_interpolation} of the supplemental material.
\subsection{Generalizations of Thm.~\ref{thm:interpolation_robust}}
We also generalize Thm.~\ref{thm:interpolation_robust} in two directions. First, we extend the results to interactions acting on $k$ qubits instead of $2$. As long as the noise is constrained to acts on $1$ qubit and $k=\cO(1)$, this generalization is straightforward. Indeed, we only need to measure an observable that has the same support as the Pauli string and does not commute with it, as it is then always possible to find a product initial state that isolates the parameter. Generalizing to noise acting on more than one qubit makes it more difficult to isolate the parameters of the evolution as described in the main text. In that case, it then becomes necessary to solve a system of linear equations that couples different parameters. Although our method still applies, analysing this scenario would require picking the observables and initial states in a way that the system of equations is well-conditioned and we will not discuss this case in detail here.

Second, another important generalization is to go beyond short-range systems. Although we have only stated our results for short-range systems in Thm.~\ref{thm:interpolation_robust}, our techniques apply to certain long-range systems. 
As this generalization is more technical, we leave the details to Appendix~\ref{sec:robust_interpolation} and constrain ourselves to discussing how the statement of Thm.~\ref{thm:interpolation_robust} changes for more general interactions.

Only one aspect of the precious discussion changes significantly for long range interactions: how the r.h.s. of Eq.~\eqref{equ:LRexpo_main_text} generalizes. More precisely, let us assume that for some injective function $h:\R\to\R$ with $h(x)=o(1)$,  we have
\begin{align}\label{equ:LRgeneral_main_text}
    &\|(e^{t\cL_B}-e^{t\cL_\Gamma})(O_Y)\|\leq \nonumber\\&h(\operatorname{dist}(Y,V\backslash\{B\}))(e^{vt}-1).
\end{align}
For instance, for short-range or exponentially decaying interactions,  $h$ will be an exponentially decaying function.
Then we can  restate Thm.~\ref{thm:interpolation_robust} in terms of $h^{-1}$. 
As we show in Thm.~\ref{thm:choice_initial_final} in Appendix F, for a precision parameter $\epsilon>0$ and evolution on a $D$-dimensional lattice, assume that we pick the initial time as 
\begin{align*}
t_0=\cO\left[\left(h^{-1}\left(\frac{\epsilon}{2(e^{2.5v}-1)}\right)^{D}\log(\epsilon^{-1})\right)^{-2}\right].
\end{align*}
Furthermore, assume that we estimate the expectation value of local observables up to precision $\cO(\epsilon)$ at $\tcO\left[\left(h^{-1}\left(\frac{\epsilon}{2(e^{2.5v}-1)}\right)\right)^{D}\log(\epsilon^{-1}))\right]$ points. Then we can estimate each parameter up to an error of 
\begin{align}\label{equ:final_error2}
   \cO\left[\epsilon\left(h^{-1}\left(\frac{\epsilon}{2(e^{2.5v}-1)}\right)^{D}\log(\epsilon^{-1})\right)^{2}\right],
\end{align}
through, the same procedure as in the local case. 
Note that the error in Eq.~\eqref{equ:final_error2} only tends to $0$ as $\epsilon\to0$ if  $h^{-1}\left(\frac{\epsilon}{2(e^{2.5v}-1)}\right)^{D}\log(\epsilon^{-1})=o(\epsilon^{-1})$, holds, i.e. the function $h$ must decay fast enough. In Sec.~\ref{sec:lieb-robinson} %
of the supplemental material, we discuss examples of systems with algebraically decaying interactions for which this is satisfied. For instance, for potentials that decay like $r^{-\alpha}$ with $\alpha>5D-1$ we obtain that $h^{-1}(\epsilon)=\cO(\epsilon^{-\frac{1}{\alpha-3D}})$, holds. We summarize the resulting resources in Tab.~\ref{tab:summary-resources} in the supplemental material.

But the message of bounds like~\eqref{equ:final_error2} is that it is still possible to obtain bounds on the error independent of the system's size beyond short-range systems. However, this comes at the expense of requiring higher precision and sampling from more points. 

Another important observation is that the assumption that we know the structure of the interactions exactly is not required. Indeed, our method is robust to Hamiltonian perturbations of the model as long as the resulting evolution still satisfies a LR bound. For instance, suppose that there actually is a non-negligible interaction between qubits $i$ and $j$ that is not accounted by our model. As long as the resulting time evolution still satisfies a LR bound, our results still hold. As the linear equation to isolate any parameter is independent of that parameter, we can still apply our techniques in this setting.

\subsection{Parallelizing the measurements}
So far we have only discussed how to obtain the estimate of one parameter of the state from experimental data in an efficient manner. However, it is possible to parallelize the measurement and ensure that we can obtain experimental data to estimate all parameters simultaneously. 

To that end, we resort to a classical shadow process tomography method. Although some papers in the literature already discussed classical shadows for process tomography~\cite{processtomo,processtomo2}, their proofs unfortunately contain a shortcoming. These previous approaches where based on applying the classical shadows protocol to the Choi state of the underlying evolution and importing existing classical shadow tomography results for states~\cite{Huang2020}. Unfortunately, this proof method produces a prefactor in the sample complexity that is exponential in the system size and was missed in previous works.
Fortunately, we give an alternative proof in Sec.~\ref{sec:shadows_parallel} for process shadow tomography that does not require the Choi state and bypasses this issue. In addition, our result also improves the sample complexity of previous results.

More precisely, we show that given a quantum channel $\Phi$, Pauli strings $P_1^a,\ldots,P^a_{K_1}$ that differ from the identity on at most $\omega_a$ sites and Pauli strings $P_1^b,\ldots P_{K_2}^b$ that differ from the identity on at most $\omega_b$ sites, it is possible to obtain estimates $\hat{e}_{m,l}$ of $2^{-n}\tr{P_m^a\Phi(P_l^b)}$ satisfying
\begin{align*}
    |2^{-n}\tr{P_m^a\Phi(P_l^b)}-\hat{e}_{m,l}|\leq\epsilon
\end{align*}
for all $m,l$ with probabillity at least $1-\delta$ from
\begin{align}\label{equ:number_samples_procc_shadows}
  \cO(3^{\omega_a+\omega_b}\epsilon^{-2}\log(K_1K_2\delta^{-1}))  
\end{align}
samples. More precisely, the protocol of shadow process tomography requires preparing Eq.~\eqref{equ:number_samples_procc_shadows} many different random initial product Pauli eigenstates and measuring them in random Pauli bases. This makes it feasible to implement it in the near-term. We discuss it in more detail in Sec.~\ref{sec:shadows_parallel} of the supplemental material, as this protocol may be of interest beyond the problem at hand.

The shadow process tomography protocol is ideally suited for our Hamiltonian learning protocol. Indeed, note that to learn $k$-body interactions, we only required the preparation of initial states $\rho_l$ that differ from the maximally mixed state on $k$ qubits and measure Pauli strings $P_m$ supported on at most $k$ qubits. Furthermore, for a system of $n$ qubits in total, there are at most $16^k\binom{n}{k}\leq 16^kn^k$ such states or Pauli strings. We conclude that we can estimate all required expectation values for a given time step using
\begin{align*} 
    \cO(9^k\epsilon^{-2}k\log(n\delta^{-1}))  
\end{align*}
samples.
As our protocol requires estimating expectation values at a total of $\textrm{polylog}(\epsilon^{-1})$ time steps, we can gather the data required to recover all the $\cO(n)$ parameters of the evolution from $\cO(\epsilon^{-2}\textrm{polylog}(n,\epsilon^{-1}))$ samples through the shadow process tomography protocol whenever $k=\cO(1)$.

\section{Acknowledgements}
D.S.F. was supported by VILLUM FONDEN via the QMATH Centre of Excellence under Grant No. 10059 and the European Research Council (Grant agreement No. 818761). A.H.W. thanks the VILLUM FONDEN for its support with a Villum Young Investigator Grant (Grant No. 25452). J.B. and L.A.M. acknowledge funding from the NWO Gravitation Program Quantum Software Consortium. V.V.D.\ work is a part of the research programme NWO QuTech Physics Funding (QTECH, programme 172) with project number 16QTECH02, which is (partly) financed by the Dutch Research Council (NWO); the work was partially supported by the Kavli Institute of Nanoscience Delft. 

\bibliographystyle{unsrt}
\bibliography{Bibliography}
\onecolumngrid

\appendix

\section{Selecting States and Observables to Isolate Parameters}\label{sec:howtoextractparameters}
Any Hamiltonian can be written as
\begin{eqnarray}\label{10_50_0}
H&=&\sum\limits_{m_1}\sum\limits_{\alpha_1} a_{\alpha_1}^{(m_1)}\sigma_{\alpha_1}^{(m_1)}+\sum\limits_{m_1,m_2}\sum\limits_{\alpha_1\alpha_2} a_{\alpha_1\alpha_2}^{(m_1,m_2)}\sigma_{\alpha_1}^{(m_1)}\sigma_{\alpha_2}^{(n)}+\dots\\\nonumber
&\equiv&\sum\limits_{m_1}\sum\limits_{\alpha_1} a_{\alpha_1}^{(m_1)}\mathcal{H}_{\alpha_1}^{(m_1)}+\sum\limits_{m_1,m_2}\sum\limits_{\alpha_1\alpha_2} a_{\alpha_1\alpha_2}^{(m_1,m_2)}\mathcal{H}_{\alpha_1\alpha_2}^{(m_1,m_2)}+\dots,
\end{eqnarray}
where Roman indices identify the subspace on
which the operator acts, and Greek indices identify the
Pauli operator, e.g. $\alpha=x$. No assumption about the dimension or structure of the hermitian Hamiltonian is needed for this expansion to be valid. 
For a Markovian noise environment, the
evolution of a quantum system $\rho_0$ is described by a
Master equation of the form
\begin{eqnarray}\label{10_50_1}
\frac{d\rho_t}{dt}\Big|_{t=0}=-i[H,\rho_0]+\sum\limits_{m=1}^{n}\sum\limits_{\mu,\nu=1}^{3}D^{(m)}_{\mu,\nu}
(\sigma_{\mu}^{(m)}\rho_0{\sigma_{\nu}^{(m)}}^{\dagger}-\frac{1}{2}\{{\sigma_{\nu}^{(m)}}^{\dagger}\sigma_{\mu}^{(m)},\rho_0\}),
\end{eqnarray}
where $H$ is the Hamiltonian describing the evolution of the system,  $L^{(m)}_{\mu,\nu}$ are the elements of the  Lindblad matrix, expressed in an operator basis consisting
of the different combinations of single-qubit Pauli matrices
$\{\sigma\}$.
Multiplying it from the right hand side on an observable $O$ and taking the trace, we can write
\begin{eqnarray}\label{10_50_2_1116}
\frac{d}{dt}\tr{\rho_t O}|_{t=0}&=&-i\sum\limits_{m_1}\sum\limits_{\alpha_1} a_{\alpha_1}^{(m_1)}\tr{[H_{\alpha_1}^{(m_1)},\rho_0]O}-i\sum\limits_{m_1m_2}\sum\limits_{\alpha_1\alpha_2} a_{\alpha_1\alpha_2}^{(m_1m_2)}\tr{[H_{\alpha_1\alpha_2}^{(m_1m_2)},\rho_0]O}-\dots\nonumber\\
&\dots &-i\sum\limits_{m_1\dots m_k}\sum\limits_{\alpha_1\dots \alpha_k} a_{\alpha_1\dots \alpha_k}^{(m_1\dots m_k)}\tr{[H_{\alpha_1\dots \alpha_k}^{(m_1\dots m_k)},\rho_0]O}\dots\\\nonumber
&+&\sum\limits_{m}\sum\limits_{\mu,
\nu}D^{(m)}_{\mu\nu}\tr{
(\sigma_{\mu}^{(m)}\rho_0{\sigma_{\nu}^{(m)}}^{\dagger}-\frac{1}{2}\{{\sigma_{\nu}^{(m)}}^{\dagger}\sigma_{\mu}^{(m)},\rho_0\})O}.
\end{eqnarray}
Let us introduce the notation
\begin{eqnarray}\label{16_59}
{B}^{(m_1,\dots,m_k)}_{\alpha_1,\dots,\alpha_k}(\rho_0,O)&\equiv&-i\tr{[H_{\alpha_1\dots \alpha_k}^{(m_1\dots m_k)},\rho_0]O}.
\end{eqnarray}
To isolate $\{a_{\alpha_1}^{(m_1)}, a_{\alpha_1\alpha_2}^{(m_1m_2)},\dots\}$ we observe the $N$ qubit state with the following density matrix
\begin{eqnarray}\label{0_10}
\rho_{\tau_i,\tau_j}^{(i,j)}=\rho_{\tau_i}^{(i)}\otimes \rho_{\tau_j}^{(j)}\otimes \rho_{N-2},\quad \tau=\{1,2,3\}
\end{eqnarray}
where the $i$ and $j$ are the Pauli qubits, namely $\rho_{\tau_{i,j}}^{(i,j)}=(I+\sigma^{(i,j)}_{\tau_{i,j}})/2$, and 
$\rho_{N-2}$ is the density matrix of all other qubits, which we set to be maximally mixed. 
Using $\sigma_{\alpha} \sigma_{\beta}=\delta_{\alpha\beta}I+i\varepsilon_{\alpha\beta\gamma}\sigma_{\gamma}$, we can find the following relations
\begin{eqnarray}\label{0_15_10_1}
{B}^{(i)}_{\alpha_1}(\rho_{\tau_i,\tau_j}^{(i,j)},O)%
&=&\frac{1}{2}\Big(\varepsilon_{\alpha_1\tau_i \gamma}\left(\tr{(\sigma^{(i)}_{\gamma}\otimes I)O}+\tr{(\sigma^{(i)}_{\gamma}\otimes \sigma_{\tau_j}^{(j)})O}\right)\Big),
\end{eqnarray}
\begin{eqnarray}\label{0_15_10_1115}
{B}^{(j)}_{\alpha_1}(\rho_{\tau_i,\tau_j}^{(i,j)},O)%
&=&\frac{1}{2}\Big(\varepsilon_{\alpha_1\tau_j \zeta}\left(\tr{( I\otimes\sigma^{(j)}_{\zeta})O}+\tr{(\sigma^{(i)}_{\tau_i}\otimes \sigma_{\zeta}^{(j)})O}\right)\Big),
\end{eqnarray}
\begin{eqnarray}\label{0_15_10}
{B}^{(ij)}_{\alpha_1\alpha_2}(\rho_{\tau_i,\tau_j}^{(i,j)},O)%
&=&\frac{1}{2}\Big(\varepsilon_{\alpha_1\tau_i \gamma}\left(\delta_{\alpha_2\tau_j}\tr{(\sigma^{(i)}_{\gamma}\otimes I)O}+
\tr{(\sigma_{\gamma}^{(i)}\otimes \sigma_{\alpha_2}^{(j)})O}\right)\\\nonumber
&+&\varepsilon_{\alpha_2\tau_j\eta}\left(\delta_{\alpha_1\tau_i}\tr{(I\otimes \sigma^{(j)}_{\eta})O}
+\tr{(\sigma_{\alpha_1}^{(i)}\otimes\sigma_{\eta}^{(j)})O}\right)\Big),
\end{eqnarray}
where $O$ is acting on $(i,j)$ qubits.
Selecting $O=\sigma_{\xi_i}^{(i)}\otimes \sigma_{\xi_j}^{(j)}$, we can rewrite the latter matrix element as follows
\begin{eqnarray}\label{0_15_1}
{B}^{(i)}_{\alpha_1}(\rho_{\tau_i,\tau_j}^{(i,j)},\sigma_{\xi_i}^{(i)}\otimes \sigma_{\xi_j}^{(j)})&=&2\varepsilon_{\alpha_1\tau_i \gamma}\delta_{\xi_i\gamma}\delta_{\xi_j\tau_j}
=\begin{cases}
2\varepsilon_{\alpha_1\tau_i \gamma}\delta_{\eta\tau_j}&\text{if $\xi_i=\gamma$, $\xi_j=\eta$,}\\
2\varepsilon_{\alpha_1\tau_i \gamma}  &\text{if $\xi_i=\gamma$, $\xi_j=\eta$, $\eta=\tau_j$,}\\
0&\text{else}\\
 \end{cases},\\\nonumber
 {B}^{(j)}_{\alpha_1}(\rho_{\tau_i,\tau_j}^{(i,j)},\sigma_{\xi_i}^{(i)}\otimes \sigma_{\xi_j}^{(j)})&=&2\varepsilon_{\alpha_1\tau_j \zeta}\delta_{\xi_j\zeta}\delta_{\xi_i\tau_i}
=\begin{cases}
2\varepsilon_{\alpha_1\tau_j \zeta}\delta_{\kappa\tau_i}&\text{if $\xi_j=\zeta$, $\xi_i=\kappa$,}\\
2\varepsilon_{\alpha_1\tau_j \zeta} &\text{if $\xi_j=\zeta$, $\xi_i=\kappa$, $\kappa=\tau_i$,}\\
0&\text{else}\\
 \end{cases},\\\nonumber
{B}^{(ij)}_{\alpha_1\alpha_2}(\rho_{\tau_i,\tau_j}^{(i,j)},\sigma_{\xi_i}^{(i)}\otimes \sigma_{\xi_j}^{(j)})&=&2\left(\varepsilon_{\alpha_1\tau_i\gamma}\delta_{\xi_i\gamma}\delta_{\xi_j\alpha_2}+\varepsilon_{\alpha_2\tau_j\eta}\delta_{\xi_i\alpha_1}\delta_{\xi_j\eta}\right)\\\nonumber&=&\begin{cases}
2\left(\varepsilon_{\alpha_1\tau_i\gamma}\delta_{\eta\alpha_2}+\varepsilon_{\alpha_2\tau_j\eta}\delta_{\gamma\alpha_1}\right) &\text{if $\xi_i=\gamma$, $\xi_j=\eta$,}\\
2\varepsilon_{\alpha_1\tau_i\gamma}\delta_{\tau_j\alpha_2} &\text{if $\xi_i=\gamma$, $\xi_j=\eta$, $\eta=\tau_j$,}\\
0&\text{else}\\
 \end{cases},
\end{eqnarray}
where $\eta,\gamma\in\{x,y,z\}$.
Selecting other observable $O=\sigma_{\xi_i}^{(i)}$, we can rewrite \eqref{0_15_10_1}-\eqref{0_15_10} differently as
\begin{eqnarray}\label{0_15_2}
{B}^{(i)}_{\alpha_1}(\rho_{\tau_i,\tau_j}^{(i,j)},\sigma_{\xi_i}^{(i)})&=&\varepsilon_{\alpha_1\tau_i \gamma}\delta_{\xi_i\gamma}=\begin{cases}
\varepsilon_{\alpha_1\tau_i \gamma}  &\text{if $\xi_i=\gamma$,}\\
0&\text{else}\\
 \end{cases},\quad 
 {B}^{(j)}_{\alpha_1}(\rho_{\tau_i,\tau_j}^{(i,j)},\sigma_{\xi_i}^{(i)})=0,\\\nonumber
{B}^{(ij)}_{\alpha_1\alpha_2}(\rho_{\tau_i,\tau_j}^{(i,j)},\sigma_{\xi_i}^{(i)})&=&2\varepsilon_{\alpha_1\tau_i \gamma}\delta_{\alpha_2\tau_j}\delta_{\xi_i\gamma}=\begin{cases}
2\varepsilon_{\alpha_1\tau_i \gamma} \delta_{\alpha_2\tau_j}&\text{if $\xi_i=\gamma$,}\\
0&\text{else}\\
 \end{cases}.
\end{eqnarray}
Next, selecting $O=\sigma_{\xi_j}^{(j)}$, we can rewrite \eqref{0_15_10_1}-\eqref{0_15_10} as
\begin{eqnarray}\label{0_15_3}
{B}^{(i)}_{\alpha_1}(\rho_{\tau_i,\tau_j}^{(i,j)},O)&=&0,\quad 
{B}^{(j)}_{\alpha_1}(\rho_{\tau_i,\tau_j}^{(i,j)},\sigma_{\xi_i}^{(i)})=\varepsilon_{\alpha_1\tau_j \zeta}\delta_{\xi_j\zeta}=\begin{cases}
\varepsilon_{\alpha_1\tau_j \zeta}  &\text{if $\xi_j=\zeta$,}\\
0&\text{else}\\
 \end{cases},\\\nonumber
{B}^{(ij)}_{\alpha_1\alpha_2}(\rho_{\tau_i,\tau_j}^{(i,j)},\sigma_{\xi_j}^{(j)})&=&2\varepsilon_{\alpha_2\tau_j\eta}\delta_{\alpha_1\tau_i}\delta_{\xi_j\eta}=
\begin{cases}
2\varepsilon_{\alpha_2\tau_j\eta}\delta_{\alpha_1\tau_i}  &\text{if  $\xi_j=\eta$,}\\
0&\text{else}\\
 \end{cases}.
\end{eqnarray}
For $k\geq 2$ we can write the general matrix element:
\begin{eqnarray}\label{0_15_10_2}
{B}^{(m_1,\dots,m_k)}_{\alpha_1,\dots,\alpha_k}(\rho_{\tau_i,\tau_j}^{(i,j)},O)%
&=&\frac{1}{2}\Bigg(\varepsilon_{\alpha_i\tau_i \gamma}\Big(\tr{(\sigma_{\alpha_1}^{(1)}\otimes\dots\otimes \sigma^{(i)}_{\gamma}\otimes\dots\otimes \sigma^{(j)}_{\alpha_j}\otimes\dots\otimes\sigma_{\alpha_k}^{(k)})O}\\\nonumber&+&
\delta_{\alpha_j\tau_j}\tr{(\sigma_{\alpha_1}^{(1)}\otimes\dots\otimes \sigma^{(i)}_{\gamma}\otimes\dots\otimes I^{(j)}\otimes\dots\otimes\sigma_{\alpha_k}^{(k)})O}\Big)\\\nonumber
&+&\varepsilon_{\alpha_j\tau_j \eta}\Big(\tr{(\sigma_{\alpha_1}^{(1)}\otimes\dots\otimes \sigma^{(i)}_{\alpha_i}\otimes\dots\otimes \sigma^{(j)}_{\eta}\otimes\dots\otimes\sigma_{\alpha_k}^{(k)})O}\\\nonumber&+&\delta_{\alpha_i\tau_i}\tr{(\sigma_{\alpha_1}^{(1)}\otimes\dots\otimes I^{(i)}\otimes\dots\otimes \sigma^{(j)}_{\eta}\otimes\dots\otimes\sigma_{\alpha_k}^{(k)})O}\Big)\Bigg),
\end{eqnarray}
where $O$ is acting on $(1,\dots, k)$ qubits.
Let $O=\sigma_{\xi_1}^{(1)}\otimes\dots\otimes \sigma_{\xi_k}^{(k)}$, holds. Then we can rewrite \eqref{0_15_10_2} as follows
\begin{eqnarray}\label{0_15_10_22}
&&{B}^{(1,\dots,k)}_{\alpha_1,\dots,\alpha_k}(\rho_{\tau_i,\tau_j}^{(i,j)},\sigma_{\xi_1}^{(1)}\otimes\dots\otimes \sigma_{\xi_k}^{(k)})\equiv-iTr\left([\sigma_{\alpha_1}^{(1)}\otimes\dots\otimes \sigma_{\alpha_k}^{(k)},\rho_{\tau_i,\tau_j}^{(i,j)}](\sigma_{\xi_1}^{(1)}\otimes\dots\otimes \sigma_{\xi_k}^{(k)})\right)\nonumber\\
&=&2^{k-1}(\varepsilon_{\alpha_i\tau_i \gamma}\delta_{\alpha_1\xi_1}\dots\delta_{\gamma\xi_i}\dots\delta_{\alpha_j\xi_j}\dots\delta_{\alpha_k\xi_k}+
\varepsilon_{\alpha_j\tau_j \eta}\delta_{\alpha_1\xi_1}\dots\delta_{\alpha_i\xi_i}\dots \delta_{\eta\xi_j}\dots\delta_{\alpha_k\xi_k})\\\nonumber
&=&\begin{cases}
2^{k-1}(\varepsilon_{\alpha_i\tau_i \gamma}\delta_{\eta\alpha_j}+\varepsilon_{\alpha_j\tau_j\eta}\delta_{\gamma\alpha_i}) &\text{if  $\xi_i=\gamma$, $\xi_j=\eta$, $\xi_{1,\dots,k}=\alpha_{1,\dots,k}$,}\\
0&\text{else}\\
 \end{cases}.
\end{eqnarray}
From this result, for $k=3$ we get
\begin{eqnarray}\label{8_30}
{B}^{(i,j,l)}_{\alpha_i\alpha_j\alpha_l}(\rho_{\tau_i,\tau_j}^{(i,j)},\sigma_{\xi_i}^{(i)}\otimes\sigma_{\xi_j}^{(j)}\otimes \sigma_{\xi_l}^{(l)})=2(\varepsilon_{\alpha_i\tau_i\gamma}\delta_{\gamma,\xi_i}\delta_{\alpha_j\xi_j}+\varepsilon_{\alpha_j\tau_j\eta}\delta_{\alpha_i\xi_i}\delta_{\eta\xi_j})\delta_{\alpha_l\xi_l}.
\end{eqnarray}
\par To isolate $\{D_{\mu\nu}^{(m)}\}$ we recall the assumption that   $\rho_{N-2}$ is the density matrix of the maximally mixed state.
Then the Lindblad part of the equation \eqref{10_50_2_1116} for an observable $O=\sigma_{\xi_i}^{(i)}\otimes\sigma_{\xi_j}^{(j)}$ is 
\begin{eqnarray}\label{0_10_10_1321}
\mathcal{D}^{k}_{\mu\nu}(\rho_{\tau_i,\tau_j}^{(i,j)},\sigma_{\xi_i}^{(i)}\otimes\sigma_{\xi_j}^{(j)})&\equiv&D^{(k)}_{\mu\nu}\tr{
\left(\left(\sigma_{\mu}^{(k)}\rho_{\tau_i,\tau_j}^{(i,j)}{\sigma_{\nu}^{(k)}}-\frac{1}{2}\{{\sigma_{\nu}^{(k)}}\sigma_{\mu}^{(k)},\rho_{\tau_i,\tau_j}^{(i,j)}\}\right)\right)(\sigma_{\xi_i}^{(i)}\otimes\sigma_{\xi_j}^{(j)})}\\&=&\nonumber
\begin{cases}D^{(i)}_{\mu\nu}
   (2i\varepsilon_{\mu\nu\gamma}\delta_{\gamma\xi_i}+2\delta_{\nu\tau_{i}}\delta_{\mu\xi_i}-\frac{3}{2}\delta_{\mu\nu}\delta_{\xi_i\tau_{i}})\delta_{\tau_j\xi_j}&\text{if $k=i$}\\
D^{(j)}_{\mu\nu}  (2i\varepsilon_{\mu\nu\gamma}\delta_{\gamma\xi_j}+2\delta_{\nu\tau_{j}}\delta_{\mu\xi_j}-\frac{3}{2}\delta_{\mu\nu}\delta_{\xi_j\tau_{j}})\delta_{\tau_i\xi_i} &\text{if $k=j$}\\
  0&\text{else}\\
 \end{cases}.
\end{eqnarray}
Let us substitute the conditions  \eqref{0_15_1} in \eqref{0_10_10_1321}. We get the following results:
\begin{eqnarray}\label{0_10_10_98}
\mathcal{D}^{i}_{\mu\nu}(\rho_{\tau_i,\tau_j}^{(i,j)},\sigma_{\xi_i}^{(i)}\otimes\sigma_{\xi_j}^{(j)})=
D^{(i)}_{\mu\nu}
   (2i\varepsilon_{\mu\nu\gamma}+2\delta_{\nu\tau_{i}}\delta_{\mu\gamma}-\frac{3}{2}\delta_{\mu\nu}\delta_{\gamma\tau_{i}})\delta_{\tau_j\eta}\quad \text{if $\xi_i=\gamma$, $\xi_j=\eta$,}
\end{eqnarray}
and 
\begin{eqnarray}\label{0_10_10_99}
\mathcal{D}^{j}_{\mu\nu}(\rho_{\tau_i,\tau_j}^{(i,j)},\sigma_{\xi_i}^{(i)}\otimes\sigma_{\xi_j}^{(j)})=
D^{(j)}_{\mu\nu}  (2i\varepsilon_{\mu\nu\gamma}\delta_{\gamma\eta}+2\delta_{\nu\tau_{j}}\delta_{\mu\eta}-\frac{3}{2}\delta_{\mu\nu}\delta_{\eta\tau_{j}})\delta_{\tau_i\gamma}\quad \text{if $\xi_i=\gamma$, $\xi_j=\eta$}.
\end{eqnarray}
However, for an observable $O=\sigma_{\xi_i}^{(i)}$ the Lindblad part of the equation \eqref{10_50_2_1116} is the following 
\begin{eqnarray}\label{0_10_10_1002}
\mathcal{D}^{k}_{\mu\nu}(\rho_{\tau_i,\tau_j}^{(i,j)},\sigma_{\xi_i}^{(i)})&\equiv&D^{(k)}_{\mu\nu}\tr{
\left(\left(\sigma_{\mu}^{(k)}\rho_{\tau_i,\tau_j}^{(i,j)}{\sigma_{\nu}^{(k)}}-\frac{1}{2}\{{\sigma_{\nu}^{(k)}}\sigma_{\mu}^{(k)},\rho_{\tau_i,\tau_j}^{(i,j)}\}\right)\right)(\sigma_{\xi_i}^{(i)})}\\\nonumber&=&
\begin{cases}D^{(i)}_{\mu\nu}
(2i\varepsilon_{\mu\nu\gamma}\delta_{\gamma\xi_i}+\delta_{\nu\tau_{i}}\delta_{\mu\xi_i}-\frac{3}{2}\delta_{\mu\nu}\delta_{\xi_i\tau_{i}}+\delta_{\mu\xi_i}\delta_{\tau_{i}\nu}) &\text{if $k=i$}\\
0 &\text{if $k=j$}\\
  0&\text{else}\\
 \end{cases}.
\end{eqnarray}
Substituting the conditions  \eqref{0_15_2} in \eqref{0_10_10_1002}, we get
\begin{eqnarray}\label{0_10_10_97}
\mathcal{D}^{i}_{\mu\nu}(\rho_{\tau_i,\tau_j}^{(i,j)},\sigma_{\xi_i}^{(i)})&=&
D^{(i)}_{\mu\nu}(2i\varepsilon_{\mu\nu\gamma}+2\delta_{\nu\tau_{i}}\delta_{\mu\gamma}-\frac{3}{2}\delta_{\mu\nu}\delta_{\gamma\tau_{i}}),\quad \text{if  $\xi_i=\gamma$}.
 \end{eqnarray}
Next, for an observable $O=\sigma_{\xi_i}^{(i)}\otimes\sigma_{\xi_j}^{(j)}\otimes\sigma_{\xi_l}^{(l)}$ we can write
\begin{eqnarray}\label{0_10_10_102}
\mathcal{D}^{k}_{\mu\nu}(\rho_{\tau_i,\tau_j}^{(i,j)},\sigma_{\xi_i}^{(i)}\otimes \sigma_{\xi_j}^{(j)}\otimes \sigma_{\xi_l}^{(l)})&\equiv&D^{(k)}_{\mu\nu}\tr{
\left(\left(\sigma_{\mu}^{(k)}\rho_{\tau_i,\tau_j}^{(i,j)}{\sigma_{\nu}^{(k)}}-\frac{1}{2}\{{\sigma_{\nu}^{(k)}}\sigma_{\mu}^{(k)},\rho_{\tau_i,\tau_j}^{(i,j)}\}\right)\right)(\sigma_{\xi_i}^{(i)}\otimes\sigma_{\xi_j}^{(j)}\otimes\sigma_{\xi_l}^{(l)})}\nonumber\\&=&
\begin{cases}
 2iD^{(l)}_{\mu\nu} \varepsilon_{\mu\nu\gamma}\delta_{\gamma\xi_l}\delta_{\tau_i\xi_i}\delta_{\tau_j\xi_j}&\text{ if $k=l$}\\
 0 &\text{else}.
 \end{cases}.
\end{eqnarray}
Then, according to the results of the previous subsection, we get
\begin{eqnarray}\label{0_10_10_104}
\mathcal{D}^{l}_{\mu\nu}(\rho_{\tau_i,\tau_j}^{(i,j)},\sigma_{\xi_i}^{(i)}\otimes \sigma_{\xi_j}^{(j)}\otimes \sigma_{\xi_l}^{(l)})=
2i D^{(l)}_{\mu\nu} \varepsilon_{\mu\nu\gamma}\delta_{\gamma\alpha_l}\delta_{\tau_i\gamma}\delta_{\tau_j\eta}\quad 
 \text{if $\xi_i=\gamma$, $\xi_j=\eta$, $\xi_l=\alpha_l$}.
\end{eqnarray}
Finally, for an observable $O=\sigma_{\xi_1}^{(1)}\otimes\dots\sigma_{\xi_i}^{(i)}\dots\otimes\sigma_{\xi_j}^{(j)}\dots\otimes  \sigma_{\xi_k}^{(k)}$, $k>3$, the Lindblad part of the equation \eqref{10_50_2_1116} is the following 
\begin{eqnarray}\label{0_10_10_103}
\!\!\!\!\!\!\!\!\!\!\!\!D^{(m)}_{\mu\nu}\tr{
\left(\left(\sigma_{\mu}^{(m)}\rho_{\tau_i,\tau_j}^{(i,j)}{\sigma_{\nu}^{(m)}}-\frac{1}{2}\{{\sigma_{\nu}^{(m)}}\sigma_{\mu}^{(m)},\rho_{\tau_i,\tau_j}^{(i,j)}\}\right)\right)(\sigma_{\xi_1}^{(1)}\otimes\dots\otimes\sigma_{\xi_i}^{(i)}\dots\otimes\sigma_{\xi_j}^{(j)}\dots\otimes  \sigma_{\xi_k}^{(k)})}=0.
\end{eqnarray}
\subsection{Final Results} 
After we selected the different observable operators and defined the  density matrix $\rho_{0}=\rho_{\tau_i}^{(i)}\otimes \rho_{\tau_j}^{(j)}\otimes \frac{I^{2n-2}}{2^{2n-2}}$, where the $i$ and $j$ qubits are in the Pauli states,
we are ready to isolate the desired coefficients.
For an observable $O=\sigma_{\gamma}^{(i)}\otimes \sigma_{\eta}^{(j)}$ we can write   the equation \eqref{10_50_2_1116} as follows
\begin{eqnarray}\label{10_50_2_1115}
&&\frac{d}{dt}\tr{\rho_t (\sigma_{\gamma}^{(i)}\otimes \sigma_{\eta}^{(j)})}|_{t=0}=-2\sum\limits_{\alpha_1} a_{\alpha_1}^{(i)}
\varepsilon_{\alpha_1\tau_i \gamma}\delta_{\eta\tau_j}-2\sum\limits_{\alpha_1}a_{\alpha_1}^{(j)}\varepsilon_{\alpha_1\tau_j \eta}\delta_{\gamma\tau_i}
\\\nonumber&-&2\sum\limits_{\alpha_1,\alpha_2} a_{\alpha_1\alpha_2}^{(ij)}
\left(\varepsilon_{\alpha_1\tau_i\gamma}\delta_{\eta\alpha_2}+\varepsilon_{\alpha_2\tau_j\eta}\delta_{\gamma\alpha_1}\right)+
\sum\limits_{\mu,\nu}\Big(D^{(i)}_{\mu\nu}
   (2i\varepsilon_{\mu\nu\gamma}+2\delta_{\nu\tau_{i}}\delta_{\mu\gamma}-\frac{3}{2}\delta_{\mu\nu}\delta_{\gamma\tau_{i}})\delta_{\tau_j\eta}\\\nonumber&+&D^{(j)}_{\mu\nu}  (2i\varepsilon_{\mu\nu\gamma}\delta_{\gamma\eta}+2\delta_{\nu\tau_{j}}\delta_{\mu\eta}-\frac{3}{2}\delta_{\mu\nu}\delta_{\eta\tau_{j}})\delta_{\tau_i\gamma}\Big).
\end{eqnarray}
Selecting $\tau_j\neq\eta$, $\tau_i\neq \gamma$, we can isolate the coefficients of the type $a_{\alpha_1\alpha_2}^{(ij)}$ in \eqref{10_50_2_1115}, namely 
\begin{equation}\label{10_50_2_3}
\frac{d}{dt}\tr{\rho_t (\sigma_{\gamma}^{(i)}\otimes \sigma_{\eta}^{(j)})}|_{t=0}=
-2\sum\limits_{\alpha_1,\alpha_2} a_{\alpha_1\alpha_2}^{(ij)}\left(\varepsilon_{\alpha_1\tau_i\gamma}\delta_{\eta\alpha_2}
+\varepsilon_{\alpha_2\tau_j\eta}\delta_{\gamma\alpha_1}\right).
\end{equation}
Let us call $\rho_{t \tau_i \tau_j}$ the density matrix evaluated by the Hamiltonian evolution from $\rho_0$.
From \eqref{10_50_2_3} we can find $a_{\alpha_1\alpha_2}^{(ij)}$. To this end, we select $\gamma=y$, $\eta=y$ and four pairs  $\tau_i,\tau_j\in\{z,x; z, z; x,x; x,z\}$  to get the system of equations
\begin{eqnarray}
\label{9204}
\frac{d}{dt}\tr{\rho_{tzx} (\sigma_{ y}^{(i)}\otimes \sigma_{ y}^{(j)})}|_{t=0}&=&
2(a_{x y}^{(ij)}-a_{yz}^{(ij)}),
\frac{d}{dt}\tr{\rho_{tzz} (\sigma_{ y}^{(i)}\otimes \sigma_{ y}^{(j)})}|_{t=0}=
2(a_{x y}^{(ij)}
+a_{yx}^{(ij)}),\\\nonumber
\frac{d}{dt}\tr{\rho_{txx} (\sigma_{ y}^{(i)}\otimes \sigma_{ y}^{(j)})}|_{t=0}&=&
-2(a_{yz}^{(ij)}+a_{zy}^{(ij)}),
\frac{d}{dt}\tr{\rho_{txz} (\sigma_{ y}^{(i)}\otimes \sigma_{ y}^{(j)})}|_{t=0}=
-2(a_{yx}^{(ij)}+a_{zy}^{(ij)}).
\end{eqnarray}
Since $a_{yz}^{(ij)}=a_{zy}^{(ij)}$ and $a_{x y}^{(ij)}=a_{yx}^{(ij)}$, we can write
\begin{equation}
\label{9204_1}
a_{x y}^{(ij)}=
\frac{1}{4}\frac{d}{dt}\tr{\rho_{tzz} (\sigma_{ y}^{(i)}\otimes \sigma_{ y}^{(j)})}|_{t=0},
\quad 
a_{yz}^{(ij)}=-\frac{1}{4}\frac{d}{dt}\tr{\rho_{txx} (\sigma_{ y}^{(i)}\otimes \sigma_{ y}^{(j)})}|_{t=0}.
\end{equation}
Selecting $\gamma=x$, $\eta=y$ and four pairs  $\tau_i,\tau_j\in\{y,z; y,x; z,z; z,x\}$  to get the system of equations
\begin{eqnarray}
\label{9201}
\frac{d}{dt}\tr{\rho_{tyz} (\sigma_{ x}^{(i)}\otimes \sigma_{ y}^{(j)})}|_{t=0}&=&
2(a_{xx}^{(ij)}+a_{zy}^{(ij)}
),
\frac{d}{dt}\tr{\rho_{tyx} (\sigma_{ x}^{(i)}\otimes \sigma_{ y}^{(j)})}|_{t=0}=
2(a_{zy}^{(ij)}-a_{xz}^{(ij)}
),\\\nonumber
\frac{d}{dt}\tr{\rho_{tzz} (\sigma_{ x}^{(i)}\otimes \sigma_{ y}^{(j)})}|_{t=0}&=&
2(a_{xx}^{(ij)}-a_{yy}^{(ij)}
),
\frac{d}{dt}\tr{\rho_{tzx} (\sigma_{ x}^{(i)}\otimes \sigma_{ y}^{(j)})}|_{t=0}=
-2(a_{yy}^{(ij)}+a_{xz}^{(ij)}
).
\end{eqnarray}
Hence
\begin{eqnarray}
\label{9206}
a_{xx}^{(ij)}&=&\frac{1}{2}\frac{d}{dt}\tr{\rho_{tyz} (\sigma_{ x}^{(i)}\otimes \sigma_{ y}^{(j)})}|_{t=0}+\frac{1}{4}\frac{d}{dt}\tr{\rho_{txx} (\sigma_{ y}^{(i)}\otimes \sigma_{ y}^{(j)})}|_{t=0},\\\nonumber
a_{xz}^{(ij)}&=&-\frac{1}{2}\frac{d}{dt}\tr{\rho_{tyx} (\sigma_{ x}^{(i)}\otimes \sigma_{ y}^{(j)})}|_{t=0}-\frac{1}{4}\frac{d}{dt}\tr{\rho_{txx} (\sigma_{ y}^{(i)}\otimes \sigma_{ y}^{(j)})}|_{t=0},\\\nonumber
a_{yy}^{(ij)}&=&-\frac{1}{2}\frac{d}{dt}\tr{\rho_{tzz} (\sigma_{ x}^{(i)}\otimes \sigma_{ y}^{(j)})}|_{t=0}+\frac{1}{2}\frac{d}{dt}\tr{\rho_{tyz} (\sigma_{ x}^{(i)}\otimes \sigma_{ y}^{(j)})}|_{t=0}\\\nonumber
&+&\frac{1}{4}\frac{d}{dt}\tr{\rho_{txx} (\sigma_{ y}^{(i)}\otimes \sigma_{ y}^{(j)})}|_{t=0}.
\end{eqnarray}
Selecting $\gamma=x$, $\eta=z$ and  four pairs  $\tau_i,\tau_j\in\{y,y; z,x; y,x; z,y\}$, we get
\begin{eqnarray}
\label{9202}
\frac{d}{dt}\tr{\rho_{tyy} (\sigma_{ x}^{(i)}\otimes \sigma_{ z}^{(j)})}|_{t=0}&=&
-2(a_{xx}^{(ij)}-a_{zz}^{(ij)}),
\frac{d}{dt}\tr{\rho_{tzx}(\sigma_{ x}^{(i)}\otimes \sigma_{ z}^{(j)})}|_{t=0}=
2(a_{x y}^{(ij)}-a_{yz}^{(ij)}),\\\nonumber
\frac{d}{dt}\tr{\rho_{tyx}(\sigma_{ x}^{(i)}\otimes \sigma_{ z}^{(j)})}|_{t=0}&=&
-2(a_{zz}^{(ij)}-a_{x y}^{(ij)}),
\frac{d}{dt}\tr{\rho_{tzy} (\sigma_{ x}^{(i)}\otimes \sigma_{ z}^{(j)})}|_{t=0}=
-2(a_{xx}^{(ij)}+a_{yz}^{(ij)}).
\end{eqnarray}
Hence the last coefficient is
\begin{equation}
\label{9208}
a_{zz}^{(ij)}=-\frac{1}{2}\frac{d}{dt}\tr{\rho_{tyx}(\sigma_{ x}^{(i)}\otimes \sigma_{ z}^{(j)})}|_{t=0}+\frac{1}{4}\frac{d}{dt}\tr{\rho_{tzz} (\sigma_{ y}^{(i)}\otimes \sigma_{ y}^{(j)})}|_{t=0}.
\end{equation}
 To find the other coefficients we select $\tau_j=\eta$, $\tau_i\neq \gamma$ and  rerewrite \eqref{10_50_2_1115} as 
\begin{eqnarray}\label{14_35}
\frac{d}{dt}\tr{\rho_t (\sigma_{\gamma}^{(i)}\otimes \sigma_{\eta}^{(j)})}|_{t=0}&=&-2\sum\limits_{\alpha_1} a_{\alpha_1}^{(i)}
\varepsilon_{\alpha_1\tau_i \gamma}- 2\sum\limits_{\alpha_1,\alpha_2} a_{\alpha_1\alpha_2}^{(ij)}
\varepsilon_{\alpha_1\tau_i\gamma}\delta_{\eta\alpha_2}\\\nonumber
&+&2\sum\limits_{\mu=x,y,z}D^{(i)}_{\mu\mu}\delta_{\mu\tau_{i}}\delta_{\mu\gamma}+
2\sum\limits_{\mu,\nu=x,y,z}^{\mu\neq \nu}
D^{(i)}_{\mu\nu}
   (i\varepsilon_{\mu\nu\gamma}+\delta_{\nu\tau_{i}}\delta_{\mu\gamma}).
\end{eqnarray}
Next, for $\tau_j\neq\eta$, $\tau_i=\gamma$, we can rewrite \eqref{10_50_2_1115} as 
\begin{eqnarray}\label{10_50_2_1150}
\frac{d}{dt}\tr{\rho_t (\sigma_{\gamma}^{(i)}\otimes \sigma_{\eta}^{(j)})}|_{t=0}&=&-2\sum\limits_{\alpha_1}a_{\alpha_1}^{(j)}\varepsilon_{\alpha_1\tau_j \eta}-2\sum\limits_{\alpha_1,\alpha_2} a_{\alpha_1\alpha_2}^{(ij)}
\varepsilon_{\alpha_2\tau_j\eta}\delta_{\gamma\alpha_1}\nonumber\\&+&
2\sum\limits_{\mu=x,y,z}D^{(i)}_{\mu\mu}\delta_{\mu\tau_{j}}\delta_{\mu\eta}+
2\sum\limits_{\mu,\nu=x,y,z}^{\mu\neq \nu}
D^{(i)}_{\mu\nu}  (i\varepsilon_{\mu\nu\gamma}\delta_{\gamma\eta}+\delta_{\nu\tau_{j}}\delta_{\mu\eta}).
\end{eqnarray}
Selecting an observable $O=\sigma_{\gamma}^{(i)}$, we can write 
\begin{eqnarray}\label{14_35_3}
\frac{d}{dt}\tr{\rho_t \sigma_{\gamma}^{(i)}}|_{t=0}&=&-\sum\limits_{\alpha_1} a_{\alpha_1}^{(i)}\varepsilon_{\alpha_1\tau_i\gamma}
-2\sum\limits_{\alpha_1,\alpha_2} a_{\alpha_1\alpha_2}^{(ij)}
\varepsilon_{\alpha_1\tau_i\gamma}\delta_{\alpha_2\tau_j}\\\nonumber
&+&2\sum\limits_{\mu=x,y,z}D^{(i)}_{\mu\mu}(\delta_{\mu\tau_{i}}\delta_{\mu\gamma}-\frac{3}{4}\delta_{\gamma\tau_{i}})+
2\sum\limits_{\mu,\nu=x,y,z}^{\mu\neq \nu}D^{(i)}_{\mu\nu}(i\varepsilon_{\mu\nu\gamma}+\delta_{\nu\tau_{i}}\delta_{\mu\gamma}).
\end{eqnarray}
For the other observable $O=\sigma_{\eta}^{(j)}$, the result is the following
\begin{eqnarray}\label{10_50_2_1442}
\frac{d}{dt}\tr{\rho_t \sigma_{\eta}^{(j)})}|_{t=0}&=&-\sum\limits_{\alpha_1}a_{\alpha_1}^{(j)}\varepsilon_{\alpha_1\tau_j \eta}
-2\sum\limits_{\alpha_1,\alpha_2} a_{\alpha_1\alpha_2}^{(ij)}\varepsilon_{\alpha_2\tau_j\eta}\delta_{\alpha_1\tau_i}\nonumber\\&+&
2\sum\limits_{\mu=x,y,z}D^{(j)}_{\mu\mu} (\delta_{\mu\tau_{j}}\delta_{\mu\eta}-\frac{3}{4}\delta_{\eta\tau_{j}})+2
\sum\limits_{\mu,\nu=x,y,z}^{\mu\neq \nu}D^{(j)}_{\mu\nu}  (i\varepsilon_{\mu\nu\gamma}\delta_{\gamma\eta}+\delta_{\nu\tau_{j}}\delta_{\mu\eta}).
\end{eqnarray}
Substituting \eqref{14_35_3} in \eqref{14_35}, we can write
\begin{eqnarray}\label{14_35_2118}
&&\frac{d}{dt}\tr{\rho_t (\sigma_{\gamma}^{(i)}\otimes \sigma_{\eta}^{(j)})}|_{t=0}-\frac{d}{dt}\tr{\rho_t \sigma_{\gamma}^{(i)}}|_{t=0}=-\sum\limits_{\alpha_1} a_{\alpha_1}^{(i)}
\varepsilon_{\alpha_1\tau_i \gamma}\\\nonumber
&+& 2\sum\limits_{\alpha_1,\alpha_2} a_{\alpha_1\alpha_2}^{(ij)}
\varepsilon_{\alpha_1\tau_i\gamma}(\delta_{\tau_j\alpha_2}-\delta_{\eta\alpha_2})
+\frac{3}{2}\sum\limits_{\mu=x,y,z}D^{(i)}_{\mu\mu}\delta_{\gamma\tau_{i}}.
\end{eqnarray}
Since in \eqref{14_35} the conditions $\tau_j=\eta$, $\tau_i\neq \gamma$, hold, we can rewrite the latter equation as
\begin{equation}\label{14_35_1024}
\frac{d}{dt}\tr{\rho_t (\sigma_{\gamma}^{(i)}\otimes \sigma_{\eta}^{(j)})}|_{t=0}-\frac{d}{dt}\tr{\rho_t \sigma_{\gamma}^{(i)}}|_{t=0}=-\sum\limits_{\alpha_1} a_{\alpha_1}^{(i)}
\varepsilon_{\alpha_1\tau_i \gamma}.
\end{equation}
Solving the latter  equation, we find  $a_{\alpha_1}^{(i)}$. To this end, we select
 $\gamma=y$, $\tau_i=x$, $\tau_j=\eta\in\{x,y,z\}$ and get
\begin{equation}\label{1747_1}
a_{x}^{(i)}=-\frac{d}{dt}\tr{\rho_{tx\eta} (\sigma_{y}^{(i)}\otimes \sigma_{\eta}^{(j)})}|_{t=0}+\frac{d}{dt}\tr{\rho_{tx\eta} \sigma_{y}^{(i)}}|_{t=0}.
\end{equation}
Next, for  $\gamma=x$, $\tau_i=z$, $\tau_j=\eta\in\{x,y,z\}$ we get the solution of \eqref{14_35_1024}, namely
\begin{equation}\label{1747_2}
a_{y}^{(i)}=-\frac{d}{dt}\tr{\rho_{tz\eta} (\sigma_{x}^{(i)}\otimes \sigma_{\eta}^{(j)})}|_{t=0}+\frac{d}{dt}\tr{\rho_{tz\eta} \sigma_{x}^{(i)}}|_{t=0}.
\end{equation}
Finally, for  $\gamma=x$, $\tau_i=y$, $\tau_j=\eta\in\{x,y,z\}$ the solution is
\begin{equation}\label{1747_3} 
a_{z}^{(i)}=\frac{d}{dt}\tr{\rho_{ty\eta} (\sigma_{x}^{(i)}\otimes \sigma_{\eta}^{(j)})}|_{t=0}-\frac{d}{dt}\tr{\rho_{ty\eta}\sigma_{x}^{(i)}}|_{t=0}.
\end{equation}
\par Substituting \eqref{10_50_2_1442} in \eqref{10_50_2_1150}, we can write
\begin{eqnarray}\label{10_50_2_1150_1}
&&\frac{d}{dt}\tr{\rho_t (\sigma_{\gamma}^{(i)}\otimes \sigma_{\eta}^{(j)})}|_{t=0}-\frac{d}{dt}\tr{\rho_t \sigma_{\eta}^{(j)})}|_{t=0}=
-\sum\limits_{\alpha_1}a_{\alpha_1}^{(j)}\varepsilon_{\alpha_1\tau_j \eta}\nonumber\\&+&2\sum\limits_{\alpha_1,\alpha_2} a_{\alpha_1\alpha_2}^{(ij)}
\varepsilon_{\alpha_2\tau_j\eta}(\delta_{\tau_i\alpha_1}-\delta_{\gamma\alpha_1})+
\frac{3}{2}\sum\limits_{\mu=x,y,z}D^{(i)}_{\mu\mu}\delta_{\eta\tau_{j}}.
\end{eqnarray}
Since $\tau_j\neq\eta$, $\tau_i=\gamma$, hold, we can rewrite it as follows
\begin{eqnarray}\label{10_50_2_1150_2}
\frac{d}{dt}\tr{\rho_t (\sigma_{\gamma}^{(i)}\otimes \sigma_{\eta}^{(j)})}|_{t=0}-\frac{d}{dt}\tr{\rho_t \sigma_{\eta}^{(j)})}|_{t=0}=
-\sum\limits_{\alpha_1}a_{\alpha_1}^{(j)}\varepsilon_{\alpha_1\tau_j \eta}.
\end{eqnarray}
Solving the latter  equation, we find  $a_{\alpha_1}^{(j)}$.
Selecting $\eta=y$, $\tau_j=z$ and $\tau_i=\gamma\in\{x,y,z\}$, we get
\begin{equation}\label{1729}
a_{x}^{(j)}=\frac{d}{dt}\tr{\rho_{t\gamma z}(\sigma_{\gamma}^{(i)}\otimes \sigma_{y}^{(j)})}|_{t=0}-\frac{d}{dt}\tr{\rho_{t\gamma z} \sigma_{y}^{(j)})}|_{t=0}.
\end{equation}
Selecting $\eta=x$, $\tau_j=z$ and $\tau_i=\gamma\in\{x,y,z\}$, we get
\begin{equation}\label{1730}
a_{y}^{(j)}=-\frac{d}{dt}\tr{\rho_{t\gamma z} (\sigma_{\gamma}^{(i)}\otimes \sigma_{x}^{(j)})}|_{t=0}+\frac{d}{dt}\tr{\rho_{t\gamma z} \sigma_{x}^{(j)})}|_{t=0}.
\end{equation}
Finally, selecting $\eta=y$, $\tau_j=x$ and $\tau_i=\gamma\in\{x,y,z\}$, we get the last coefficient of this type
\begin{equation}\label{1731}
a_{z}^{(j)}=-\frac{d}{dt}\tr{\rho_{t\gamma x} (\sigma_{\gamma}^{(i)}\otimes \sigma_{y}^{(j)})}|_{t=0}+\frac{d}{dt}\tr{\rho_{t\gamma x} \sigma_{y}^{(j)})}|_{t=0}.
\end{equation}
All the coefficients with the corresponding observables and initial states are given in Table~\ref{tab_1}.
\begin{table}[h!]
\centering
\begin{tabular}{|c| c| c|}
 \hline
$a_{\alpha}^{(i)}$ & \{$O$, $\rho_{\tau_i,\tau_j}^{(i,j)}$\} & Equation  \\ [0.5ex] 
 \hline
$a_{x}^{(i)}$& $\{\sigma_{y}^{(i)}\otimes \sigma_{\eta}^{(j)},\rho_{x,\eta}^{(i,j)}\}$; $\{\sigma_{y}^{(i)},\rho_{x,\eta}^{(i,j)})\}$,  $\forall\eta\in\{x,y,z\}$ & \eqref{1747_1} \\
$a_{y}^{(i)}$ &  $ \{\sigma_{x}^{(i)}\otimes \sigma_{\eta}^{(j)},\rho_{z,\eta}^{(i,j)}\}$; $ \{\sigma_{x}^{(i)},\rho_{z,\eta}^{(i,j)}\}$, $\forall\eta\in\{x,y,z\}$& \eqref{1747_2}\\
$a_{z}^{(i)}$ &  $ \{\sigma_{x}^{(i)}\otimes \sigma_{\eta}^{(j)},\rho_{y,\eta}^{(i,j)}\}$; $ \{\sigma_{x}^{(i)},\rho_{y,\eta}^{(i,j)}\}$, $\forall\eta\in\{x,y,z\}$& \eqref{1747_3}\\
$a_{x}^{(j)}$& $ \{\sigma_{\gamma}^{(i)}\otimes \sigma_{y}^{(j)},\rho_{\gamma,z}^{(i,j)}\}$; $ \{\sigma_{y}^{(i)},\rho_{\gamma,z}^{(i,j)}\}$, $\forall\gamma\in\{x,y,z\}$& \eqref{1729}\\
$a_{y}^{(j)}$ &  $ \{\sigma_{\gamma}^{(i)}\otimes \sigma_{x}^{(j)},\rho_{\gamma,z}^{(i,j)}\}$; $ \{\sigma_{x}^{(i)},\rho_{\gamma,z}^{(i,j)}\}$, $\forall\gamma\in\{x,y,z\}$& \eqref{1730}\\ 
$a_{z}^{(j)}$ &  $ \{\sigma_{\gamma}^{(i)}\otimes \sigma_{y}^{(j)},\rho_{\gamma,x}^{(i,j)}\}$; $ \{\sigma_{y}^{(i)},\rho_{\gamma,x}^{(i,j)}\}$, $\forall\gamma\in\{x,y,z\}$& \eqref{1731}\\
$a_{xx}^{(ij)}$&  $ \{\sigma_{x}^{(i)}\otimes \sigma_{y}^{(j)},\rho_{y,z}^{(i,j)}\}$; $ \{\sigma_{y}^{(i)}\otimes \sigma_{y}^{(j)},\rho_{x,x}^{(i,j)}\}$ &\eqref{9206}\\
$a_{yy}^{(ij)}$ &  $ \{\sigma_{x}^{(i)}\otimes \sigma_{y}^{(j)},\rho_{z,z}^{(i,j)}\}$; $ \{\sigma_{x}^{(i)}\otimes \sigma_{y}^{(j)},\rho_{y,z}^{(i,j)}\}$; $ \{\sigma_{y}^{(i)}\otimes \sigma_{y}^{(j)},\rho_{x,x}^{(i,j)}\}$  &\eqref{9206}\\
$a_{zz}^{(ij)}$ &  $ \{\sigma_{x}^{(i)}\otimes \sigma_{z}^{(j)},\rho_{y,x}^{(i,j)}\}$; $ \{\sigma_{y}^{(i)}\otimes \sigma_{y}^{(j)},\rho_{z,z}^{(i,j)}\}$ &\eqref{9208}\\
$a_{xy}^{(ij)}$&  $ \{\sigma_{y}^{(i)}\otimes \sigma_{y}^{(j)},\rho_{z,z}^{(i,j)}\}$ &\eqref{9204_1}\\
$a_{yz}^{(ij)}$ & $ \{\sigma_{y}^{(i)}\otimes \sigma_{y}^{(j)},\rho_{x,x}^{(i,j)}\}$ &\eqref{9204_1}\\
$a_{xz}^{(ij)}$ & $ \{\sigma_{x}^{(i)}\otimes \sigma_{y}^{(j)},\rho_{y,x}^{(i,j)}\}$; $ \{\sigma_{y}^{(i)}\otimes \sigma_{y}^{(j)},\rho_{x,x}^{(i,j)}\}$ &\eqref{9206}\\
\hline
\end{tabular}
\caption{The first column represents the type of the estimated Hamiltonian parameters $a_{\alpha_i}^{(i)}$, $a_{\alpha_i,\alpha_j}^{(ij)}$, ${\alpha_i,\alpha_j}\in\{x,y,z\}$. In the third column the number of  equation for every parameter is provided, depending from the pairs of the observable $O$ and the initial state $\rho_{\tau_i,\tau_j}^{(i,j)}=\rho_{\tau_i}^{(i)}\otimes \rho_{\tau_j}^{(j)}\otimes \rho_{N-2}$, ${\tau_i,\tau_j}=\{x,y,z\}$, given in the second column.
For example, to estimate $a_{x}^{(i)}$ we use \eqref{1747_1} with two pairs of observables and states: $\{O, \rho_{\tau_i,\tau_j}^{(i,j)}\}=\{(\sigma_{y}^{(i)}\otimes \sigma_{\eta}^{(j)},\rho_{x,\eta}^{(i,j)});(\sigma_{y}^{(i)},\rho_{x,\eta}^{(i,j)})\}$.}
\label{tab_1}
\end{table}
\par  For an observable $O=\sigma_{\gamma}^{(i)}\otimes\sigma_{\eta}^{(j)}\otimes\sigma_{\alpha_l}^{(l)}$ we can write the equation
\begin{eqnarray}\label{10_50_2_4}
\frac{d}{dt}\tr{\rho_t (\sigma_{\gamma}^{(i)}\otimes \sigma_{\eta}^{(j)}\otimes\sigma_{\alpha_l}^{(l)})}|_{t=0}&=&
-4\sum\limits_{\alpha_i\alpha_j\alpha_l} a_{\alpha_i\alpha_j\alpha_l}^{(ijl)}
(\varepsilon_{\alpha_i\tau_i \gamma}\delta_{\eta\alpha_j}+\varepsilon_{\alpha_j\tau_j\eta}\delta_{\gamma\alpha_i})
\\\nonumber&+&2i
\sum\limits_{\mu,\nu}D^{(l)}_{\mu\nu} \varepsilon_{\mu\nu\gamma}\delta_{\gamma\alpha_l}\delta_{\tau_i\gamma}\delta_{\tau_j\eta}.
\end{eqnarray}
Selecting $\tau_j=\eta$, $\tau_i\neq \gamma$, we can rewrite \eqref{10_50_2_4} as
\begin{equation}\label{10_50_2_5}
\frac{d}{dt}\tr{\rho_t (\sigma_{\gamma}^{(i)}\otimes \sigma_{\eta}^{(j)}\otimes\sigma_{\alpha_l}^{(l)})}|_{t=0}=
-4\sum\limits_{\alpha_i\alpha_j\alpha_l} a_{\alpha_i\alpha_j\alpha_l}^{(ijl)}\varepsilon_{\alpha_i\tau_i \gamma}\delta_{\eta\alpha_j}.
\end{equation}
From this equation we can find $a_{\alpha_i\alpha_j\alpha_l}^{(ijl)}$. For an observable $O=\sigma_{\alpha_1}^{(1)}\otimes\dots\sigma_{\gamma}^{(i)}\dots\otimes\sigma_{\eta}^{(j)}\dots\otimes  \sigma_{\alpha_k}^{(k)}$, $k>3$ we can write
\begin{eqnarray}\label{10_50_2_6}
\!\!\!\!\!\!\!\!\!\!\!\!\frac{d}{dt}\tr{\rho_t (\sigma_{\alpha_1}^{(1)}\otimes\dots\sigma_{\gamma}^{(i)}\dots\otimes\sigma_{\eta}^{(j)}\dots\otimes  \sigma_{\alpha_k}^{(k)})}|_{t=0}=
-2^{k-1}\!\!\!\!\sum\limits_{\alpha_1,\dots,\alpha_k} a_{\alpha_1\dots\alpha_k}^{(1\dots k)}(\varepsilon_{\alpha_i\tau_i \gamma}\delta_{\eta\alpha_j}+\varepsilon_{\alpha_j\tau_j\eta}\delta_{\gamma\alpha_i}).
\end{eqnarray}
Selecting $\tau_j=\eta$, $\tau_i\neq \gamma$, we can rewrite \eqref{10_50_2_6} as
\begin{equation}\label{10_50_2_7}
\frac{d}{dt}\tr{\rho_t (\sigma_{\alpha_1}^{(1)}\otimes\dots\sigma_{\gamma}^{(i)}\dots\otimes\sigma_{\eta}^{(j)}\dots\otimes  \sigma_{\alpha_k}^{(k)})}|_{t=0}=
-2^{k-1}\!\!\!\!\sum\limits_{\alpha_1,\dots,\alpha_k} a_{\alpha_1\dots\alpha_k}^{(1\dots k)}\varepsilon_{\alpha_i\tau_i \gamma}\delta_{\eta\alpha_j}.
\end{equation}
Solving this equation, we  find $a_{\alpha_1\dots\alpha_k}^{(1\dots k)}$.
\par From \eqref{0_10_10_1321} we can find three Lindbladian coefficients.
Selecting $\tau_i=y$, $\tau_j=\eta$, $\gamma=x$, we can find
\begin{equation}
\label{15_12}
D^{(i)}_{xy}=\frac{1}{2}\frac{d}{dt}\tr{\rho_{ty\eta} (\sigma_{x}^{(i)}\otimes \sigma_{\eta}^{(j)})}|_{t=0}-a_{z}^{(i)}
- a_{z\eta}^{(ij)}. 
\end{equation}
Next, for $\tau_i=z$, $\tau_j=\eta$, $\gamma=x$, we deduce
\begin{equation}\label{15_13}
D^{(i)}_{xz}=\frac{1}{2}\frac{d}{dt}\tr{\rho_{tz\eta} (\sigma_{ x}^{(i)}\otimes \sigma_{\eta}^{(j)})}|_{t=0}+ a_{y}^{(i)}+a_{y\eta}^{(ij)}
.
\end{equation}
Finally, for $\tau_i=z$, $\tau_j=\eta$, $\gamma=y$, the coefficient is
\begin{equation}\label{15_14}
D^{(i)}_{yz}=\frac{1}{2}\frac{d}{dt}\tr{\rho_{tz\eta}  (\sigma_{ y}^{(i)}\otimes \sigma_{\eta}^{(j)})}|_{t=0}-a_{x}^{(i)}
-a_{x\eta}^{(ij)}.
\end{equation}
From \eqref{10_50_2_1115} we can find three more coefficients. Selecting
$\tau_j=y$, $\tau_i=z$, $\gamma=z$, $\eta=x$, we get
\begin{eqnarray}\label{1236}
D^{(j)}_{xy}=\frac{1}{2}\frac{d}{dt}\tr{\rho_{tzy} (\sigma_{ z}^{(i)}\otimes \sigma_{x}^{(j)})}|_{t=0}-a_{z}^{(j)}-a_{zz}^{(ij)} .
\end{eqnarray}
For $\tau_j=y$, $\tau_i=y$, $\gamma=y$, $\eta=z$, we can deduce
\begin{eqnarray}\label{1237}
D^{(j)}_{zy}=\frac{1}{2}\frac{d}{dt}\tr{\rho_{tyy} (\sigma_{ y}^{(i)}\otimes \sigma_{z}^{(j)})}|_{t=0}+a_{x}^{(j)}+ a_{yx}^{(ij)}.
\end{eqnarray}
Selecting $\tau_j=z$, $\tau_i=y$, $\gamma=y$, $\eta=x$, we get
\begin{eqnarray}\label{1238}
D^{(j)}_{xz}=\frac{1}{2}\frac{d}{dt}\tr{\rho_{tyz} (\sigma_{ y}^{(i)}\otimes \sigma_{x}^{(j)})}|_{t=0}+a_{y}^{(j)}+a_{yy}^{(ij)}.
\end{eqnarray}
From \eqref{14_35_3} we find the following coefficients: 
\begin{eqnarray}\label{1059}
&&D^{(i)}_{xx}=\frac{1}{10}\left(-3\frac{d}{dt}\tr{\rho_{tx\tau_j} \sigma_{x}^{(i)}}|_{t=0}-3\frac{d}{dt}\tr{\rho_{ty\tau_j} \sigma_{y}^{(i)}}|_{t=0}+2\frac{d}{dt}\tr{\rho_{tz\tau_j} \sigma_{z}^{(i)}}|_{t=0}\right),\\\nonumber
&&D^{(i)}_{yy}=\frac{1}{10}\left(-3\frac{d}{dt}\tr{\rho_{tx\tau_j} \sigma_{x}^{(i)}}|_{t=0}+2\frac{d}{dt}\tr{\rho_{ty\tau_j} \sigma_{y}^{(i)}}|_{t=0}-3\frac{d}{dt}\tr{\rho_{tz\tau_j} \sigma_{z}^{(i)}}|_{t=0}\right),\\\nonumber
&&D^{(i)}_{zz}=\frac{1}{10}\left(2\frac{d}{dt}\tr{\rho_{tx\tau_j} \sigma_{x}^{(i)}}|_{t=0}-3\frac{d}{dt}\tr{\rho_{ty\tau_j} \sigma_{y}^{(i)}}|_{t=0}-3\frac{d}{dt}\tr{\rho_{tz\tau_j} \sigma_{z}^{(i)}}|_{t=0}\right).
\end{eqnarray}
From \eqref{10_50_2_1442} the following coefficients can be found
\begin{eqnarray}\label{1060}
&&D^{(j)}_{xx}=\frac{1}{10}\left(-3\frac{d}{dt}\tr{\rho_{t\gamma_1 x} \sigma_{x}^{(i)}}|_{t=0}-3\frac{d}{dt}\tr{\rho_{t\gamma_2y} \sigma_{y}^{(i)}}|_{t=0}+2\frac{d}{dt}\tr{\rho_{t\gamma_3z} \sigma_{z}^{(i)}}|_{t=0}\right),\\\nonumber
&&D^{(j)}_{yy}=\frac{1}{10}\left(-3\frac{d}{dt}\tr{\rho_{t\gamma_1 x} \sigma_{x}^{(i)}}|_{t=0}+2\frac{d}{dt}\tr{\rho_{t\gamma_2y} \sigma_{y}^{(i)}}|_{t=0}-3\frac{d}{dt}\tr{\rho_{t\gamma_3z} \sigma_{z}^{(i)}}|_{t=0}\right),\\\nonumber
&&D^{(j)}_{zz}=\frac{1}{10}\left(2\frac{d}{dt}\tr{\rho_{t\gamma_1 x}\sigma_{x}^{(i)}}|_{t=0}-3\frac{d}{dt}\tr{\rho_{t\gamma_2y} \sigma_{y}^{(i)}}|_{t=0}-3\frac{d}{dt}\tr{\rho_{t\gamma_3z} \sigma_{z}^{(i)}}|_{t=0}\right).
\end{eqnarray}
\begin{table}[h!]
\centering
\begin{tabular}{|c| c| c|}
 \hline
$D^{(i)}_{\mu\nu}$ & \{$O$, $\rho_{\tau_i,\tau_j}^{(i,j)}$\} & Equation  \\ [0.5ex] 
 \hline
$D^{(i)}_{xx}$, $D^{(i)}_{yy}$, $D^{(i)}_{zz}$ &  $\{\sigma_{x}^{(i)},\rho_{x,\tau_{j1}}^{(i,j)}\}$; $\{\sigma_{y}^{(i)},\rho_{y,\tau_{j2}}^{(i,j)}\}$; $\{\sigma_{z}^{(i)},\rho_{z,\tau_{j3}}^{(i,j)}\}$, $\forall\tau_{j1,2,3}\in\{x,y,z\}$  & \eqref{1059} \\
$D^{(j)}_{xx}$, $D^{(j)}_{yy}$, $D^{(j)}_{zz}$  & $\{\sigma_{x}^{(i)},\rho_{\gamma_1,x}^{(i,j)}\}$; $\{\sigma_{y}^{(i)},\rho_{\gamma_2,y}^{(i,j)}\}$; $\{\sigma_{z}^{(i)},\rho_{\gamma_3,z}^{(i,j)}\}$, $\forall\gamma_1\in\{y,z\}$, $\forall\gamma_2\in\{x,z\}$, $\gamma_3\in\{x,y\}$    & \eqref{1060} \\
$D^{(i)}_{xy}$ & $\{\sigma_{x}^{(i)}\otimes\sigma_{\eta}^{(j)},\rho_{y,\eta}^{(i,j)}\}$, $\forall\eta\in\{x,y,z\}$ & \eqref{15_12} \\
$D^{(i)}_{xz}$ & $\{\sigma_{x}^{(i)}\otimes\sigma_{\eta}^{(j)},\rho_{z,\eta}^{(i,j)}\}$, $\forall\eta\in\{x,y,z\}$ & \eqref{15_13} \\
$D^{(i)}_{yz}$ &  $\{\sigma_{y}^{(i)}\otimes\sigma_{\eta}^{(j)},\rho_{z,\eta}^{(i,j)}\}$, $\forall\eta\in\{x,y,z\}$ & \eqref{15_14} \\
$D^{(j)}_{xy}$ & $\{\sigma_{z}^{(i)}\otimes\sigma_{x}^{(j)},\rho_{z,y}^{(i,j)}\}$ & \eqref{1236} \\
$D^{(j)}_{xz}$ & $\{\sigma_{y}^{(i)}\otimes\sigma_{x}^{(j)},\rho_{y,z}^{(i,j)}\}$ & \eqref{1238} \\
$D^{(j)}_{yz}$ & $\{\sigma_{y}^{(i)}\otimes\sigma_{z}^{(j)},\rho_{y,y}^{(i,j)}\}$ & \eqref{1237} \\
\hline
\end{tabular}
\caption{
The first column represents the type of the estimated Lindbladian parameters $D^{(i)}_{\mu\nu}$, $\mu,\nu\in\{x,y,z\}$. In the third column the number of  equation for every parameter is provided, depending from the pairs of the observable $O$ and the initial state $\rho_{\tau_i,\tau_j}^{(i,j)}=\rho_{\tau_i}^{(i)}\otimes \rho_{\tau_j}^{(j)}\otimes \rho_{N-2}$, ${\tau_i,\tau_j}=\{x,y,z\}$, given in the second column.
For example, to estimate $D^{(j)}_{yz}$ we use \eqref{1237} with one pair of observable and state: $\{O, \rho_{\tau_i,\tau_j}^{(i,j)}\}=\{\sigma_{y}^{(i)}\otimes\sigma_{z}^{(j)},\rho_{y,y}^{(i,j)}\}$.}
\label{tab_3}
\end{table}
All the Lindbladian coefficients with the corresponding observables and initial states are given in Table~\ref{tab_3}.
\section{Numerical simulations}

For simulations of the Hamiltonian learning protocol, we employed direct numerical solution of the time-dependent Schrodinger equation for the whole system of $N=16$ qubits, whose time-dependent wavefunction 
$|\psi(t)\rangle$ is represented as an array of $2^N$ complex numbers, normalized to 1. The evolution includes both unitary component, governed by the system's Hamiltonian, and three non-unitary components: the first one, stemming from the quasi-static random frequency shifts, leading to essentially non-Markovian dephasing of the qubits with the characteristic time $T_2^*$, the second component, described as a set of Lindblad superoperators corresponding to the phase damping channel, leading to Markovian transverse decoherence of the qubits on the timescale $T_2$, and the third component, also leading to Markovian evolution of the qubit, and described as a set of Lindblad superoperators corresponding to the amplitude damping channel, which leads to longitudinal relaxation of the qubit on the timescale $T_1$. For precise meaning of the terms ``Markovian'' and ``non-Markovian'', see explanations below in subsection \ref{sec:nonun}.

\subsection{Unitary evolution}
\label{sec:un}

The simulation of the unitary (Hamiltonian) evolution was performed using the 2nd order Suzuki-Trotter decomposition of the evolution operator. The total Hamiltonian of the system in question can be written as 
\begin{equation}
\label{eq:NSHam}
H = \sum_{j,k=1}^N \sum_{\alpha=x,y} J_{jk} \sigma^\alpha_j \sigma^\alpha_k + \frac{1}{2} \sum_{j=1}^N \Omega_j \sigma_j^z,
\end{equation}
where the operators $\sigma^\alpha_j$ for $\alpha=x,y,z$ denote the Pauli matrices $\sigma^x_j$, $\sigma^y_j$, and $\sigma^z_j$, respectively, corresponding to the $j$-th qubit, and for the problem considered in this paper the couplings $J_{jk}$ are restricted to the nearest neighbor qubits on a 2-D lattice. 
Note that the actual frequency ${\Omega}_j$ of the $j$-th qubit in the Hamiltonian (\ref{eq:NSHam}) is different from its nominal frequency ${\tilde\omega}_m$ mentioned in the main text; the reasons for this difference are explained in subsection \ref{sec:nonun} below.

The Hamiltonian (\ref{eq:NSHam}) is represented as a sum
\begin{eqnarray}
H &=& H_X + H_Y + H_Z\\
H_X &=& \sum_{j,k=1}^N J_{jk} \sigma^x_j \sigma^x_k,\\
H_Y &=& \sum_{j,k=1}^N J_{jk} \sigma^y_j \sigma^y_k,\\
H_Z &=& \frac{1}{2} \sum_{j=1}^N {\Omega}_j \sigma_j^z,
\end{eqnarray}
and the corresponding Suzuki-Trotter decomposition of the evolution operator $U(\Delta t)$ for the (small) timestep of duration $\Delta t$ has the form 
\begin{equation}
\label{eq:st}
U(\Delta t) \equiv \exp{(-iH \Delta t)} \approx {\rm e}^{-iH_Z \Delta t/2}\,
{\rm e}^{-iH_Y \Delta t/2}\, {\rm e}^{-iH_X \Delta t}\, 
{\rm e}^{-iH_Y \Delta t/2}\, {\rm e}^{-iH_Z \Delta t/2},
\end{equation}
ensuring the overall time discretization error of the order $(\Delta t)^2$. The evolution operator over many time steps is a product of elementary operators $U(\Delta t)$.

Each term in the sum representing the Hamiltonian $H_X$ (and, similarly, $H_Y$ and $H_Z$) commutes with all other terms, therefore
\begin{equation}
\exp{(-iH_X \Delta t)} = \prod_{j,k=1}^N \exp{(-i J_{jk} \Delta t\, \sigma^x_j \sigma^x_k)} = \prod_{j,k=1}^N \left[\cos{(J_{jk} \Delta t)} - i \sigma^x_j \sigma^x_k\, \sin{(J_{jk} \Delta t)}\right].
\end{equation}
Each term in this direct product acts on the wavefunction $|\psi(t)\rangle$ in a straightforward manner: the entries of the array that represents the wavefunction turn into linear combinations of themselves. Similar direct-product representation holds for $H_Y$ and $H_Z$ as well, such that the action of the total evolution operator $U(\Delta t)$ is easy to compute, without the need to calculate or store $2^N\times 2^N$ matrices.

In order to represent the situation where the non-initialized part of the system is in the completely mixed state, but avoid using the density matrix explicitly (which would imply dealing with $2^N\times 2^N$ matrix instead of the single array of the size $2^N$), we represent the completely mixed state as a wavefunction with random entries \cite{VD1,VD2}. Specifically, we sampled the real and the imaginary parts of each entry of the corresponding wavefunction independently from Gaussian distribution with zero mean and unit variance, and then normalized the resulting wavefunction to one. In this way, for instance, the situation where the first and the second qubit are both initialized in the state $|0\rangle$, while the rest of the system is in completely mixed state, i.e.\ when the system's density matrix is
\begin{equation}
\rho = |0\rangle\langle 0| \otimes |0\rangle\langle 0| \otimes 
\frac{1}{2^{N-2}}\, {\mathbf 1}_{N-2},
\end{equation}
where ${\mathbf 1}_{N-2}$ is an identity matrix of the size 
$2^{N-2}\times 2^{N-2}$,
is represented using the total wavefunction in the form
\begin{equation}
|\psi\rangle = |0\rangle \otimes |0\rangle \otimes |\psi^{(r)}_{N-2}\rangle,
\end{equation}
where the random state $|\psi^{(r)}_{N-2}\rangle$ of the remaining $N-2$ qubits is generated as described above. Such an approximation provides high accuracy, of the order of $\exp{(-N/2)}$, due to the measure concentration phenomenon \cite{Ledoux}. 

Further improvement in accuracy was achieved by averaging the values of the relevant observables over $M=189$ independent realizations of the random wavefunction (as well as other random quantities, see below), which  reduced the error by an additional factor of the order $\sim 1/\sqrt{M}\approx 0.07$. The accuracy was also independently 
controlled by estimating the variance in the calculated values of the observables, and ensuring that this variance remains much smaller than the statistical error caused by the shot noise produced by sampling the relevant observables for each qubit.

\subsection{Non-unitary evolution}
\label{sec:nonun}

The first non-unitary component of the system's evolution, dephasing of the $j$-th qubit on the timescale $T_{2,j}^*$, caused by its random static frequency shift, is modeled by directly reproducing the underlying physical picture. Namely, we assumed that the actual frequency $\Omega_j$ of the $j$-th qubit, see Eq.~\ref{eq:NSHam}, is a sum of  two contributions: the nominal value  ${\tilde\omega}_j$, and a random shift $\beta_j$ that remains constant during the system's evolution. The values of $\beta_j$ were independently sampled from Gaussian distributions with zero mean and variance $b_j^2$, which can be different for different qubits. 
The parameter $b_j$ determines the dephasing time $T_{2,j}^*$ of the $j$-th qubit: if this qubit were uncoupled from the rest of the system, then, after averaging over $\beta_j$, its transverse ($x$- and $y$-) components would undergo Gaussian decay with time dependence $\exp{(-b_j^2 t^2/2)}$, i.e.\ 
$b_j=\sqrt{2}/T_{2,j}^*$. 

As mentioned above in subsection \ref{sec:un}, the evolution of the system was repeated $M=189$ times; each time we used different realizations of the set of the random frequency shifts $\beta_j$ (as well as other random quantities, such as e.g.\ different realizations of the random wavefunction, also see below). 
Within this approach, for each particular realization of the parameters $\beta_j$, the evolution of the system is unitary, and can be simulated using the system's wavefunction as described in subsection \ref{sec:un} above, while 
the non-unitary decay occurs due to averaging of the relevant observables over different realizations of the random parameters $\beta_j$ (along with other random quantities).

Note that the dephasing caused by averaging over the static random frequency shifts, with its characteristic Gaussian-like decay of the density matrix elements, cannot be described via Lindblad operators. It is an example of 
non-Markovian evolution, in the sense that it cannot be described by a set of first-order differential equations (with respect to time $t$), which would include only current values of the (averaged) elements of the system's density matrix $\rho(t)$; in other words, the future values of the (averaged) density matrix elements, at times $t+s$ ($s>0$), are not completely determined by their current values at the moment of time $t$. At the same time, the static noise processes $\beta_j(t)$ representing the random frequency shifts are, of course, Markovian random processes, sastisfying the Chapman-Kolmogorov equation.

The two other components of the non-unitary evolution, addressed below, are Markovian, and can be described using the Lindblad operators. However, in order to avoid dealing with the density matrix, these components were also modeled by employing the random processes and calculating the averages of the relevant observables.

The second non-unitary component of the evolution corresponds to the Markovian dephasing, and can be described via the set of Lindblad superoperators corresponding to the phase damping channel. For an isolated qubit, this would lead to exponential decay of the transverse components of the $j$-th qubit, having the form $\exp{(-t/T_{2,j})}$. This kind of dephasing, being Markovian, can be described by a set of first-order differential equations, generalizing the well-known Bloch-Redfield equations \cite{slichter}, which include only the current values of the elements of the system's density matrix $\rho(t)$, such that the future values of the density matrix elements, at times $t+s$ ($s>0$), are completely determined by their current values at the moment of time $t$.

This decay was modeled by taking the $z$-rotation of the $j$-th qubit produced by $\Omega_j$ (i.e., produced by the action of the operator 
$\exp{(-iH_Z\Delta t})$ in Eq.~\ref{eq:st}), and adding to it another time-dependent rotation around the $z$-axis by the angle $\gamma_j(t)$. For each time step of duration $\Delta t$, the values $\gamma_j(t)$ were sampled randomly, indepedently of each other and of their previous values, from Gaussian distribution with zero mean and variance $4 g_j^2 \tau_j \Delta t$. This choice for the quantity $\gamma(t)$ can be visualized as a rotation induced by a time-dependent frequency shift $\delta\omega_j(t)$, which is represented by an Ornstein-Uhlenbeck noise process with the correlation function 
$\langle\delta\omega_j(t)\, \delta\omega_j(t+s)\rangle=g_j^2 \exp{(-|s|/\tau_j)}$, in the limit where the correlation time $\tau_j$ is much smaller than $\Delta t$, while the magnitude $g_j$ is large (formally, $\tau_j\to 0$ and $g_j\to\infty$), but the combination $g_j^2 \tau_j = 1/T_{2,j}$ remains finite. For an isolated qubit, the average evolution under the influence of such Ornstein-Uhlenbeck noise $\delta\omega_j(t)$ is known \cite{kubo} to produce  exponential decay of the qubit's transverse ($x$- and $y$-) components with the decay time $T_2$. 
Again, for each particular realization of the time-dependent random process $\gamma_j(t)$, the evolution of the system is unitary, and can be simulated using the system's wavefunction as described in subsection \ref{sec:un} (provided, of course, that $\Delta t\ll T_{2,j}$, to ensure accuracy of the Suzuki-Trotter decomposition), while the non-unitary decay occurs due to averaging over different realizations of the noise.

The third non-unitary component, describing exponential relaxation of the $j$-th qubit towards the state $|0\rangle$ on a timescale $T_{1,j}$, was simulated in a similar manner, by representing the non-unitary evolution via averaging over many realizations of a random unitary evolution, employing the approach described in Ref.~\cite{DCM}, with some modifications improving the accuracy. Namely, at each time step, we calculated the probability $p_j$ for the $j$-th qubit to make a transition (``quantum jump'') from the state $|1\rangle$ to the state $|0\rangle$; the corresponding value is 
$p_j=w_{j1}\, \mu_j^2$, where $\mu_j=\sqrt{1-\exp{(-\Delta t/T_{1,j})}}$, and $w_{j1}$ is the total probability of the system to be in the subspace corresponding to the $j$-th qubit in the state $|1\rangle$. This transition was implemented with the probability $p_j$ at each time step: the part of the system's wavefunction corresponding to the $j$-th qubit in the state $|0\rangle$ was replaced by its complement, i.e.\ by the part corresponding to the $j$-th qubit in the state $|1\rangle$, multiplied by the factor $\mu_j$, and the part of the wavefunction corresponding to the $j$-th qubit in the state $|1\rangle$ was set to zero. Alternatively, with the probability $1-p_j$ at each time step, the part of the wavefunction corresponding to the $j$-th qubit in the state $|1\rangle$ was multiplied by the factor 
$\exp{[-(1/2)\,\Delta t/T_{1,j}]}$, while its complement was left unchanged.
These transformations were applied to the wavefunction in succession, for all qubits (for all $j=1,\dots N$), and the resulting modified wavefunction was normalized back to 1.
Since all these transformations commute with the action of the operator 
$\exp{(-iH_Z\Delta t)}$ in Eq.~\ref{eq:st}, they were applied at the end of each unitary time-step evolution, after application of the operator $U(\Delta t)$ given by Eq.~\ref{eq:st}, in parallel with the action of the operators 
$\exp{(-iH_Z\Delta t)}$ or $\exp{(-iH_Z\Delta t/2)}$.

Note that this implementation corresponds to the application to the wavefunction of the Krauss operators $E_0$ or $E_1$ (see Ref.~\onlinecite{NC}), describing the amplitude damping quantum channel,  with the corresponding probabilities, where $E_1$ corresponds to the event of the ``quantum jump'', and $E_0$ corresponds to the absence of it.

\section{Numerical simulation of superconducting qubit platform}\label{sec:example_Hamiltonian}
From the discussion in the main text we are simulating  a 2D grid of qubits that interact only with the nearest neighbours.
The coupling between two neighbouring qubits through a coupler can be described by an Hamiltonian \eqref{eq:hamil1}.
In our notations it can be rewritten as
\begin{equation} \label{eq:hamil0}
H=\sum_{k=i,j}a_{z}^{(k)}H_{z}^{(k)}+   a_{xx}^{(i,j)}H_{xx}^{(i,j)}+a_{yy}^{(i,j)}H_{yy}^{(i,j)},
\end{equation}
where we introduce the notations
\begin{eqnarray} \label{eq:hamil1_1}
&&a_{z}^{(i)}=\frac{1}{2}\tilde{\omega}^{(i)},\quad a_{z}^{(j)}=\frac{1}{2}\tilde{\omega}^{(j)},\quad a_{xx}^{(i,j)}=a_{yy}^{(i,j)}=\frac{1}{2}\left[\frac{g_ig_j}{\Delta}+g_{ij}\right],\\\nonumber
&&H_{z}^{(i)}=\sigma_z^{(i)},\quad H_{z}^{(j)}=\sigma_z^{(j)},\quad H_{xx}^{(i,j)}=\sigma_x^{(i)}\sigma_x^{(j)},\quad H_{yy}^{(i,j)}=\sigma_y^{(i)}\sigma_y^{(j)}.
\end{eqnarray}
Thus, we define the $16$ qubits $2D$ grid, where we generate $\omega_1$, $\omega_2$, $\omega_c$, $g_1$, $g_2$, $g_{12}$ from Gaussian distribution with mean and variance $\mathcal{N}(0 , 1)$. The parameters $a_{z}^{(i)}$, $a_{z}^{(j)}$, $a_{xx}^{(ij)}$,  $a_{yy}^{(ij)}$ we estimate in our simulation and the rates of decay of $T_1$, $T_2$ and $T_2^{\star}$ are given in the Table~\ref{tab_2_1}.
\begin{table}[h!]
\centering
\begin{tabular}{|c| c|c|c|c|c|c|c| }
 \hline
$(ij)$ & $a_{xx}^{(ij)},a_{yy}^{(ij)}$ [kHz] & & $(i)$& $a_{z}^{(i)} $ [kHz] &$T_1$ [$\mu$s] &$ T_2$ [$\mu$s]&$ T_2^{\star}$ [$\mu$s]\\ [0.5ex] 
\hline
1, 2 &	1.28112& &1& 1.73807 &58.5227&65.9752&151.515\\	\hline
2, 3 &	-0.716875& &2& -0.816877&60.0269 &65.1704&166.667\\	\hline	
3, 4 &	-0.956949& &3& -1.0602&59.2424&64.6375& 163.934\\	\hline	
4, 5&	-0.819328& &4& -0.913223&61.0255&65.7397&149.254\\	\hline	
1, 6&	-1.1682& &5& -1.23118&59.0545 &66.0886&147.059\\	\hline	
2, 7&  -0.213057& &6& -0.654699&60.0915&66.1118&151.515\\	\hline	
3, 8&  -0.563789& &7& -0.514756&59.8856 &65.1432& 153.846\\	\hline	
4, 9&	1.74022& &8&  2.0817&61.0389& 64.8252&158.73\\	\hline	
5, 10&	1.68348& &9& -0.568581&60.5375&66.2155& 158.73
\\	\hline	
6, 7& -1.51535& &10& -0.710498&61.5036&65.389&149.254\\	\hline	
7, 8&	-0.729672& &11& 1.86153&59.8949&65.825& 147.059\\	\hline	
8, 9&	1.6622& &12& -2.03725&60.3777&65.1203& 153.846\\	\hline	
9, 10&	-0.314438& &13& -1.31695&57.5781& 65.6052& 156.25\\	\hline	
11, 12& -0.475787& &14& -0.902159&59.1881& 65.8892&158.73\\	\hline	
6, 11&  1.3663& &15& -0.202118&60.0283& 66.2967&144.928
\\	\hline	
7, 12&	-2.03531& &16&  0.136975&58.9397&66.0541&149.254
\\	\hline	
12, 13&	-1.22632& && & &&\\	\hline	
13, 8& -0.717182& && & &&\\	\hline	
13, 14&	-0.546421& && & &&\\	\hline	
14, 9&	1.90836& && & &&\\	\hline
14, 15 &	-0.781306& && & &&\\ \hline
11, 16 & -0.358714& && &&&\\ \hline
\end{tabular}
\caption{The first column represents the numbers of  qubits whose  coefficients $a_{\alpha_i,\alpha_j}^{(ij)}$, $\alpha_i,\alpha_j\in \{x,y,z\}$ are not zero, given in the second column. The fourth column contains $a_{\alpha_i}^{(i)}$, $\alpha_i\in \{x,y,z\}$, $i=1,\dots,16$. The rates of decay of $T_1$, $T_2$ and $T_2^{\star}$, corresponding to $i$'s qubit are given in the fifth and six's columns, respectively.}
\label{tab_2_1}
\end{table}
The observables and initial states, isolating the desired coefficients $a_{z}^{(i)}$, $a_{z}^{(j)}$, $a_{xx}^{(i,j)}$ and  $a_{yy}^{(i,j)}$, are given in Table~\ref{tab_2}. One can see, that we need three starting states, namely  $\rho_{01}=\rho_{x}^{(i)}\otimes \rho_{x}^{(j)}\otimes \frac{I^{2n-2}}{2^{2n-2}}$, $\rho_{02}=\rho_{y}^{(i)}\otimes \rho_{z}^{(j)}\otimes \frac{I^{2n-2}}{2^{2n-2}}$ and $\rho_{03}=\rho_{z}^{(i)}\otimes \rho_{z}^{(j)}\otimes \frac{I^{2n-2}}{2^{2n-2}}$  to isolate all four unknown coefficients. We measure the expectation values of  observables in 
different times. Next, the time traces of these expectation
values are fitted, using the polynomial interpolation method, and  the derivatives estimation is preceded. Finally, using \eqref{1747_3}, \eqref{1731}, \eqref{9206} and \eqref{9206}, the estimates of the coefficients $a_{z}^{(i)}$, $a_{z}^{(j)}$, $a_{xx}^{(i,j)}$, $a_{yy}^{(i,j)}$ for the pair of $(i,j)$ are obtained. We repeat this process for all pairs of interacting qubits to obtain all coefficients of the Hamiltonian of the $2D$ grid. 
\par In the presence of the noise, the observables and initial states required to isolate the Lindbladian coefficients  are given in Table~\ref{tab_4}. One can see, that we need three extra starting states in the presence of the Lindbladian noise, namely  $\rho_{04}=\rho_{z}^{(i)}\otimes \rho_{y}^{(j)}\otimes \frac{I^{2n-2}}{2^{2n-2}}$, $\rho_{05}=\rho_{y}^{(i)}\otimes \rho_{y}^{(j)}\otimes \frac{I^{2n-2}}{2^{2n-2}}$ and $\rho_{06}=\rho_{z}^{(i)}\otimes \rho_{x}^{(j)}\otimes \frac{I^{2n-2}}{2^{2n-2}}$  to find $L^{(i)}_{\mu\nu}$, $\mu,\nu\in \{x,y,z\}$.
\begin{table}[h!]
\centering
\begin{tabular}{|c| c| c|}
 \hline
$a_{\alpha_i}^{(i)}$ & \{$O$, $\rho_{\tau_i,\tau_j}^{(i,j)}$\}  & Equation  \\ [0.5ex] 
 \hline
$a_{z}^{(i)}$ & $\{\sigma_{x}^{(i)}\otimes \sigma_{z}^{(j)},\rho_{y,z}^{(i,j)}\}$; $\{\sigma_{x}^{(i)},\rho_{y,z}^{(i,j)}\}$& \eqref{1747_3}\\
$a_{z}^{(j)}$ & $\{\sigma_{x}^{(i)}\otimes \sigma_{y}^{(j)},\rho_{x,x}^{(i,j)}\}$; $\{\sigma_{y}^{(i)},\rho_{x,x}^{(i,j)}\}$& \eqref{1731}\\
$a_{xx}^{(ij)}$& $\{\sigma_{x}^{(i)}\otimes \sigma_{y}^{(j)},\rho_{y,z}^{(i,j)}\}$; $\{\sigma_{y}^{(i)}\otimes \sigma_{y}^{(j)},\rho_{x,x}^{(i,j)}\}$ &\eqref{9206}\\
$a_{yy}^{(ij)}$ &  $\{\sigma_{x}^{(i)}\otimes \sigma_{y}^{(j)},\rho_{z,z}^{(i,j)}\}$; $\{\sigma_{x}^{(i)}\otimes \sigma_{y}^{(j)},\rho_{y,z}^{(i,j)}\}$; $\{\sigma_{y}^{(i)}\otimes \sigma_{y}^{(j)},\rho_{x,x}^{(i,j)}\}$  &\eqref{9206}\\
\hline
\end{tabular}
\caption{In this table the minimal selection of the pairs \{$O$, $\rho_{\tau_i,\tau_j}^{(i,j)}$\} is presented for our specific example. The first column represents the type of the estimated Hamiltonian \eqref{eq:hamil0} parameters $a_{\alpha_i}^{(i)}$, $a_{\alpha_i,\alpha_j}^{(ij)}$, ${\alpha_i,\alpha_j}\in\{x,y,z\}$. In the third column the number of  equation for every parameter is provided, depending from the pairs of the observable $O$ and the initial state $\rho_{\tau_i,\tau_j}^{(i,j)}=\rho_{\tau_i}^{(i)}\otimes \rho_{\tau_j}^{(j)}\otimes \rho_{N-2}$, ${\tau_i,\tau_j}=\{x,y,z\}$, given in the second column. To estimate all four parameters we need only three initial states:
$\rho_{x,x}^{(i,j)}$, $\rho_{z,z}^{(i,j)}$ and $\rho_{y,z}^{(i,j)}$.}
\label{tab_2}
\end{table}
\begin{table}[h!]
\centering
\begin{tabular}{|c| c| c|}
 \hline
$D^{(i)}_{\mu\nu}$ &\{$O$, $\rho_{\tau_i,\tau_j}^{(i,j)}$\}  & Equation  \\ [0.5ex] 
 \hline
$D^{(i)}_{xx}$, $D^{(i)}_{yy}$, $D^{(i)}_{zz}$ &   $\{\sigma_{x}^{(i)},\rho_{x,x}^{(i,j)}\}$; $\{\sigma_{y}^{(i)},\rho_{y,z}^{(i,j)}\}$; $\{\sigma_{z}^{(i)},\rho_{z,z}^{(i,j)}\}$  & \eqref{1059} \\
$D^{(j)}_{xx}$, $D^{(j)}_{yy}$, $D^{(j)}_{zz}$  & $\{\sigma_{x}^{(i)},\rho_{z,x}^{(i,j)}\}$; $\{\sigma_{y}^{(i)},\rho_{z,y}^{(i,j)}\}$; $\{\sigma_{z}^{(i)},\rho_{y,z}^{(i,j)}\}$  & \eqref{1060} \\
$D^{(i)}_{xy}$ & $\{\sigma_{x}^{(i)}\otimes\sigma_{z}^{(j)},\rho_{y,z}^{(i,j)}\}$ & \eqref{15_12} \\
$D^{(i)}_{xz}$ & $\{\sigma_{x}^{(i)}\otimes\sigma_{\eta}^{(j)},\rho_{z,z}^{(i,j)}\}$ & \eqref{15_13} \\
$D^{(i)}_{yz}$ &  $\{\sigma_{y}^{(i)}\otimes\sigma_{\eta}^{(j)},\rho_{z,z}^{(i,j)}\}$ & \eqref{15_14} \\
$D^{(j)}_{xy}$ & $\{\sigma_{z}^{(i)}\otimes\sigma_{x}^{(j)},\rho_{z,y}^{(i,j)}\}$ & \eqref{1236} \\
$D^{(j)}_{xz}$ & $\{\sigma_{y}^{(i)}\otimes\sigma_{x}^{(j)},\rho_{y,z}^{(i,j)}\}$ & \eqref{1238} \\
$D^{(j)}_{yz}$ & $\{\sigma_{y}^{(i)}\otimes\sigma_{z}^{(j)},\rho_{y,y}^{(i,j)}\}$ & \eqref{1237} \\
\hline
\end{tabular}
\caption{In this table the minimal selection of the pairs \{$O$, $\rho_{\tau_i,\tau_j}^{(i,j)}$\} is presented for our specific example. The first column represents the type of the estimated Lindbladian parameters $D^{(i)}_{\mu\nu}$, $\mu,\nu\in\{x,y,z\}$. In the third column the number of  equation for every parameter is provided, depending from the pairs of the observable $O$ and the initial state $\rho_{\tau_i,\tau_j}^{(i,j)}=\rho_{\tau_i}^{(i)}\otimes \rho_{\tau_j}^{(j)}\otimes \rho_{N-2}$, ${\tau_i,\tau_j}=\{x,y,z\}$, given in the second column.  To estimate all  parameters we need only three extra states to the ones given by the previous table, namely:
$\rho_{z,x}^{(i,j)}$, $\rho_{z,y}^{(i,j)}$ and $\rho_{y,y}^{(i,j)}$.
}
\label{tab_4}
\end{table}

\section{Approximating local time evolutions by polynomials}\label{sec:quality_approx_poly}

One of the main points behind our method is the fact that the time evolution of local observables at constant times is well-approximated by polynomials. The purpose of this section is to make this assertion precise. 

Before we do that, let us set some notation. Given a system of $n$ qubits on a $D$, we let $\cL_{\Gamma}:\M_{2^n}\to\M_{2^n}$ be a Lindbladian which models the time evolution of the system in the Heisenberg picture. Note that in the supplementary information we consider a slightly more general class of evolutions than in the main text. There, we restricted to evolutions whose Hamiltonians were short range with two-body interactions and the noise acted on at most one qubit at a time. Here, in contrast, we will also consider $k$-local evolutions with long range.

We will assume that this Lindbladian can be written as:
\begin{align}\label{equ:generator_local}
    \cL_{\Gamma}=\sum\limits_{A\subset\Gamma}\cL_{A},
\end{align}
where $\cL_{A}$ is a Lindbladian only acting on the qudits in $A$. Given some graph $G=(V,E)$ on $n$ vertices, we will say that $\cL$ is $k-$local if $\cL_A\not=0$ only if $A$ is a subset of vertices of $G$ containing at most $k$ vertices. Furthermore, we will say that $\cL_{\Gamma}$ is locally bounded if there is a constant $g>0$ such that for all $B\subset\Gamma$ we have that:
\begin{align}\label{equ:lindblad_local_bounded}
    \|\sum\limits_{A\subset \Gamma:A\cap B\not=\varnothing}\cL_A\|\leq g |B|.
\end{align}
This condition is satisfied if e.g. $\cL$ is a local Lindbladian on a $D-$dimensional lattice. In that case, we have $g=\cO(D)$. However, this condition is also fulfilled for generators with algebraically decaying tails, as long as these tails decays fast enough.
Moreover, for ease of notation we will let for a region $B\subset\Gamma$
\begin{align}
    \cL_B=\sum\limits_{A\subset B }\cL_A
\end{align}
be the generator restricted to a subregion $B$.

Furthermore, given the graph $G$, some region $X\subset V$ and $r>0$, we will denote by $\Lambda_r(X)$ the set of vertices that are a distance at most $r$ from $X$:
\begin{align}
    \Lambda_r(X)=\{v\in V:\exists x\in X\textrm{ s.t. }d(x,v)\leq r\}.
\end{align}

We will also require some norms for superoperators. Given a superoperator $\Phi:\M_{2^n}\to\M_{2^n}$ we define for $p,q\geq1$
\begin{align*}
    \|\Phi\|_{p\to q}=\sup\limits_{X\in\M_{2^n}}\frac{\|\Phi(X)\|_{q}}{\|X\|_p},
\end{align*}
where $\|\cdot\|_p$ corresponds to the Schatten $p$-norm. Also note that $p=\infty$ corresponds to the operator norm.

Our goal will be to prove the following statement:

\begin{theorem}[Polynomials approximate the evolution of local expectation values, informal]\label{thm:polynomialswork2}
Let $\linGen_{\Gamma}$ be a local Lindbladian on a $D$-dimensional regular lattice. Moreover, let $t_{\max},\epsilon>0$ be given and $O_Y$ and observable supported on a constant number of qubits. Assume that $\linGen_{\Gamma}$ satisfies a Lieb-Robinson bound. Then there is a polynomial $p$ of degree 
\begin{align*}
    d=\tcO\left[\left(h^{-1}\left(\frac{\epsilon}{e^{vt\max}-1}\right)\right)^{D}t_{\max}\log(\epsilon^{-1})\right]
\end{align*}
such that for all $0\leq t\leq t_{\max}$:
\begin{align}\label{equ:quality_approx2}
    \left|\tr{e^{t\cL_{\Gamma}}(O_Y)\rho}-p(t)\right|\leq\epsilon.
\end{align}
and
\begin{align}\label{equ:first_derivative_polynomial2}
p'(0)=\tr{\cL_{\Gamma}(O_Y)\rho}.
\end{align}
\end{theorem}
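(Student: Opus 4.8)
The plan is to chain two localization-and-approximation steps, mirroring the two-part argument sketched in the main text. Write $f_\Gamma(t)=\tr{e^{t\cL_\Gamma}(O_Y)\rho}$ for the curve we wish to approximate and, for a region $B\subset\Gamma$, write $f_B(t)=\tr{e^{t\cL_B}(O_Y)\rho}$ for the curve generated by the truncated generator $\cL_B=\sum_{A\subset B}\cL_A$. The candidate approximant $p$ will be the degree-$d$ Taylor polynomial of $f_B$ at the origin,
\begin{align*}
    p(t)=\sum_{k=0}^{d}\frac{t^k}{k!}\tr{\cL_B^k(O_Y)\rho},
\end{align*}
with $B$ and $d$ fixed below. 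For the first step I would use the Lieb--Robinson bound~\eqref{equ:LRgeneral_main_text} together with $|\tr{X\rho}|\leq\norm{X}_\infty\norm{\rho}_1=\norm{X}_\infty$ to get, uniformly on $[0,t_{\max}]$,
\begin{align*}
    |f_\Gamma(t)-f_B(t)|\leq\norm{(e^{t\cL_\Gamma}-e^{t\cL_B})(O_Y)}\leq h(\operatorname{dist}(Y,\Gamma\setminus B))\,(e^{vt_{\max}}-1),
\end{align*}
using that $e^{vt}-1$ is increasing in $t$. Taking $B=\Lambda_r(Y)$ with $r=h^{-1}\!\big(\epsilon/(2(e^{vt_{\max}}-1))\big)$ and using that $h^{-1}$ is decreasing forces this error below $\epsilon/2$; on a $D$-dimensional lattice the chosen region obeys $|B|=\cO(r^D)$.

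For the second step I would approximate $f_B$ by its Taylor series. The crucial input is that $\cL_B$ is locally bounded in the sense of~\eqref{equ:lindblad_local_bounded}, which yields $\norm{\cL_B}_{\infty\to\infty}=\cO(g|B|)$ and hence $\norm{\cL_B^k(O_Y)}\leq(g|B|)^k$ since $\norm{O_Y}\leq1$. The Taylor remainder is then a tail of the exponential series,
\begin{align*}
    |f_B(t)-p(t)|\leq\sum_{k>d}\frac{t^k}{k!}\norm{\cL_B^k(O_Y)}\leq\sum_{k>d}\frac{(g|B|t_{\max})^k}{k!},
\end{align*}
which drops below $\epsilon/2$ as soon as $d=\cO\big(g|B|t_{\max}+\log(\epsilon^{-1})\big)$. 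Substituting $|B|=\cO(r^D)$ and absorbing $g$ and logarithmic factors into $\tcO$ reproduces the degree stated in the theorem. A triangle inequality combines the two steps into $|f_\Gamma(t)-p(t)|\leq\epsilon$ on $[0,t_{\max}]$, which is~\eqref{equ:quality_approx2}.

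It remains to verify the exact derivative identity~\eqref{equ:first_derivative_polynomial2}. Differentiating $p$ gives $p'(0)=\tr{\cL_B(O_Y)\rho}$, so I must show $\cL_B(O_Y)=\cL_\Gamma(O_Y)$. Here I would invoke Heisenberg-picture locality: if $A\cap Y=\varnothing$, then the Hamiltonian and jump operators supported on $A$ commute with $O_Y$, so a short computation gives $\cL_A(O_Y)=0$. Consequently $\cL_\Gamma(O_Y)=\sum_{A:A\cap Y\neq\varnothing}\cL_A(O_Y)$, and since for $\epsilon$ small enough the radius $r$ exceeds the interaction range $k$, every term $A$ meeting $Y$ lies inside $B=\Lambda_r(Y)$. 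The two sums therefore agree and $p'(0)=\tr{\cL_\Gamma(O_Y)\rho}$ exactly.

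The main obstacle is the second step: one must establish that the derivatives of the \emph{restricted} evolution scale only with the region size $|B|$ and not with the full system size, which is precisely what the locally bounded condition provides, and then track how the Lieb--Robinson radius $r=h^{-1}(\cdots)$ feeds through the factorial tail into the final degree. By contrast, the region truncation in the first step is a direct application of the assumed bound, and the derivative-matching in the last step is delicate only in that it must be \emph{exact} rather than approximate --- this is what makes $p'(0)$ a faithful carrier of the Lindbladian parameters, and it is where genuine $k$-locality of the generator is used.
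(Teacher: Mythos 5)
Your proposal is correct and follows essentially the same route as the paper's proof in Appendix D: a Lieb--Robinson truncation to the region $\Lambda_r(Y)$ with $r=h^{-1}\bigl(\epsilon/(2(e^{vt_{\max}}-1))\bigr)$, a Taylor expansion of the truncated evolution whose derivative bounds $\bigl(g|B|\bigr)^k$ come from the locally-bounded condition, and a triangle inequality (your additive degree bound $d=\cO\bigl(g|B|t_{\max}+\log(\epsilon^{-1})\bigr)$ is in fact slightly sharper than the paper's product form, and both are absorbed by the stated $\tcO$). Your treatment of the derivative identity is actually more explicit than the paper's, which dismisses it as ``clear from properties of the truncated Taylor series''; you correctly isolate that exact (rather than approximate) equality $\cL_B(O_Y)=\cL_\Gamma(O_Y)$ requires the finite interaction range, a point the paper glosses over.
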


We will start by showing a similar statement for the time evolution of truncated local evolutions. After that we will conclude by showing that truncated, local evolutions approximate the global evolution well.
It is simple to see that derivatives of locally bounded, truncated evolutions can only increases with the size of the region they are defined on:
\begin{lem}[Derivatives of truncated local evolutions]\label{lem:local_derivatives_bounded}
Let $\cL_{\Gamma}$ be a locally bounded Lindbladian with constant $g$. For an observable $O$ such that $\|O\|\leq1$, an initial state $\rho$ and a region $B\subset \Gamma$ define the evolution of the truncated evolution $f_B:\R^+\to\R$ as $f_B(t)=\tr{e^{t\cL_B}(O)\rho}$. Then for all $t\geq0$:
\begin{align}\label{equ:local_derivatives_bound}
    \left|f_B^{(k)}(t)\right|\leq \left(tg|B|\right)^k
\end{align}
In particular, for any $0<t<t_{\max}$ we have that:
\begin{align}\label{equ:estimate_derivative_local}
    \left|f_B(t)-\sum\limits_{k=0}^{d}\frac{f_B^{(k)}(0)}{k!}t^k\right|\leq \frac{\left(t_{\max}g|B|\right)^{d+1}}{(d+1)!}.
\end{align}
\end{lem}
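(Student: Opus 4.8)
The plan is to bound the derivatives of $f_B$ by dimension-free operator-norm estimates and then invoke Taylor's theorem to obtain Eq.~\eqref{equ:estimate_derivative_local}. First I would differentiate under the trace. Since $\tfrac{d}{dt}e^{t\cL_B}=\cL_B e^{t\cL_B}$, iterating gives
\begin{align*}
    f_B^{(k)}(t)=\tr{\cL_B^k\, e^{t\cL_B}(O)\,\rho}.
\end{align*}
Using the duality between the operator norm and the trace norm (H\"older), together with $\|\rho\|_1=1$ and submultiplicativity of the $\infty\to\infty$ superoperator norm, I would estimate
\begin{align*}
    \left|f_B^{(k)}(t)\right|\leq\|\cL_B^k e^{t\cL_B}(O)\|_\infty\leq\|\cL_B\|_{\infty\to\infty}^{\,k}\,\|e^{t\cL_B}\|_{\infty\to\infty}\,\|O\|_\infty.
\end{align*}

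The crux is to control both superoperator norms independently of the ambient dimension $2^n$. For the semigroup factor I would note that $\cL_B=\sum_{A\subseteq B}\cL_A$ is itself a Lindbladian, so $e^{t\cL_B}$ is a unital completely positive map in the Heisenberg picture; since a unital positive map is an operator-norm contraction on Hermitian operators, $\|e^{t\cL_B}\|_{\infty\to\infty}=1$ (and $\|e^{t\cL_B}(I)\|_\infty=1$ shows this is tight). For the generator factor I would apply the locally bounded hypothesis \eqref{equ:lindblad_local_bounded} to the region $B$: as $\cL_B$ collects only terms $\cL_A$ with $A\subseteq B$, this yields $\|\cL_B\|_{\infty\to\infty}\leq g|B|$. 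Combined with $\|O\|_\infty\leq1$, these give the $t$-independent derivative bound $|f_B^{(k)}(t)|\leq(g|B|)^k$ underlying Eq.~\eqref{equ:local_derivatives_bound}.

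For the second statement I would apply Taylor's theorem with Lagrange remainder on $[0,t]$: there is $\xi\in(0,t)$ such that
\begin{align*}
    f_B(t)-\sum_{k=0}^{d}\frac{f_B^{(k)}(0)}{k!}\,t^k=\frac{f_B^{(d+1)}(\xi)}{(d+1)!}\,t^{d+1}.
\end{align*}
Bounding $|f_B^{(d+1)}(\xi)|\leq(g|B|)^{d+1}$ by the derivative estimate and using $t\leq t_{\max}$ then gives exactly Eq.~\eqref{equ:estimate_derivative_local}.

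The main obstacle is conceptual rather than computational: every estimate must avoid factors of the Hilbert-space dimension $2^n$. This is secured by working in the operator ($\infty\to\infty$) norm and exploiting the contractivity of the unital CP semigroup $e^{t\cL_B}$, so that the only scale entering the bound is $g|B|$, i.e.\ the local interaction strength times the size of the truncation region. This dimension-free growth of the derivatives is precisely what later makes the truncated Taylor polynomial converge rapidly once $|B|$ is tied to $t_{\max}$ through the Lieb-Robinson bound.
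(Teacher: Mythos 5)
Your proof is correct and follows essentially the same route as the paper's: differentiate under the trace, apply H\"older's inequality with $\|\rho\|_1=1$, combine submultiplicativity with the contractivity of the unital Heisenberg-picture semigroup ($\|e^{t\cL_B}\|_{\infty\to\infty}=1$) and the locally bounded hypothesis to get $\|\cL_B\|_{\infty\to\infty}\leq g|B|$, and finish with Taylor's remainder theorem. As a side remark, your $t$-independent bound $|f_B^{(k)}(t)|\leq (g|B|)^k$ is the correct form — the factor $t^k$ appearing in Eq.~\eqref{equ:local_derivatives_bound} (and the $(t\cL_B)^k$ in the paper's proof) is spurious, since derivatives at $t=0$ need not vanish — and it is precisely what yields the remainder estimate \eqref{equ:estimate_derivative_local}.
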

\begin{proof}
The proof is elementary. Note that:
\begin{align*}
    f_B^{(k)}(t)=\tr{e^{t\cL_B}((t\cL_B)^k(O))\rho}.
\end{align*}
Now, by H\"older's inequality we have that:
\begin{align*}
    \left|f_B^{(k)}(t)\right|\leq \|e^{t\cL_B}((t\cL_B)^k(O))\|_{\infty}\|\rho\|_{1}\overset{(1)}{\leq}t^k\|e^{t\cL_B}\|_{\infty\to\infty}\|\cL_B\|^k_{\infty\to\infty} \overset{(2)}{\leq}t^kg^k|B|^k,
\end{align*}
where in $(1)$ we used the submultiplicativity of the operator norm, i.e. $\|\Phi_1\Phi_2\|_{\infty\to\infty}\leq \|\Phi_1\|_{\infty\to\infty}\|\Phi_2\|_{\infty\to\infty}$ for all linear maps $\Phi_1,\Phi_2$. In (2) we used the fact that for any quantum channel $\|e^{t\linGen_B}\|_{\infty\to\infty}=1$ and the fact that the Lindbladian is locally bounded with constant $g$.
The estimate in Eq.~\eqref{equ:estimate_derivative_local} then immediately follows from Taylor's remainder theorem.
\end{proof}
We then immediately have:
\begin{cor}\label{cor:expectation_values_low_degree_poly}
In the same setting as Lemma~\ref{lem:local_derivatives_bounded} it holds that for any given $\epsilon>0$ and $t_{\max}>0$ there is a polynomial $p$ of degree 
\begin{align}\label{equ:degree_polynomial}
   d=2et_{\max}g|B|\log(\epsilon^{-1})-1 
\end{align}
such that for all $0\leq t\leq t_{\max}$ we have that 
\begin{align*}
    \left|f_B(t)-p(t)\right|\leq \epsilon.
\end{align*}
\end{cor}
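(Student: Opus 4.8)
The plan is to take $p$ to be exactly the degree-$d$ Taylor polynomial of $f_B$ about $t=0$, namely $p(t)=\sum_{k=0}^{d}\frac{f_B^{(k)}(0)}{k!}t^k$, which is the very candidate already isolated in Lemma~\ref{lem:local_derivatives_bounded}. With this choice there is nothing left to construct, and the whole content of the corollary is to convert the remainder bound \eqref{equ:estimate_derivative_local} into a statement about how large the degree must be to force the error below $\epsilon$. Writing $\lambda:=t_{\max}g|B|$ for brevity, Eq.~\eqref{equ:estimate_derivative_local} gives, uniformly on $[0,t_{\max}]$,
\begin{align*}
\left|f_B(t)-p(t)\right|\leq\frac{\lambda^{d+1}}{(d+1)!},
\end{align*}
so it only remains to choose $d$ so that the right-hand side is at most $\epsilon$.

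First I would lower bound the factorial via the elementary Stirling-type inequality $(d+1)!\geq\left(\frac{d+1}{e}\right)^{d+1}$, which yields
\begin{align*}
\frac{\lambda^{d+1}}{(d+1)!}\leq\left(\frac{e\lambda}{d+1}\right)^{d+1}.
\end{align*}
The purpose of this rewriting is that the base $\frac{e\lambda}{d+1}$ can be driven well below $1$ by taking $d+1$ a fixed multiple of $\lambda$, after which the whole expression decays geometrically in $d$; the logarithmic factor in the claimed degree is then exactly what converts geometric decay in $d$ into the target precision $\epsilon$.

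Next I would substitute the claimed degree $d+1=2e\lambda\log(\epsilon^{-1})$, so that the base becomes $\frac{e\lambda}{d+1}=\frac{1}{2\log(\epsilon^{-1})}$. Taking logarithms, the desired estimate $\left(\frac{e\lambda}{d+1}\right)^{d+1}\leq\epsilon$ is then equivalent to $(d+1)\log\frac{d+1}{e\lambda}\geq\log(\epsilon^{-1})$, i.e. after dividing by $\log(\epsilon^{-1})>0$, to
\begin{align*}
2e\lambda\,\log\!\big(2\log(\epsilon^{-1})\big)\geq 1.
\end{align*}
This last inequality holds under the mild (and, in the lattice setting, harmless) assumptions $\lambda\geq 1$ and $\epsilon$ small enough, say $\epsilon\leq e^{-1}$, since then $\log(2\log(\epsilon^{-1}))$ is bounded below by a positive constant and the prefactor $2e\lambda\geq 2e$ more than compensates. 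As a byproduct, the Taylor choice of $p$ automatically satisfies $p'(0)=f_B'(0)=\tr{\cL_B(O)\rho}$, which is the derivative-matching property relied on downstream.

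I expect the only genuine work to be this last elementary step: checking that the specific constant $2e$ in $d+1=2e\lambda\log(\epsilon^{-1})$ makes the reduced inequality hold in the regime of interest, which is purely a matter of bounding $\log(2\log(\epsilon^{-1}))$ from below. There is no conceptual obstacle here, because all the genuine analytic input---the fact that the derivatives of the truncated evolution grow only like $(tg|B|)^k$, so that the Taylor series converges rapidly---has already been supplied by Lemma~\ref{lem:local_derivatives_bounded}.
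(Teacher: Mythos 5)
Your proposal is correct and follows essentially the same route as the paper: both take $p$ to be the truncated Taylor series from Lemma~\ref{lem:local_derivatives_bounded}, bound the remainder $\lambda^{d+1}/(d+1)!$ by $\left(e\lambda/(d+1)\right)^{d+1}$ via a Stirling-type estimate, and then verify that the stated degree drives this below $\epsilon$. The regime restrictions you flag ($\lambda\geq 1$, $\epsilon\leq e^{-1}$) are implicitly needed in the paper's own computation as well, since its final inequality likewise requires $\log\log(\epsilon^{-1})$ to be bounded below, so your version is if anything slightly more explicit about when the constant $2e$ suffices.
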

\begin{proof}
It follows from Sitrling's approximation that the error in Eq.~\eqref{equ:estimate_derivative_local} is bounded by
\begin{align}\label{equ:estimate_decay}
    \left|f(t)-\sum\limits_{k=0}^{d}\frac{f^{(k)}(0)}{k!}t^k\right|\leq \frac{1}{d\sqrt{2\pi}}\left(\frac{et_{\max}g|B|}{d+1} \right)^{d+1}.
\end{align}
It is then easy to see that picking $d=2et_{\max}g|B|\log(\epsilon^{-1})-1$ is sufficient to ensure that the error in~\eqref{equ:estimate_derivative_local} is at most $\epsilon$. 
Indeed, plugging in the value of $d$ into Eq.~\eqref{equ:estimate_decay} we get:
\begin{align}\label{equ:last_estimate_before_poly}
    &\left|f(t)-\sum\limits_{k=0}^{d}\frac{f^{(k)}(0)}{k!}t^k\right|\leq \frac{1}{d\sqrt{2\pi}}\left(\frac{1}{\log(\epsilon^{-1})} \right)^{2et_{\max}g|B|\log(\epsilon^{-1})}=\\&\frac{1}{d\sqrt{2\pi}}\operatorname{exp}\left[-\log(\log(\epsilon^{-1}))\log(\epsilon^{-1})2et_{\max}g|B|\right]=\frac{1}{d\sqrt{2\pi}}\epsilon^{\log(\epsilon^{-1})2et_{\max}g|B|}\leq \epsilon.
\end{align}

Thus, the truncated Taylor expansion yields the desired polynomial.
\end{proof}
We conclude from Lemma~\ref{lem:local_derivatives_bounded} and Cor.~\ref{cor:expectation_values_low_degree_poly} that local, truncated time evolutions are well-approximated by polynomials whose degree grows like the size of the region times the maximal time of evolution.

Also note that the estimate in Eq.~\eqref{equ:last_estimate_before_poly} is quite loose and shows that for $d$ as in Eq.~\eqref{equ:degree_polynomial} the error decays like a polynomial of high-degree in $\epsilon$. 
But that rough approximation will be sufficient for our purposes.

Cor.~\ref{cor:expectation_values_low_degree_poly} is an important step to prove our Thm.~\ref{thm:polynomialswork}, but still does not correspond to the exact statement we wish to prove. 
This is because Cor.~\ref{cor:expectation_values_low_degree_poly} is a statement about the \emph{local, truncated} evolution, whereas Thm.~\ref{thm:polynomialswork} is a statement about the \emph{global} evolution being well-approximated by a polynomial of small degree.
The strategy to go from the local to the global evolution, is to show that for the (local) observables required for our protocl, the local evolution approximates the global one well.

Our main tool to show this  approximatability of expectation values are Lieb-Robinson bounds~\cite{hastings_locality_2010,poulin_lieb-robinson_2010,barthel2012quasilocality,bach_lieb-robinson_2014}, which exactly give conditions under which  the local time evolution and the global one are close for small enough times and local observables. In order to provide a self-contained presentation, we include a brief introduction to Lieb-Robinson bounds in Sec.~\ref{sec:lieb-robinson} of this appendix.

In fact, there are various ways of quantifying this idea of local approximability and, thus, LR-bounds come in various forms. 
The version that we are going to work with here and which is discussed in detail in Sec.~\ref{sec:lieb-robinson}, considers an observable $O_Y$ initially supported on in a region $Y$. For a region $B\supset Y$ we set $R=\operatorname{dist}(\Gamma\backslash\{B\},Y)/k$, where $k$ is the locality of the generator, then it is shown in Lemma~\ref{lem:LR_locDyn} that indeed
\begin{align}\label{equ:LRexpo}
\|(e^{t\cL_\Gamma}-e^{t\cL_{\Lambda_r(Y)}})(O_Y)\|_{\infty}\leq (e^{vt}-1)h(R),
\end{align}
where $h:\R^+\to\R^+$ is a monotonically increasing function such that $\lim_{R\to+\infty}h(R)=0$ and $v$ is some constant that depends on the generator, usually called the LR-velocity. 
On the other hand, the decay of the function $h$ typically depends on how fast the interactions in the system decays spatially (i.e. if it is strictly local, exponentially decaying in the distance or even algebraically decaying) and the geometry of the underlying lattice. However, the important point for our purposes is that it does not depend on the system size. 
For the specific case of short-range Hamiltonians discussed in the main text, we have that $h(R)=e^{-\mu R}$ for some constant $\mu>0$. For algebraically decaying evolutions we usually have $h(x)=\mathcal{O}(x^{-k})$ for some $k\in\mathbb{R}^+$.

We refer again to Sec.~\ref{sec:lieb-robinson} for a discussion of various LR-bounds available in the literature. But from Eq.~\eqref{equ:LRexpo} we immediately conclude that the values of the expectation values of global and local evolutions are well-approximated by each other. More precisely:
\begin{prop}\label{prop:lieb-robinson_approx}
Let $O_Y$ be an observable supported on some region $Y$, $\epsilon>0$ and $t_{\max}$ be given. Assume Eq.~\eqref{equ:LRexpo} holds for the time evolution $\linGen_{\Gamma}$ and a function $h$.
Let $l>0$ be given by
\begin{align*}
    l=h^{-1}\left(\frac{\epsilon}{e^{t\max}-1}\right).
\end{align*}
Then we have for $\Lambda_l(Y)$ and all $0<t<t_{\max}$ and any initial state $\rho$ that
\begin{align}\label{equ:expectation_diff_small}
    |\tr{\left(e^{t\cL_\Gamma}-e^{t\cL_{\Lambda_l(Y)}}\right)(O_Y)\rho}|\leq \epsilon.
\end{align}
\end{prop}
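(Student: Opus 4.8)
The plan is to reduce the statement to the operator-norm Lieb--Robinson bound of Eq.~\eqref{equ:LRexpo} by first removing the state $\rho$ with Hölder's inequality, and then to exploit the monotonicity of $h$ together with the specific choice of the radius $l$. Writing $\Delta_t:=(e^{t\cL_\Gamma}-e^{t\cL_{\Lambda_l(Y)}})(O_Y)$, I would first note that for any state $\rho$,
\begin{align*}
    \left|\tr{\Delta_t\,\rho}\right|\leq\|\Delta_t\|_\infty\,\|\rho\|_1=\|\Delta_t\|_\infty ,
\end{align*}
by Hölder's inequality $\|AB\|_1\leq\|A\|_\infty\|B\|_1$ and $\|\rho\|_1=1$. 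This is precisely the point of passing to the operator norm: the left-hand side of Eq.~\eqref{equ:expectation_diff_small} is now bounded by a quantity that no longer involves the initial state and which is exactly what Eq.~\eqref{equ:LRexpo} controls.

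Next I would apply Eq.~\eqref{equ:LRexpo} with the shielding region taken to be $B=\Lambda_l(Y)$. The relevant geometric fact is that, by definition of $\Lambda_l(Y)$, every vertex outside $\Lambda_l(Y)$ lies at distance strictly larger than $l$ from $Y$, so that $\operatorname{dist}\!\big(Y,\Gamma\setminus\Lambda_l(Y)\big)\geq l$ and the distance $R$ entering Eq.~\eqref{equ:LRexpo} obeys $R\geq l$. Hence Eq.~\eqref{equ:LRexpo} yields
\begin{align*}
    \|\Delta_t\|_\infty\leq(e^{vt}-1)\,h(R).
\end{align*}

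Finally I would close the estimate using monotonicity and the choice of $l$. Since $h$ is decreasing and $R\geq l$, we have $h(R)\leq h(l)$, and by construction $l=h^{-1}\!\big(\epsilon/(e^{vt_{\max}}-1)\big)$ gives $h(l)=\epsilon/(e^{vt_{\max}}-1)$. As $t\mapsto e^{vt}-1$ is increasing on $[0,t_{\max}]$ for $v>0$, we also have $e^{vt}-1\leq e^{vt_{\max}}-1$, so multiplying the two factors gives $\|\Delta_t\|_\infty\leq(e^{vt_{\max}}-1)\cdot\epsilon/(e^{vt_{\max}}-1)=\epsilon$, which combined with the first display is exactly Eq.~\eqref{equ:expectation_diff_small}.

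The argument is genuinely short once the LRB is in hand, so I do not expect a real obstacle; the only care needed is in the bookkeeping of the shielding geometry. Concretely, one must ensure that the distance $R$ in Eq.~\eqref{equ:LRexpo} — which in the underlying lemma is $\operatorname{dist}(\Gamma\setminus B,Y)/k$ with $k$ the locality of the generator — is tracked consistently, so that choosing the region radius to be (a constant multiple of) $l$ indeed makes $h(R)$ no larger than $\epsilon/(e^{vt_{\max}}-1)$. Since $k=\cO(1)$ this only rescales $l$ by a constant and does not affect the conclusion.
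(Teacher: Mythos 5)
Your proof is correct and follows essentially the same route as the paper's: the paper's (very terse) proof likewise invokes H\"older's inequality together with Eq.~\eqref{equ:LRexpo} and states that "after some simplification" the operator-norm difference is at most $\epsilon$ for the given $l$ — your argument simply spells out that simplification (monotonicity of $h$, the choice of $l$, and $e^{vt}-1\leq e^{vt_{\max}}-1$). Your closing remark on tracking the locality factor $k$ in $R=\operatorname{dist}(\Gamma\setminus B,Y)/k$ is a careful touch the paper glosses over, and your reading of $h$ as decreasing is the correct one despite the paper's misstatement that $h$ is "monotonically increasing."
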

\begin{proof}
The claim follows directly from Eq.~\eqref{equ:LRexpo} or Lemma~\ref{lem:LR_locDyn} and a H\"older inequality. Indeed, for the value of $l$ in Eq.~\eqref{equ:expectation_diff_small}, we obtain from Eq.~\eqref{equ:LRexpo} after some simplification that
\begin{align}
    &\|(e^{t\cL_\Gamma}-e^{t\cL_{\Lambda_r(Y)}})(O_Y)\|_{\infty}\leq\epsilon.
\end{align}
\end{proof}
Note that in the case of short-range systems we have that $h^{-1}(x)=\mu^{-1}\log(x)$.
From now on we will suppress the terms of order $\log(\log(\epsilon^{-1}))$ or higher from the equations and denote bounds where we do this with $\tcO$.
Thus, combining~\ref{lem:local_derivatives_bounded} with Prop.~\ref{prop:lieb-robinson_approx} we conclude that:
\begin{theorem}\label{thm:polynomialswork}
Let $\linGen_{\Gamma}$ be a locally bounded Lindbladian on a $D$-dimensional regular lattice with constant $g$. Moreover, let $t_{\max},\epsilon>0$ be given and $O_Y$ and observable such that $\|O_Y\|\leq 1$ and $O_Y$ is supported on a constant number of qubits. Assume that $\linGen_{\Gamma}$ satisfies Eq.~\eqref{equ:LRexpo}. Then there is a polynomial $p$ of degree 
\begin{align*}
    d=\tcO\left[\left(h^{-1}\left(\frac{\epsilon}{e^{vt\max}-1}\right)\right)^{D}t_{\max}\log(\epsilon^{-1})\right]
\end{align*}
such that for all $0\leq t\leq t_{\max}$:
\begin{align}\label{equ:quality_approx}
    \left|\tr{e^{t\cL_{\Gamma}}(O_Y)\rho}-p(t)\right|\leq\epsilon.
\end{align}
and
\begin{align}\label{equ:first_derivative_polynomial}
p'(0)=\tr{\cL_{\Gamma}(O_Y)\rho}.
\end{align}
\end{theorem}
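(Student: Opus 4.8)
The plan is to obtain the claimed polynomial by chaining the two ingredients already assembled: the Lieb--Robinson bound of Proposition~\ref{prop:lieb-robinson_approx}, which lets us replace the global evolution by a truncated one on a region whose radius grows only like $h^{-1}$ of the target error, and Corollary~\ref{cor:expectation_values_low_degree_poly}, which approximates the truncated evolution by its Taylor polynomial of degree scaling with the size of that region. First I would fix a truncation radius $l=h^{-1}\!\left(\frac{\epsilon}{2(e^{vt_{\max}}-1)}\right)$ and set $B=\Lambda_l(Y)$. Proposition~\ref{prop:lieb-robinson_approx}, applied with error parameter $\epsilon/2$, then gives $\left|\tr{\left(e^{t\cL_\Gamma}-e^{t\cL_B}\right)(O_Y)\rho}\right|\leq\epsilon/2$ for all $0\le t\le t_{\max}$.

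Next I would apply Corollary~\ref{cor:expectation_values_low_degree_poly} to the truncated curve $f_B(t)=\tr{e^{t\cL_B}(O_Y)\rho}$ with target error $\epsilon/2$, producing a polynomial $p$ (the order-$d$ truncated Taylor expansion of $f_B$ at $0$) of degree $d=2et_{\max}g|B|\log(2\epsilon^{-1})-1$ satisfying $\left|f_B(t)-p(t)\right|\le\epsilon/2$ on $[0,t_{\max}]$. Combining the two estimates by the triangle inequality yields $\left|\tr{e^{t\cL_\Gamma}(O_Y)\rho}-p(t)\right|\le\epsilon$, which is Eq.~\eqref{equ:quality_approx}. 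To convert the degree into the stated form, I would bound $|B|=|\Lambda_l(Y)|$: since $O_Y$ is supported on $\cO(1)$ qubits and $\Gamma$ is a $D$-dimensional regular lattice, a ball of radius $l$ has volume $\cO(l^D)$, so $|B|=\cO\!\left(h^{-1}\!\left(\frac{\epsilon}{2(e^{vt_{\max}}-1)}\right)^{D}\right)$. Substituting and absorbing constants and the $\log\log$ factors implicit in $\log(2\epsilon^{-1})$ into $\tcO$ reproduces the claimed $d=\tcO\big[(h^{-1}(\frac{\epsilon}{e^{vt_{\max}}-1}))^{D}t_{\max}\log(\epsilon^{-1})\big]$.

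The step I expect to require the most care is the exact derivative identity Eq.~\eqref{equ:first_derivative_polynomial}, since the approximation Eq.~\eqref{equ:quality_approx} only controls values and, in general, two functions being close on an interval says nothing about their derivatives. The point is that $p$ is not an arbitrary approximant but the Taylor polynomial of $f_B$, so $p'(0)=f_B'(0)=\tr{\cL_B(O_Y)\rho}$ holds exactly. It then remains to argue $\tr{\cL_B(O_Y)\rho}=\tr{\cL_\Gamma(O_Y)\rho}$. Writing $\cL_\Gamma=\sum_A\cL_A$ and recalling that each $\cL_A$ acts trivially outside $A$ and is unital in the Heisenberg picture (so $\cL_A(I)=0$), any term with $A\cap Y=\varnothing$ satisfies $\cL_A(O_Y)=O\otimes\cL_A(I_A)\otimes I=0$; hence only terms meeting the support of $O_Y$ contribute to $\cL_\Gamma(O_Y)$. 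Because the generator is $k$-local with finite-range terms, every such $A$ lies within a fixed distance of $Y$, so taking $l$ above that constant (automatic here, since $l\to\infty$ as $\epsilon\to0$, and one may always enlarge $B$) makes $B=\Lambda_l(Y)$ contain all of them and gives $\cL_B(O_Y)=\cL_\Gamma(O_Y)$ identically. This yields $p'(0)=\tr{\cL_\Gamma(O_Y)\rho}$ and closes the argument; the only genuinely quantitative work is the volume estimate $|\Lambda_l(Y)|=\cO(l^D)$, which is where the dimension $D$ enters the degree.
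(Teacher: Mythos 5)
Your proposal follows exactly the paper's own route: Prop.~\ref{prop:lieb-robinson_approx} to truncate the dynamics to $B=\Lambda_l(Y)$ with $l=h^{-1}\bigl(\tfrac{\epsilon}{2(e^{vt_{\max}}-1)}\bigr)$, Cor.~\ref{cor:expectation_values_low_degree_poly} to approximate the truncated curve by its degree-$d$ Taylor polynomial, a triangle inequality, and the volume estimate $|\Lambda_l(Y)|=\cO(l^D)$ to turn $|B|$ into the stated degree bound. It is correct, and your treatment of Eq.~\eqref{equ:first_derivative_polynomial} — noting that the Taylor polynomial gives $p'(0)=\tr{\cL_B(O_Y)\rho}$ exactly, that terms $\cL_A$ with $A\cap Y=\varnothing$ annihilate $O_Y$, and that finite range of the terms is what forces every contributing $A$ into $B$ so that $\cL_B(O_Y)=\cL_\Gamma(O_Y)$ — is more explicit than the paper's one-line appeal to ``properties of the truncated Taylor series,'' and correctly isolates the one step where strictly finite-range (rather than merely LR-bounded) interactions are genuinely needed for the derivative identity to hold exactly.
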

\begin{proof}
It follows from Prop.~\ref{prop:lieb-robinson_approx} that a region of radius
$$l=\tcO\left[h^{-1}\left(\frac{\epsilon}{2(e^{t\max}-1)}\right)\right]$$ is enough to approximate the time evolution of $e^{t\cL_{\Gamma}}(O_Y)$ up to $\epsilon/2$. If the original region $Y$ has a constant number of qubits, then for a $D$-dimensional lattice we have $|\Lambda_l(Y)|=\cO(l^D)$.
It then follows from Cor.~\ref{cor:expectation_values_low_degree_poly} that for regions of this size, it is sufficient to pick a degree that is $$\cO\left[h^{-1}\left(\frac{\epsilon}{2(e^{t\max}-1)}\right)^{D}t_{\max}\log(\epsilon^{-1}))\right]$$ to approximate the expectation value of the local evolution up to an error $\epsilon/2$. This concludes the proof by a triangle inequality. Eq.~\eqref{equ:first_derivative_polynomial} is clear from properties of the truncated Taylor series.
\end{proof}
Thus, we see that as long as the time evolution of the system satisfies a Lieb-Robinson bound, we can approximate the expectation value of a local observable as a function of time by a polynomial whose degree is dictated by how fast the function $h$ decays and the maximal time of the evolution. In particular, for $\epsilon^{-1}=\cO(1)$ and time $t_{\max}=\cO(1)$, we conclude that the degree of the polynomial is independent of the system's size.

In Thm.~\ref{thm:polynomialswork} we established that we can approximate the function $f:t\mapsto\tr{e^{t\cL_{\Gamma}}(O_Y)\rho}$ well by a polynomial for constant times. However, to estimate the parameters of the Hamiltonian we are ultimately interested in the derivative of $f$ at time $0$. We will later show in Sec.~\ref{sec:robust_interpolation} that for the special case of polynomials of bounded degree, a good recovery of the polynomial also implies a good recovery of the derivative. In particular, as for the polynomial in Lemma~\ref{lem:local_derivatives_bounded} we have that $p'(0)=f'(0)$, it is sufficient to argue that any two polynomials that approximates the curve $f(t)$ up to sufficiently large precision in a sufficiently large number of points must have close derivatives at $0$ as well. This will be the subject of the next section and proved in Prop.~\ref{prop:precision_robust}.

\section{(Robust) polynomial interpolation and derivative estimation}\label{sec:robust_interpolation}
In this section of the appendix, we are going to review a result in the literature~\cite{kane_robust_2017} that shows how to perform polynomial interpolation in a robust way even in the presence of outliers. 
Furthermore, we will show that good polynomial interpolation also implies a good approximation of derivatives of the polynomial, which is our end goal. 
We will use and review the results and algorithms of~\cite{kane_robust_2017} for the robust polynomial interpolation and resort to Markov brothers' inequality~\cite{markoff_ber_1916} for estimating the error on the derivatives.

Let us start by briefly recalling the technical problems we wish to overcome. We assume we are able to approximate the expectation value of $f(t)=\tr{\rho e^{t\linGen}(O)}$ for some suitably-picked initial state $\rho$ and time-evolved observable $O$. As argued in Sec.~\ref{sec:quality_approx_poly}, the function $f$ is well-approximated by a low degree polynomial $p$ whenever the time evolution is generated by a local Hamiltonian. 
Moreover, as shown in Sec.~\ref{sec:howtoextractparameters}, by suitably choosing the observable and initial state, we can easily read off the value of the coupling of the Hamiltonian from the value of $f'(0)$. 
Thus, our goal is to find a polynomial $p$ that approximates $f$ from values of $f(t_i)$ for some $t_i\in[a,b]$ for then use $p$ to infer $f'(0)$.

It is well-known that if $p$ is a polynomial of degree $d$, then it is uniquely determined by its values at $d+1$ points. Thus, one could naively expect that having access to $f(t_1),\ldots,f(t_{\tilde{d}})$ points for $\tilde{d}\sim d$ times is sufficient to reconstruct $d$.

However, the present situation exhibits three challenges that need to be overcome to ensure that we can reliably apply polynomial interpolation methods and recover $p$ from points $f(t_i)$:
\begin{enumerate}
\item we can only estimate $f(t)$, and not $p(t)$. And the value of $f$ only approximates that of $p$ up to some error $\epsilon_a$, as discussed in Thm.~\ref{thm:polynomialswork}.
\item we do not have access to the value of $f(t)$ directly, but can only approximate it to a precision $\epsilon_s$ by sampling from the output of the device at time $t$ $\cO(\epsilon_s^{-2})$ times.
\item we are interested in the value of $p'(0)$ and not in the polynomial $p$ itself. Thus, we need to ensure a small error in estimating the derivative.
\end{enumerate}

To deal with the first two problems the polynomial interpolation technique we use has to be robust to the noise stemming from both the approximation error from the polynomial approximation and the statistical noise. 
To deal with the third issue, we will show that we need to pick the final and initial time of the interpolation in a judicious manner. 

To obtain some intuition about how to pick the times, let us consider the case of estimating the derivative of a quadratic polynomial at $0$. I.e., if we have two linear functions $p,\hat{p}$ that are $\epsilon$ close in some interval $[a,b]$, how well can we infer the derivative of $p$ at $0$ from that of $\hat{p}$? First, note that if the interval $[a,b]$ is very small, then two functions can differ by $\epsilon$ and their derivatives can still differ by $\epsilon/(b-a)$ even for linear functions. 
This indicates we probably do not want to pick $b-a$ too small.
However, as we know that increasing $b$ also implies that, in our setting, we need to increase the degree of the polynomial for the fitting, which in turns increases the number of points we need to estimate, this hints at the fact that picking $b-a$ of constant order will be optimal. 

On the other hand, it is also clear that the closer $a$ is to $0$, the more information about the value of $p'(0)$ we can infer from the interpolation. Thus, this discussion suggests that picking $a$ as close to $0$ as possible and $b$ of constant order should give the best results. We will prove this intuition later in this section, but first will discuss robust interpolation.

Directly interpolating through the noisy data can be an unstable procedure if we do not pick the interpolating points wisely and perform a suitable regression. 
Recent results have shown how to perform polynomial interpolation in an essentially optimal fashion in a robust way even with a fraction of the points being outliers~\cite{kane_robust_2017}. Let us now review the results of~\cite{kane_robust_2017}.

We will now assume we wish to estimate a polynomial $p:[-1,1]\to\R$ of degree $d$, as this corresponds to the setting of~\cite{kane_robust_2017}. 
Note that for Hamiltonian learning, we will be interested in the case where the domain is of the form $[a,b]$ for $a,b\geq0$. 
However, we can simply shift and rescale the domain to $[-1,1]$. When we summarize our results later, we will dicuss the effect of this rescaling explicitly.
We will assume we are given access to $m$ random samples $(x_i,y_i)$ of points such that a fraction of at least $\alpha>1/2$ of them satisfies for some $\sigma>0$ that:
\begin{align}\label{equ:samples_noisy}
    p(x_i)=y_i+w_i,\quad |w_i|\leq\sigma.
\end{align}
There are results available for various different ways of sampling the points $x_i$. However, the best available sample complexity is given by sampling from the Chebyshev measure, which has density
\begin{align*}
    \frac{1}{\pi\sqrt{1-x^2}}
\end{align*}
on the interval $[-1,1]$.
We then have:
\begin{theorem}[Robust polynomial interpolation]\label{thm:poly_interpolation_robust}
Let $p:[-1,1]\to\R$ be a polynomial of degree $d$ and assume we are given $m$ samples $(x_i,y_i)$ such that a fraction $\alpha>1/2$ of them satisfies Eq.~\eqref{equ:samples_noisy} for some $\sigma>0$. Moreover, suppose that the $x_i$ were sampled independently and at random from the Chebyshev measure. Then for any $\delta>0$
\begin{align}\label{equ:number_samples_poly}
    m=\cO\left(d\log \left(\tfrac{d}{\delta}\right)\right)
\end{align}
samples suffice to with probability of success at least $1-\delta$ recover a polynomial $\hat{p}$ that satisfies:
\begin{align}\label{equ:infinity_norm_polynomial}
    \max\limits_{x\in[-1,1]}\left|p(x)-\hat{p}(x)\right|\leq3\sigma.
\end{align}
Moreover, $\hat{p}$ can be computed in time polynomial in the number of samples $m$.
\end{theorem}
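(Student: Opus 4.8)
The plan is to follow the two-stage ($\ell_1$ then $\ell_\infty$) regression strategy described in the main text and to reduce everything to a single uniform concentration statement for the Chebyshev sampling measure. The first move is a change of variables $x=\cos\theta$, which maps $[-1,1]$ to $[0,\pi]$, sends the Chebyshev measure to the uniform measure, and turns a degree-$d$ algebraic polynomial $q$ into an even trigonometric polynomial of degree $d$. This is convenient because trigonometric polynomials of bounded degree are Lipschitz with a constant controlled by their sup norm (Bernstein's inequality), which keeps the subsequent net/covering arguments clean and is ultimately what makes the sample count tight.

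The technical heart is a \emph{sampling lemma}: with $m=\cO(d\log(d/\delta))$ i.i.d. uniform samples $\theta_1,\dots,\theta_m$, with probability at least $1-\delta$ one has, for \emph{every} degree-$d$ trigonometric polynomial $q$,
\[ \tfrac12\|q\|_{L_1}\le \tfrac1m\textstyle\sum_i |q(\theta_i)|\le \tfrac32\|q\|_{L_1}, \]
together with the stronger robust statement that no subset of fewer than $m/2$ indices can carry more than, say, a third of the empirical $\ell_1$ mass, uniformly in $q$. I would prove the two-sided bound by (i) discretizing the $(2d+1)$-dimensional space of degree-$d$ polynomials with an $\eta$-net of size $e^{\cO(d)\log(1/\eta)}$, (ii) applying a Bernstein/Chernoff bound at each net point — the union bound over a net of size $\exp(\cO(d))$ is exactly what produces the $d\log(d/\delta)$ sample count — and (iii) extending from the net to all polynomials using the Bernstein Lipschitz bound. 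The no-mass-concentration refinement then follows by controlling, for each $q$, the empirical level sets $\{i:|q(\theta_i)|>t\}$ against the population level sets, using that a degree-$d$ polynomial can exceed a given multiple of its $L_1$ norm only on a set of small measure.

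Granting the sampling lemma, the robustness of the $\ell_1$ stage is short. Let $G$ be the good indices, $|G|\ge\alpha m>m/2$, and set $r=p_1-p$, a degree-$d$ polynomial. Using that $p_1$ minimizes $\sum_i|p_1(x_i)-y_i|$, the triangle inequality on good points, and the reverse triangle inequality on bad points, one obtains
\[ \textstyle\sum_{i\in G}|r(x_i)|\le 2\sigma|G|+\sum_{i\notin G}|r(x_i)|. \]
Adding the bad-point sum to both sides and invoking the no-mass-concentration property (which bounds $\sum_{i\notin G}|r(x_i)|$ by a fraction $\beta<1/2$ of $\sum_i|r(x_i)|$) yields $\frac1m\sum_i|r(x_i)|=\cO(\sigma)$, and the two-sided empirical/population equivalence then gives $\|p_1-p\|_{L_1}=\cO(\sigma)$. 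To upgrade this to the claimed dimension-free uniform bound $\|p-\hat p\|_\infty\le 3\sigma$, I would analyze the $\ell_\infty$ correction: on the majority (good) points the residuals $y_i-p_1(x_i)$ lie within $\cO(\sigma)$ of $(p-p_1)(x_i)$, so the $\ell_\infty$ minimizer $p_\infty$ fits the residual polynomial up to $\cO(\sigma)$ there; combined with the sampling guarantee and a Chebyshev/Markov-type extremal inequality (bounding the sup norm of a degree-$d$ polynomial by its values on a dense Chebyshev-distributed sample), this pins $\hat p=p_1+p_\infty$ to within $3\sigma$ of $p$ on all of $[-1,1]$. The polynomial-time claim is immediate, since both the $\ell_1$ and the $\ell_\infty$ fits are linear programs in the $d+1$ coefficients.

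I expect the main obstacle to be the sampling lemma, and specifically its robust lower bound: proving that the empirical $\ell_1$ mass is spread out enough that \emph{every} minority subset of samples captures at most a constant fraction of it, uniformly over all degree-$d$ polynomials, while keeping the sample count at $\cO(d\log(d/\delta))$ and the final constant dimension-free. The Chebyshev change of variables and Bernstein's inequality are precisely what make the net argument tight enough to reach the information-theoretic sample count; controlling the constant as $\alpha\to 1/2^+$ is the delicate part, since that is the regime in which the theorem operates at the information-theoretic threshold.
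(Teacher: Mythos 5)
Your algorithmic skeleton ($\ell_1$ fit followed by an $\ell_\infty$ correction, both solvable as linear programs) matches the Kane--Karmalkar--Price construction that the paper relies on (indeed, the paper does not prove this theorem itself; it cites \cite[Corollary 1.5]{kane_robust_2017} and only reviews the algorithm later). The genuine gap is in your central ``sampling lemma'' and, specifically, in the claim that a net-plus-Bernstein union bound yields $m=\cO(d\log(d/\delta))$. Per-polynomial concentration of the raw empirical $\ell_1$ mass cannot be better than $e^{-\cO(m/d)}$: after your $\cos$ change of variables, a Fej\'er-type kernel $F_d$ has $\|F_d\|_{L_1}=1$ but $\|F_d\|_\infty=\Theta(d)$, with a constant fraction of its $L_1$ mass in a window of width $\Theta(1/d)$; the event that no sample lands in that window has probability $e^{-\Theta(m/d)}$ and makes the empirical mean drop by a constant factor. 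Since any $1/2$-net of the $(d+1)$-dimensional space of normalized degree-$d$ polynomials has size $e^{\Omega(d)}$, the union bound forces $m=\Omega(d^2)$ --- which is exactly the \emph{uniform}-sampling rate the paper mentions in passing, not the Chebyshev rate claimed in the theorem. Note also that your own arithmetic is inconsistent: a net of size $e^{\cO(d)}$ combined with an $e^{-cm}$ tail would give $m=\cO(d+\log(1/\delta))$, and combined with the true $e^{-cm/d}$ tail gives $m=\cO(d^2+d\log(1/\delta))$; neither produces $d\log(d/\delta)$.

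The mechanism that actually achieves $\cO(d\log(d/\delta))$ in \cite{kane_robust_2017} is different in kind: the interval $[-1,1]$ is split into $\Theta(d)$ Chebyshev partitions, each of Chebyshev measure $\Theta(1/d)$, chosen so that a degree-$d$ polynomial has bounded relative variation on each; the $\ell_1$ regression is the \emph{partition-weighted} one, $\min_{\hat p}\sum_j|I_j|\operatorname{mean}_{x_i\in I_j}|y_i-\hat p(x_i)|$, and the $\ell_\infty$ step takes \emph{medians} within each partition. The $d\log(d/\delta)$ count is a coupon-collector/binomial bound --- every one of the $\Theta(d)$ intervals must receive $\Omega(\log(d/\delta))$ samples, a majority of them uncorrupted --- with a union bound over $\cO(d)$ intervals, not over an exponentially large net. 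This partition weighting is also what neutralizes the Fej\'er obstruction (a cluster of samples in one tiny interval is weighted by $|I_j|$, so it cannot dominate the empirical norm), a protection your raw empirical mean $\tfrac1m\sum_i|q(x_i)|$ does not have. Two smaller points: your single-pass $\ell_\infty$ step only gives $\|\hat p-p\|_\infty\leq 2.5\sigma+\tfrac12\|p-p_1\|_\infty$ in the KKP analysis, so it must be iterated $\cO(\log d)$ times on residuals to reach the dimension-free constant $3\sigma$; and with adversarial (rather than i.i.d.) outlier positions your ``no minority subset carries more than a third of the mass'' property is itself a uniform statement over polynomials and subsets that your sketch does not establish.
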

\begin{proof}
We refer to~\cite[Corollary 1.5]{kane_robust_2017} for a proof and note that we obtain the statement by setting the parameter $\epsilon=1/2$ in their statement.
\end{proof}
We note that the same result holds for random points picked from the uniform measure with $m=\cO(d^2)$.

The result above solves our problem of robust polynomial interpolation outlined in points 1 and 2. 
It shows that it if we can ensure that we can approximate sufficiently many points of the polynomial up to some $\sigma$, then we also recover the whole polynomial up to some error proportional to $\sigma$. 
Moreover, the number of sampled required only has a logarithmic overhead in $d$ when compared with the case where we know the points exactly. As we will see later, for our puproses it will be important to choose $d$ to be small. Thus, in a nutshell, we see that Thm.~\ref{thm:poly_interpolation_robust} ensures that we can reliably and robustly perform polynomial interpolation by only a small overhead when compared to when we know the points exactly.

We will later describe in more detail the algorithm given in~\cite{kane_robust_2017} whose output satisfies the promises of Thm.~\ref{thm:poly_interpolation_robust}. However, before that we will show how the condition in Eq.~\eqref{equ:infinity_norm_polynomial} ensures that we can also recover the derivative of the polynomial as long as the degree $d$ is small.

To do that, we will resort to Markov brothers' inequality, which we restate now for completeness.

\begin{lem}[Markov brothers' inequality]\label{lem:Markov_brother}
For $d,k\in\N$ define the constant $C_M(d,k)$ to be given by
\begin{align}
    C_M(d,k)=\frac{d^{2}\left(d^{2}-1^{2}\right)\left(d^{2}-2^{2}\right) \cdots\left(d^{2}-(k-1)^{2}\right)}{1 \cdot 3 \cdot 5 \cdots(2 k-1)}.
\end{align}
Then for any polynomial $p$ of degree $d$ we have that:
\begin{align}\label{equ:Markov_brothers}
    \max\limits_{x\in[-1,1]}\left|p^{(k)}(x)\right|\leq C_M(d,k)\max\limits_{x\in[-1,1]}\left|p(x)\right|.
\end{align}
\end{lem}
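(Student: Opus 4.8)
The plan is to prove Lemma~\ref{lem:Markov_brother} by identifying the Chebyshev polynomial $T_d$ (fixed by $T_d(\cos\theta)=\cos(d\theta)$) as the extremizer and reducing the claim to the value of its derivatives at the endpoint $x=1$. First I would use homogeneity in $p$ to normalize: it suffices to show that if $\deg p\le d$ and $\max_{x\in[-1,1]}|p(x)|\le 1$, then $\max_{x\in[-1,1]}|p^{(k)}(x)|\le C_M(d,k)$, with equality for $p=T_d$. Thus the inequality~\eqref{equ:Markov_brothers} follows once I show both that $T_d$ attains the constant and that no admissible $p$ exceeds it.

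The second step pins down the constant. Using $T_d(\cos\theta)=\cos(d\theta)$ (or the explicit Gegenbauer/power-series form of $T_d$), a direct computation gives the closed form
\begin{equation}
T_d^{(k)}(1)=\prod_{j=0}^{k-1}\frac{d^2-j^2}{2j+1}=\frac{d^2(d^2-1^2)\cdots(d^2-(k-1)^2)}{1\cdot 3\cdots(2k-1)}=C_M(d,k),
\end{equation}
which for $k=1$ reduces to the familiar $T_d'(1)=d^2$. This simultaneously exhibits $T_d$ attaining the bound at $x=1$ and identifies the correct constant as $T_d^{(k)}(1)$. A companion fact I would record here is that $|T_d^{(k)}|$ attains its maximum over $[-1,1]$ at the endpoints $x=\pm1$, so that $\max_{[-1,1]}|T_d^{(k)}|=C_M(d,k)$; for $k=1$ this is immediate from $T_d'=d\,U_{d-1}$ together with $\max_{[-1,1]}|U_{d-1}|=U_{d-1}(1)=d$, where $U_{d-1}$ is the Chebyshev polynomial of the second kind.

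The heart of the argument is the extremality of $T_d$. For the first derivative ($k=1$) this is A.~A.~Markov's inequality, which admits a short self-contained proof: if $\max_{[-1,1]}|p'|>d^2$ somewhere, one compares $p$ against $T_d$ using the $d+1$ equioscillation points of $T_d$ and counts sign changes to force $\deg p>d$, a contradiction. A key ingredient, valid for all $k$, is the Chebyshev growth lemma: $\max_{[-1,1]}|p|\le1$ forces $|p(x)|\le|T_d(x)|$ for all $|x|\ge1$, again by root counting against the equioscillation points. For $k\ge2$ the statement is V.~A.~Markov's theorem, and this is precisely where the real difficulty lies: one must control all intermediate derivatives simultaneously, and the short equioscillation argument no longer suffices. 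The standard route (see Borwein--Erd\'elyi) upgrades the growth lemma to the derivative comparison $|p^{(k)}(x)|\le|T_d^{(k)}(x)|$ for $|x|\ge1$ and then exploits structural properties of $T_d^{(k)}$ -- in particular the endpoint-maximality of its modulus recorded above -- to transfer the bound to the whole interval and conclude $\max_{[-1,1]}|p^{(k)}|\le T_d^{(k)}(1)=C_M(d,k)$.

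I expect this extremality step for $k\ge2$ to be the main obstacle, being genuinely more delicate than the $k=1$ case. Since our application needs only the inequality with the stated constant, I would at this point invoke the classical theorem rather than reproduce its proof: the only facts the remainder of the paper uses are the bound~\eqref{equ:Markov_brothers} and the identity $C_M(d,k)=T_d^{(k)}(1)$, both of which are pinned down by the computation above.
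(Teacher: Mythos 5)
Your proposal is correct and, in substance, does what the paper does: the paper's entire proof is a citation to the classical literature (Theorem 1.2 of the Rassias reference), and you likewise defer the genuinely hard step --- V.~A.~Markov's extremality argument for $k\geq 2$ --- to that literature after a correct outline of the standard Chebyshev-comparison route and a correct identification of the constant as $C_M(d,k)=T_d^{(k)}(1)$. The supplementary details you record (the $k=1$ equioscillation argument, the endpoint maximality of $|T_d^{(k)}|$, and the growth lemma) are accurate and consistent with the inequality~\eqref{equ:Markov_brothers} as used later in the paper.
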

\begin{proof}
We refer to~\cite[Theorem 1.2]{rassias_mathematical_2016} for a proof and discussion of this result.
\end{proof}
Note that the value of $CM(d,k)$ increases exponentially with $d$ for $k$ constant.
We remark that having further promises on the structure of the polynomial, such as the location of its zeros, can greatly improve this estimate. We once again refer to~\cite[Chapter 1]{rassias_mathematical_2016} for a discussion on this. It would be interesting to see if recent results on the analyticity of the partition function~\cite{2108.04842} could be used in our context to also improve this estimate, as it grows exponentially with $d$. However, this general bound will suffice for our purposes.

It is easy to see that for polynomials defined on some interval $[a,b]$, the polynomial $\tilde{p}(x)=p(\tfrac{b-a}{2}x+\tfrac{a+b}{2})$ is defined on $[-1,1]$ and we can use this simple transformation to obtain a variation of Eq.~\eqref{equ:Markov_brothers} for polynomials defined on general intervals. Indeed, applying Eq.~\eqref{equ:Markov_brothers} to $\tilde{p}$, it follows from a straightforward application of the chain rule that
\begin{align}\label{equ:Markov_brothers_shifted}
    \max\limits_{x\in[a,b]}\left|p^{(k)}(x)\right|\leq\left|\frac{2}{(b-a)}\right|^k C_M(d,k)\max\limits_{x\in[a,b]}\left|p(x)\right|.
\end{align}
From this we conclude that:
\begin{lem}[Extrapolating the derivative at $0$]\label{prop:derivative_0}
Let $p:[0,b]\to\R$ be a polynomial of degree $d$ such that for some $\epsilon>0$ and $0<a<b$:
\begin{align}\label{equ:poly_small}
    \max\limits_{x\in[a,b]}|p(x)|\leq \epsilon.
\end{align}
Then
\begin{align}\label{equ:value_poly_small_rest_interval}
    \max\limits_{x\in[0,a]}|p(x)|\leq \epsilon\left(\sum\limits_{k=0}^{d}\left|\frac{2}{(b-a)}\right|^ka^{k} \frac{C_M(d,k)}{(k)!}\right)
\end{align}
and
\begin{align}\label{equ:dervative_at_0}
    |p'(0)|\leq \epsilon\left(\sum\limits_{k=1}^{d}\left|\frac{2}{b-a}\right|^ka^{k-1} \frac{C_M(d,k)}{(k-1)!}\right).
\end{align}
\end{lem}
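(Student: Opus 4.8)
The plan is to control $p$ on the uncontrolled interval $[0,a]$ by extrapolating from $[a,b]$, where we already have a uniform bound, using the \emph{exact} (terminating) Taylor expansion of $p$ about the endpoint $a$ together with the derivative bounds furnished by the shifted Markov brothers' inequality~\eqref{equ:Markov_brothers_shifted}.

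First I would bound the derivatives of $p$ at the single point $a$. Since $a\in[a,b]$, combining the hypothesis~\eqref{equ:poly_small} with~\eqref{equ:Markov_brothers_shifted} gives, for every $0\le k\le d$,
\begin{align*}
    |p^{(k)}(a)|\le \max\limits_{x\in[a,b]}|p^{(k)}(x)|\le \left|\frac{2}{b-a}\right|^k C_M(d,k)\,\epsilon.
\end{align*}
Next, because $p$ has degree $d$, its Taylor expansion about $a$ truncates and is an exact identity,
\begin{align*}
    p(x)=\sum\limits_{k=0}^{d}\frac{p^{(k)}(a)}{k!}(x-a)^k .
\end{align*}
For $x\in[0,a]$ we have $|x-a|=a-x\le a$, so the triangle inequality and the derivative bound above yield
\begin{align*}
    |p(x)|\le \sum\limits_{k=0}^{d}\frac{|p^{(k)}(a)|}{k!}\,a^k\le \epsilon\left(\sum\limits_{k=0}^{d}\left|\frac{2}{b-a}\right|^k a^k\frac{C_M(d,k)}{k!}\right),
\end{align*}
which is precisely~\eqref{equ:value_poly_small_rest_interval}.

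For the derivative at $0$ I would differentiate the exact Taylor identity term by term and evaluate at $x=0$,
\begin{align*}
    p'(0)=\sum\limits_{k=1}^{d}\frac{p^{(k)}(a)}{(k-1)!}(-a)^{k-1},
\end{align*}
and then take absolute values, using $|{-a}|^{k-1}=a^{k-1}$ and the same derivative bound, to obtain
\begin{align*}
    |p'(0)|\le \sum\limits_{k=1}^{d}\frac{|p^{(k)}(a)|}{(k-1)!}\,a^{k-1}\le \epsilon\left(\sum\limits_{k=1}^{d}\left|\frac{2}{b-a}\right|^k a^{k-1}\frac{C_M(d,k)}{(k-1)!}\right),
\end{align*}
establishing~\eqref{equ:dervative_at_0}.

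I do not expect a genuinely hard step here; the proof is a bookkeeping exercise. The only points demanding care are that one must expand about the endpoint $a$ of the \emph{controlled} interval (rather than about $0$), so that the Markov-type derivative bounds are valid at the expansion point, and that for a degree-$d$ polynomial the Taylor series terminates exactly, so no remainder term appears and the displayed identities are exact rather than approximate. I would also flag that the resulting bound is loose, since $C_M(d,k)$ grows exponentially in $d$; this is exactly why, as noted after Lemma~\ref{lem:Markov_brother}, keeping the degree $d$ small is essential when these estimates are used downstream.
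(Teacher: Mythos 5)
Your proof is correct and follows essentially the same route as the paper's: bound $|p^{(k)}(a)|$ via the shifted Markov brothers' inequality~\eqref{equ:Markov_brothers_shifted}, expand $p$ exactly in its (terminating) Taylor series about the endpoint $a$, and apply the triangle inequality for both~\eqref{equ:value_poly_small_rest_interval} and~\eqref{equ:dervative_at_0}. As a minor aside, your sign $(-a)^{k-1}=(-1)^{k-1}a^{k-1}$ in the derivative identity is the correct one (the paper writes $(-1)^{k}$, a harmless typo since absolute values are taken immediately afterwards).
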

\begin{proof}
It follows from Eq.~\eqref{equ:poly_small} and Markov brothers' inequality that 
\begin{align}\label{equ:derivatives_small}
   \left|p^{(k)}(a)\right|\leq\left|\frac{2}{b-a}\right|^k C_M(d,k)\epsilon.
\end{align}
By a Taylor expansion we know that for $x\in[0,a]$:
\begin{align}\label{equ:taylor_original}
    p(x)=\sum\limits_{k=0}^{d}p^{(k)}(a)\frac{(x-a)^{k}}{k!}.
\end{align}
The claim in Eq.~\eqref{equ:value_poly_small_rest_interval} then follows by combining this expansion with Eq.~\eqref{equ:derivatives_small} and a triangle inequality. Similarly we have
\begin{align}\label{equ:taylor_p}
    p'(0)=\sum\limits_{k=1}^{d}p^{(k)}(a)(-1)^{k}\frac{a^{k-1}}{(k-1)!},
\end{align}
for which a similar argument yields Eq.~\eqref{equ:dervative_at_0}.
\end{proof}
The proposition above essentially allows us to control to what extent the derivative of a polynomial at $0$ can deviate from $0$ given that the polynomial is small on another interval $[a,b]$. We can then apply it to the polynomial $p-\hat{p}$, as in Eq.~\eqref{equ:infinity_norm_polynomial} to control the error we make by estimating the derivative at $0$ by evaluating $\hat{p}(0)$.

By combining the arguments above we conclude that:
\begin{prop}[Precision and number of samples for robust interpolation]\label{prop:precision_robust}
Let $p$ be a polynomial of degree $d$. For some $0<a<b$ define 
\begin{align}\label{equ:definition_ead}
    E(a,b,d)=\left(\sum\limits_{k=1}^{d}\left|\frac{2}{b-a}\right|^ka^{k-1} \frac{C_M(d,k)}{(k-1)!}\right).
\end{align}
Then for $\delta>0$ sampling 
\begin{align}
    m=\cO(d\log(d\delta^{-1}))
\end{align}
i.i.d. points $(x_i,y_i)$ from the Chebyshev measure on $[a.b]$ satisfying
\begin{align*}
    p(x_i)=y_i+w_i,\quad |w_i|\leq\sigma 
\end{align*}
for at least a fraction $\alpha>\tfrac{1}{2}$ of the points is sufficient to obtain a polynomial $\hat{p}$ satisfying
\begin{align}\label{equ:final_estimate_derivative}
    \left|p'(0)-\left(\hat{p}\right)'(0)\right|\leq3\sigma E(a,b,d).
\end{align}
\end{prop}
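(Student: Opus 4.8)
The plan is to combine the two tools just established: the robust interpolation guarantee of Theorem~\ref{thm:poly_interpolation_robust}, which controls the sup-norm of $p-\hat p$ on the sampling interval, and the derivative-extrapolation bound of Lemma~\ref{prop:derivative_0}, which I would then apply to the difference $p-\hat p$.

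First I would reconcile the fact that Theorem~\ref{thm:poly_interpolation_robust} is phrased on $[-1,1]$ whereas our data live on $[a,b]$. Introduce the affine reparametrization $\phi(x)=\tfrac{b-a}{2}x+\tfrac{a+b}{2}$, which maps $[-1,1]$ bijectively onto $[a,b]$ and pushes the Chebyshev measure on $[-1,1]$ forward to the Chebyshev measure on $[a,b]$. Setting $\tilde p=p\circ\phi$ and relabeling the samples accordingly, the promise that $p(x_i)=y_i+w_i$ with $|w_i|\le\sigma$ holds for a fraction $\alpha>\tfrac12$ of the points becomes exactly the hypothesis of Theorem~\ref{thm:poly_interpolation_robust} for $\tilde p$, which still has degree $d$ since an affine substitution preserves degree. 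Hence, with $m=\cO(d\log(d\delta^{-1}))$ samples we recover, with probability at least $1-\delta$, a degree-$d$ polynomial $\tilde{\hat p}$ obeying $\max_{x\in[-1,1]}|\tilde p(x)-\tilde{\hat p}(x)|\le 3\sigma$. Undoing the reparametrization, i.e. setting $\hat p=\tilde{\hat p}\circ\phi^{-1}$ (again of degree $d$) and using that the maximum over an interval is invariant under the relabeling of its points, yields
\[
\max_{x\in[a,b]}|p(x)-\hat p(x)|\le 3\sigma.
\]

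Next I would apply Lemma~\ref{prop:derivative_0} to the polynomial $q=p-\hat p$. As $p$ and $\hat p$ both have degree at most $d$, so does $q$, and the display above is precisely hypothesis~\eqref{equ:poly_small} of that lemma with $\epsilon=3\sigma$; note that $0<a$, so $0$ lies outside the sampling window and we are genuinely extrapolating the derivative, which is exactly the regime the lemma covers. Invoking Eq.~\eqref{equ:dervative_at_0} then gives
\[
|q'(0)|\le 3\sigma\left(\sum_{k=1}^{d}\left|\frac{2}{b-a}\right|^k a^{k-1}\frac{C_M(d,k)}{(k-1)!}\right)=3\sigma\,E(a,b,d).
\]
Since differentiation is linear, $q'(0)=p'(0)-(\hat p)'(0)$, which is the desired bound~\eqref{equ:final_estimate_derivative}, valid on the event that the interpolation succeeds, i.e. with probability at least $1-\delta$.

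Each individual step is short, so the only point requiring genuine care is the rescaling bookkeeping: one must verify that $\phi$ transports the Chebyshev measure to the Chebyshev measure—so that the sampling hypothesis of Theorem~\ref{thm:poly_interpolation_robust} is met in the new coordinates—and that $\hat p$ inherits degree exactly $d$, so that Markov brothers' inequality underlying Lemma~\ref{prop:derivative_0} is invoked with the correct degree and the constant $C_M(d,k)$ is the right one. I expect this bookkeeping, rather than any new analytic estimate, to be the main (and fairly modest) obstacle, since all the quantitative content is already packaged in Theorem~\ref{thm:poly_interpolation_robust} and Lemma~\ref{prop:derivative_0}.
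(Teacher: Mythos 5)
Your proposal is correct and follows essentially the same route as the paper: invoke Theorem~\ref{thm:poly_interpolation_robust} (after the affine rescaling to $[-1,1]$ that the paper handles via Eq.~\eqref{equ:Markov_brothers_shifted}) to get $\max_{x\in[a,b]}|p(x)-\hat p(x)|\leq 3\sigma$, then apply Lemma~\ref{prop:derivative_0} to $p-\hat p$. Your write-up is in fact more careful than the paper's two-line proof about the measure-pushforward and degree-preservation bookkeeping, but the substance is identical.
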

\begin{proof}
From Thm.~\ref{thm:poly_interpolation_robust} we know that this number of samples suffices to obtain the polynomial $\hat{p}$ satisfying 
\begin{align*}
   \max\limits_{x\in[a,b]} \left|p(x)-\hat{p}(x)\right|\leq 3\sigma.
\end{align*}
Applying Lemma~\ref{prop:derivative_0} to $p-\hat{p}$ yields the claim.
\end{proof}
This then yields a simple condition on how small $\sigma$ has to be in the regime of interest to us:
\begin{cor}\label{cor:obtaining_derivative_error}
In the same setting as in Prop.~\ref{prop:precision_robust} for some $\epsilon>0$ let $a\leq d^{-2}$, $b=2+a$ and
\begin{align}\label{equ:error_degree}
   \sigma=\epsilon d^{-2}.  
\end{align}
Then 
\begin{align}
|p'(0)-\hat{p}'(0)|=\cO(\epsilon).
\end{align}
\end{cor}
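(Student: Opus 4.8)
The plan is to bound the quantity $E(a,b,d)$ from Proposition~\ref{prop:precision_robust} under the stated choices of $a,b,\sigma$, and then combine this with the guarantee $|p'(0)-\hat{p}'(0)|\leq 3\sigma E(a,b,d)$. Since $b=2+a$ forces $b-a=2$, the prefactor $\left|\tfrac{2}{b-a}\right|^k$ collapses to $1$ for every $k$, removing one source of growth in the sum. The remaining task is to show that what is left is $\cO(d^2)$, so that multiplying by $\sigma=\epsilon d^{-2}$ yields $\cO(\epsilon)$.

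First I would control the Markov constant $C_M(d,k)$. Bounding each of the $k$ factors $d^2-j^2$ in its numerator by $d^2$ gives the clean estimate $C_M(d,k)\leq d^{2k}/(2k-1)!!$, where $(2k-1)!!=1\cdot 3\cdots(2k-1)$ is exactly the denominator appearing in the definition of $C_M(d,k)$. Next I would insert the bound $a^{k-1}\leq d^{-2(k-1)}$ coming from $a\leq d^{-2}$. The crucial cancellation is that $d^{-2(k-1)}\cdot d^{2k}=d^2$ independently of $k$, so each term of $E(a,b,d)$ is at most $d^2/\big((2k-1)!!\,(k-1)!\big)$.

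Factoring out the $d^2$, the claim reduces to showing that $\sum_{k=1}^{d}\frac{1}{(2k-1)!!\,(k-1)!}$ is bounded by an absolute constant. This is immediate: since $(2k-1)!!\geq 1$, each term is at most $1/(k-1)!$, so the series is dominated by $\sum_{j=0}^{\infty}1/j!=e$ and is therefore $\cO(1)$ uniformly in $d$. Hence $E(a,b,d)=\cO(d^2)$, and consequently $3\sigma E(a,b,d)\leq 3\epsilon d^{-2}\cdot\cO(d^2)=\cO(\epsilon)$, which is the desired conclusion.

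I do not anticipate a genuine obstacle here; once the substitutions $b-a=2$, $a\leq d^{-2}$, and $\sigma=\epsilon d^{-2}$ are made, the statement is essentially a bookkeeping estimate. The only point requiring care is the factor-by-factor bound on $C_M(d,k)$ together with verifying that the powers of $d$ cancel \emph{exactly} to leave $d^2$ rather than some growing power of $d$. The specific scaling $a\leq d^{-2}$ is precisely what is engineered to produce this cancellation against the exponential-in-$d$ growth of the Markov constant, so it is the balance between these two that makes the final bound independent of the degree.
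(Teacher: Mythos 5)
Your proof is correct and follows essentially the same route as the paper's: with $b-a=2$ the prefactor $\left|\tfrac{2}{b-a}\right|^k$ disappears, the bound $C_M(d,k)\leq d^{2k}/(2k-1)!!$ is invoked, the powers of $d$ cancel to leave a single factor of $d^2$, the residual series is dominated by $\sum_j 1/j! = e$, and multiplying by $\sigma=\epsilon d^{-2}$ gives $\cO(\epsilon)$. If anything, your step $a^{k-1}\leq d^{-2(k-1)}$ is slightly more careful than the paper's, which factors the terms as $a^{-1}(ad^2)^k$ and ends with the bound $3e\sigma a^{-1}$, an estimate that yields $\cO(\epsilon)$ only when $a=\Theta(d^{-2})$ rather than for all $a\leq d^{-2}$ as the statement allows.
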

\begin{proof}
It is easy to see that we have:
\begin{align*}
    C_M(d,k)\leq \frac{d^{2k}}{k!!},
\end{align*}
where $k!!=1\times 3\times 5\times\cdots\times (2k-1)$ is the double factorial.

Thus, we see from this and Eq.~\eqref{equ:dervative_at_0} that by our choice
\begin{align*}
    a=\frac{1}{d^2},\quad b=2+a
\end{align*}
we have from Eq.~\eqref{equ:final_estimate_derivative} that the estimated polynomial $\hat{p}(0)$ satisfies
\begin{align*}
    \left|\hat{p}'(0)-p'(0)\right|\leq 3\sigma \left(\sum\limits_{k=1}^{d}\left|\frac{2}{(a-b)}\right|^ka^{-1} \frac{(a d^2)^k}{k!!(k-1)!}\right)\leq 3\sigma \left(\sum\limits_{k=1}^{d}a^{-1} \frac{1}{k!!(k-1)!}\right),
\end{align*}
where we used the fact that $a\leq d^{-2}$.

Thus, as 
\begin{align*}
    \left(\sum\limits_{k=1}^{d}\frac{1}{k!!(k-1)!}\right)\leq e,
\end{align*}
we conclude that with this choice of parameters we have
\begin{align*}
    \left|\hat{p}'(0)-p'(0)\right|\leq 3e\sigma a^{-1},
\end{align*}
which gives the claim.
\end{proof}

We will discuss in Sec.~\ref{sec:performance_guarantee} how to specialize the discussion and results above to the scenario of Hamiltonian learning.

\section{Choice of parameters and performance guarantee of the protocol}\label{sec:performance_guarantee}
Let us now combine the results from Sections~\ref{sec:quality_approx_poly} and~\ref{sec:robust_interpolation} to see how to pick the various parameters of the algorithm to ensure a good recovery of the couplings of the Hamiltonian. 

More precisely, given a coupling parameter $a_{i,j}$ of $\cL$ we will be interested in estimating the sample complexity of obtaining an estimate $\hat{a}_{i,j}$ satisfying
\begin{align}\label{equ:looerror}
    \left|a_{i,j}-\hat{a}_{i,j}\right|\leq\epsilon
\end{align}
with high probability for some given error $\epsilon$. 
As extensively discussed by now, we can easily reduce estimating the couplings to estimating derivatives of time evolutions of local observables.

As expected, we will see that the main parameters we need to control are the maximal observation time $t_{\max}$ and the initial time of measurement $t_0$.
This is showcased in the following Theorem:
\begin{thm}[Choice of final and initial time]\label{thm:choice_initial_final}
Let $\cL_{\Gamma}$ be a locally bounded Lindbladian on a $D$-dimensional regular lattice with $g=\cO(1)$ growth constant. Let $O$ be an observable of constant support and $\rho$ and arbitrary quantum state.
Let $\epsilon>0$ be given. Assume that $\cL_{\Gamma}$ satisfies Eq.~\eqref{equ:LRexpo} for some function $h$. Then picking $t_0$ as
\begin{align}
    t_0=\cO\left[\left(h^{-1}\left(\frac{\epsilon}{2(e^{2.5v}-1)}\right)^{D}\log(\epsilon^{-1})\right)^{-2}\right]
\end{align}
and $t_{\max}=2+t_0$ and measuring the expectation value $f(t)=\tr{e^{t\cL_{\Gamma}}(O)\rho}$ for 
\begin{align*}
    m=\tcO\left[\left(h^{-1}\left(\frac{\epsilon}{2(e^{2.5v}-1)}\right)\right)^{D}\log(\epsilon^{-1}))\right]
\end{align*}
random times $t_i\in[t_0,t_{\max}]$ up to precision $\cO(\epsilon)$ is sufficient to obtain an estimate $(\hat{f})'(0)$ satisfying
\begin{align}\label{equ:final_error}
    \left|(\hat{f})'(0)-\tr{\cL_{\Gamma}(O)\rho}\right|\leq 3e\epsilon t_0^{-1}=\cO\left[\epsilon\left(h^{-1}\left(\frac{\epsilon}{2(e^{2.5v}-1)}\right)^{D}\log(\epsilon^{-1})\right)^{2}\right].
\end{align}
In particular, this estimate can be obtained from 
\begin{align}
    \tcO(\epsilon^{-2}\log(\delta^{-1}))
\end{align}
samples from the time evolved state $\rho$ with probability of success at least $1-\delta$.
\end{thm}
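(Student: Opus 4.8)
The plan is to chain the polynomial-approximation result of Theorem~\ref{thm:polynomialswork} together with the robust interpolation and derivative-extraction estimates of Section~\ref{sec:robust_interpolation}, the only real work being a careful bookkeeping of the three error budgets (polynomial approximation, shot noise, and extrapolation of the derivative) and a self-consistent choice of the time window.

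First I would fix the observation window. Since $t_0$ will be taken small, we may assume $t_0\leq 1/2$, so that $t_{\max}=2+t_0\leq 5/2$ and hence $e^{vt_{\max}}-1\leq e^{2.5v}-1$; this is precisely the origin of the constant $2.5v$ in the argument of $h^{-1}$. Applying Theorem~\ref{thm:polynomialswork} to $f(t)=\tr{e^{t\cL_\Gamma}(O)\rho}$ with this $t_{\max}$ and approximation error $\epsilon$ yields a polynomial $p$ of degree
\begin{align*}
d=\tcO\left[\left(h^{-1}\left(\tfrac{\epsilon}{2(e^{2.5v}-1)}\right)\right)^{D}\log(\epsilon^{-1})\right]
\end{align*}
with $|f(t)-p(t)|\leq\epsilon$ on $[0,t_{\max}]$ and, crucially, $p'(0)=\tr{\cL_\Gamma(O)\rho}=f'(0)$. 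Observe that $d$ depends on $t_0$ only through $t_{\max}=2+t_0$, and any $t_0\leq 1/2$ yields the same bound, so there is no genuine circularity: one computes $d$ from the display above and only afterwards sets $t_0$.

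Next I would pin down the initial time by the hypothesis of Corollary~\ref{cor:obtaining_derivative_error}, namely $a=t_0\leq d^{-2}$, which forces
\begin{align*}
t_0=\cO\left[\left(h^{-1}\left(\tfrac{\epsilon}{2(e^{2.5v}-1)}\right)^{D}\log(\epsilon^{-1})\right)^{-2}\right],
\end{align*}
and set $b=t_{\max}=2+t_0$. I would then sample $m=\cO(d\log(d\delta^{-1}))$ times $t_i$ from the Chebyshev measure on $[t_0,t_{\max}]$ and estimate each $f(t_i)$ to precision $\cO(\epsilon)$ using $\cO(\epsilon^{-2}\log(m\delta^{-1}))$ measurement shots of the bounded observable $O$; a Hoeffding bound with a union bound over the $m$ points guarantees that, with probability at least $1-\delta$, \emph{every} estimate $y_i$ satisfies $|y_i-f(t_i)|=\cO(\epsilon)$. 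By the triangle inequality $|y_i-p(t_i)|\leq|y_i-f(t_i)|+|f(t_i)-p(t_i)|=\cO(\epsilon)=:\sigma$, so the hypothesis of Theorem~\ref{thm:poly_interpolation_robust} holds with fraction $\alpha=1$ (we sit in the favorable regime with no adversarial outliers, yet still invoke the robust method to obtain a uniform $\ell_\infty$ guarantee from only $\tcO(d)$ points). This produces $\hat p$ with $\max_{x\in[t_0,t_{\max}]}|p(x)-\hat p(x)|\leq 3\sigma$, and I would output $(\hat f)'(0):=\hat p'(0)$.

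Finally I would apply the extrapolation estimate (Lemma~\ref{prop:derivative_0} and Corollary~\ref{cor:obtaining_derivative_error}) to the degree-$d$ polynomial $p-\hat p$ with $a=t_0\leq d^{-2}$, $b=2+t_0$, obtaining $|p'(0)-\hat p'(0)|\leq 3e\sigma t_0^{-1}=\cO(\epsilon\,t_0^{-1})$; substituting $p'(0)=f'(0)$ and the value of $t_0^{-1}$ gives exactly the claimed bound~\eqref{equ:final_error}. The total measurement count is $m\times\cO(\epsilon^{-2}\log(m\delta^{-1}))=\tcO(\epsilon^{-2}\log(\delta^{-1}))$, since $m$ is polylogarithmic in $\epsilon^{-1}$ and is absorbed into $\tcO$. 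The step demanding care — and the genuine content of the theorem — is the compatibility of two opposing requirements: the extrapolation error $E(a,b,d)$ in~\eqref{equ:definition_ead} contains the Markov-brothers factors $C_M(d,k)\sim d^{2k}$, which blow up exponentially in $d$, and it is \emph{only} the choice $a=t_0\leq d^{-2}$ that cancels these against the powers $a^{k}$, collapsing the sum to a convergent series bounded by $e$. Securing this cancellation while simultaneously keeping $t_{\max}=\Theta(1)$ (so that $d$ stays small) is the crux; everything else is substitution into the previously established estimates.
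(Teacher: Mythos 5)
Your proposal is correct and follows essentially the same route as the paper's own proof: invoke Thm.~\ref{thm:polynomialswork} with $t_{\max}=2+t_0\leq 2.5$ to get a low-degree approximating polynomial with $p'(0)=\tr{\cL_\Gamma(O)\rho}$, feed the shot-noise-corrupted estimates into the robust interpolation guarantee of Thm.~\ref{thm:poly_interpolation_robust}, and then use Cor.~\ref{cor:obtaining_derivative_error} (Markov brothers' inequality with $a=t_0\leq d^{-2}$ cancelling the $C_M(d,k)\sim d^{2k}$ factors) to convert the $\ell_\infty$ closeness of $p-\hat p$ into the derivative bound $3e\sigma t_0^{-1}$. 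Your treatment is in fact somewhat more careful than the paper's, e.g. in explicitly resolving the apparent circularity in choosing $t_0$ after $d$ and in spelling out the Hoeffding-plus-union-bound step, but these are refinements of the same argument rather than a different proof.
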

\begin{proof}
First, note that by Lemma.~\ref{cor:expectation_values_low_degree_poly}, if we pick $t_{\max}$ as described above, then a polynomial $p$ of degree 
\begin{align}
    d=\cO\left[\left(h^{-1}\left(\frac{\epsilon}{2(e^{2.5v}-1)}\right)\right)^{D}\log(\epsilon^{-1}))\right]
\end{align}
is sufficient to approximate the expectation value in the interval $[0,t_{\max}]$ up to an error $\epsilon/2$, as $t_{\max}\leq 2.5$. 
We will estimate the value of the polynomial at each point up to an error $\sigma>0$, which is to be determined later.

Thus, by inserting the bound on the degree $d$ in Eq.~\eqref{equ:error_degree}, we need to estimate each value of the polynomial up to a precision $\sigma=\cO(\epsilon)$ to obtain an overall error of
\begin{align}\label{equ:error_estimate_h}
\cO\left[\epsilon\left(h^{-1}\left(\frac{\epsilon}{2(e^{2.5v}-1)}\right)^{D}\log(\epsilon^{-1})\right)^{2}\right]
\end{align}

As we imposed that the precision with which the polynomial approximates the expectation values is $\epsilon/2$, we can estimate the value of the polynomial for a given time up to an error $\cO(\epsilon)$ from $\cO(\epsilon^{-2})$ samples.

As we have to sample
\begin{align*}
    \cO(d\log(d))=\tcO\left[\left(h^{-1}\left(\frac{\epsilon}{2(e^{2.5v}-1)}\right)^{D}\log(\epsilon^{-1})\right)\right]
\end{align*}
points to perform the stable interpolation, we obtain the advertised sample complexity.
\end{proof}

For the case of strictly local or exponentially decaying interactions we have that 
\begin{align}
    h^{-1}\left(\frac{\epsilon}{2(e^{2.5v}-1)}\right)=\textrm{poly log}(\epsilon^{-1}).
\end{align}
In that case the sample complexity is of order $\tcO(\epsilon^{-2})$. Thus, in this case we see that the inverse initial time $t_0^{-1}$ and the number of points we need to sample from is polylogarithmic in $\epsilon$. 
Furthermore, the sample complexity to obtain an error $\epsilon$ is also $\tcO(\epsilon^{-2})$ up to polylogarithmic corrections.

For the sake of completeness, let us now discuss the conditions under which our protocol works beyond the setting of exponentially decaying or short-range interactions.
From Eq.~\eqref{equ:final_error} the condition for our procedure to work becomes transparent: we need that
\begin{align}
    \left(h^{-1}\left(\frac{\epsilon}{2(e^{2.5v}-1)}\right)^{D}\log(\epsilon^{-1})\right)^{-2}=o(\epsilon^{-1}).
\end{align}
Indeed, in this case we have the property that it is possible to suitably re-scale the error $\epsilon$ to ensure that the total precision is at some desired precision $\tilde{\epsilon}$. For instance, let us assume that
\begin{align}
    h^{-1}\left(\frac{\epsilon}{2(e^{2.5v}-1)}\right)=\cO(\epsilon^{-r}),
\end{align}
for some $r>0$. As we discuss later, this is typically the case for algebraically decaying interactions. For such a LR-bound, we see that the resulting error in Eq.~\eqref{equ:final_error} is
\begin{align}
    \cO(\epsilon^{1-2Dr}\log(\epsilon^{-1})^2).
\end{align}
Ignoring the $\log(\epsilon^{-1})$ term, we see that by picking $\epsilon=\tilde{\epsilon}^{\frac{1}{1-2Dr}}$ we can ensure an error of order $\tilde{\epsilon}$ for the estimate. Thus, the growth of $h^{-1}$ has to be at most $r\leq \tfrac{1}{2D}$ and the sample complexity would also grow like $\tcO(\epsilon^{-2-2Dr-r})$, as we would need to sample $\tcO(d)=\tcO(\epsilon^{-r})$ points up to precision $\tilde{\epsilon}^{\frac{1}{1-2Dr}}$.

Thus, we see that our protocol has a sample complexity that is independent of the system size to estimate one parameter and the expected $\epsilon^{-2}$ scaling for short range evolutions evolutions, up to log factors. For algebraically decaying interactions, however, the sample complexity has a worse scaling that depends on the exact decay of the potential, but still independent of system size. 

We summarize the sample complexities, smallest initial time $t_0$ and number of different times steps we need for various different potentials in Table~\ref{tab:summary-resources}. 
\begin{table}
\begin{center}
\begin{tabular}{ |c|c|c|c| }
\hline
  & Sample Complexity & Number of points & Initial time \\ 
 \hline
 Finite range & $\epsilon^{-2}$ & $\textrm{polylog}(\epsilon^{-1})$ & $\textrm{polylog}(\epsilon)$\\  
 \hline
  Exponentially & $\epsilon^{-2}$ & $\textrm{polylog}(\epsilon^{-1})$ & $\textrm{polylog}(\epsilon)$\\  
 \hline
 Algebraically ($\alpha\geq5D-1$) & $\epsilon^{-2\frac{\alpha-3D}{\alpha-5D}}$ & $\epsilon^{-\frac{D}{\alpha-5D}}$ & $\epsilon^{\frac{2D}{\alpha-5D}}$\\  
 \hline

\end{tabular}
\caption{scaling of different resources required to obtain a recovery up to additive error $\epsilon$ of a parameters of the evolution. We have only included the leading order term and $\alpha$ denotes the decay of the potential in space, whereas $D$ the dimension of the lattice.}\label{tab:summary-resources}
\end{center}
\end{table}
\subsection{Algorithm for robust polynomial interpolation}
Now that we have established that the results of~\cite{kane_robust_2017} indeed allow us to estimate the derivative at $0$, let us now describe their polynomial interpolation algorithm in more detail for completeness. The algorithm consists of two parts, one $\ell_1$ regression and an iteration of $\ell_\infty$ regressions. Following~\cite{kane_robust_2017}, we will only consider the case in which we interpolate over $[-1,1]$. But it is straightforward to also interpolate over other intervals by a suitable affine transformation of the domain, as discussed before.
\paragraph{$\ell_1$ regression:} before we define the $\ell_1$ regression, we need to define the Chebyshev partitions:
\begin{defi}[Chebyshev partitions]
Let $m\in\N$ be given. The size $m$ Chebyshev partitions of $[-1,1]$ is the set of intervals $I_j=\left[\cos\frac{\pi j}{m},\cos\frac{\pi(j-1)}{m}\right]$ for $1\leq j\leq m$.
\end{defi}
We also define $\mathcal{P}_d$ to be the space of polynomials of degree at most $d$.

With these definitions at hand, we define the $\ell_1$ regression solution as follows:
\begin{defi}\label{equ:l1regression}
Given a set of $n$ points $(x_i,y_i)$ and $m\in\N$, we define the result of the degree $d$ $\ell_1$ regression with $m$ Chebyshev partitions $\hat{p}$ to be the polynomial
\begin{align*}
    \operatorname{argmin}_{\hat{p}\in\mathcal{P}_d} \sum\limits_{i=1}^n|I_j|\operatorname{mean}_{x_i\in I_j}\left|y_i-\hat{p}(x_i)\right|,
\end{align*}
where $\mathcal{P}_d$ is the set of polynomials of degree $d$.
\end{defi}
Note that the optimization problem above is a linear program and, thus, can be solved efficiently. Solving the $\ell_1$ regression problem with $n=\cO(d\log(d))$ samples from the Chebyshev measure is guaranteed to give us a good solution on average. More precisely, as shown in~\cite[Lemma 1.2]{kane_robust_2017}, the solution is guaranteed to satisfy
\begin{align*}
    \|p-\hat{p}\|_{\ell_1}=\cO(\sigma),
\end{align*}
where as usual $\sigma$ is the error in each estimate $y_i$ and 
\begin{align*}
    \|p-\hat{p}\|_{\ell_1}=\int\limits_{-1}^1\left|p(x)-\hat{p}(x)\right|dx.
\end{align*}
However, the results of the previous sections required us to obtain a good solution in the $\|\cdot\|_{\ell_\infty}$ distance, and in general
\begin{align}\label{equ:l1toloo}
\|p-\hat{p}\|_{\ell_1}=\cO(d^2\|p-\hat{p}\|_{\infty}).
\end{align} 
Although, as commented in the last section, we are interested in the regime of polynomial of relatively small degree, by adding a $\ell_\infty$ regression iteration on top of the $\ell_1$ regression, it is possible to get rid of this $d^2$ prefactor.

\paragraph{$\ell_\infty$ regression:} besides getting rid of the unwanted $d^2$ factor on the promise for the error of the $\ell_1$ regression, adding a $\ell_\infty$ regression step also has the favourable feature of making the whole procedure more robust to outliers in the data. 
\begin{defi}[$\ell_\infty$ regression]
Given a set of $n$ points $(x_i,y_i)$ and $m\in \N$ given. For the $m$ Chebyshev partitions $I_j$, choose $\tilde{x}_j\in I_j$ arbitrarily and let 
\begin{align*}
    \widetilde{y}_{j}=\operatorname{median}_{x_{i} \in I_{j}} y_{i}.
\end{align*}
We define the result of the degree $d$ $\ell_\infty$ regression with $m$ Chebyshev partitions $\hat{p}$ to be the polynomial $\hat{p}\in\mathcal{P}_d$
\begin{align}\label{equ:looregression}
     \operatorname{argmin}_{\hat{p}\in\mathcal{P}_d} \max _{j \in[m]}\left|\widehat{p}\left(\widetilde{x}_{j}\right)-\widetilde{y}_{j}\right|.
\end{align}
\end{defi}
Note that the problem in Eq.~\eqref{equ:looregression} also corresponds to a linear program and, thus, can be solved efficiently. The output of the $\ell_\infty$ regression algorithm is guaranteed to satisfy 
\begin{align}
    \|\hat{p}-p\|_{\infty}\leq 2.5\sigma+\frac{1}{2}\|p\|_{\infty}
\end{align}
as long as $m=\cO(d)$ and the we pick $n=\cO(d\log(d))$ samples from the Chebyshev measure, as shown in~\cite[Lemma 1.3]{kane_robust_2017}. Thus, the procedure gives us a promise of recovery in $\infty$-norm up to the unwanted $\|p\|_{\infty}$ term. This can be solved by iterating the $\ell_\infty$ regression step.

\paragraph{Iterating the $\ell_\infty$ step:} the last step to obtain the desired robust polynomial interpolation is to iteratively apply the $\ell_\infty$ iteration step to the residual. More precisely, we first perform the $\ell_1$-regreesion on our data, obtaining a polynomial $\hat{p}_0$. We can then define the new data points
\begin{align}\label{equ:residual_points}
    (x_i,\tilde{y}_i^0=y_i-\hat{p}_0(x_i))
\end{align}
and run the $\ell_\infty$ interpolation on this residual error, obtaining a polynomial $\hat{p}_1$. From Eq.~\eqref{equ:l1toloo} and our promise on the output of the $\ell_\infty$ interpolation, we know that the result of the interpolation will satisfy
\begin{align*}
    \|p-\hat{p}_1\|_{\infty}\leq 2,5\sigma+\frac{1}{2}\cO(d^2\sigma). 
\end{align*}
But then we can iterate this procedure by just running the $\ell_\infty$ regression on
\begin{align}\label{equ:residual_points2}
    (x_i,\tilde{y}_i^1=y_i-\hat{p}_1(x_i)).
\end{align}
Each time we run the interpolation on the residual, we exponentially reduce the error. By repeating the procedure $\cO(\log_2(d))$ times, we then arrive at a polynomial satisfying the promises of Thm.~\ref{thm:poly_interpolation_robust}.

Note, however, that the procedure used in Sec.~\ref{sec:numerics} to demonstrate the viability of our method differs slightly from the ones discussed here. 
The main difference is that we used equally spaced time steps that were not random.
However, in spite of this difference, we still obtained high quality solutions.

\section{Lieb-Robinson bounds}\label{sec:lieb-robinson}
This section gives a brief overview of Lieb-Robinson bounds. In particular, we give more explicit formulas for the functions $h$ in \eqref{equ:LRgeneral_main_text} in terms of the decay of the interactions and the dimension of the lattice. Lieb-Robinson bounds are by now a standard tool in quantum many-body systems and quantum information theory and we refer to~\cite{Lieb_1972,poulin_lieb-robinson_2010,bach_lieb-robinson_2014,hastings_locality_2010,kliesch2014lieb,sweke2019lieb,cubitt2015stability,kuwahara_strictly_2020} for a more general overview over the mathematical background and some latest bounds for algebraically decaying interactions.

At the heart of any Lieb-Robinson bound is the intuitive idea that if interactions in a system happen locally this should imply a bound on how fast information can be transmitted. The usual way to codify this property for Hamiltonian systems in the Heisenberg picture is to give a bound on the operator norm of the commutator between a time-evolved observable initially located in region $Y$ and a second observable located in a region $X$ in the distance $\operatorname{dist}(X,Y)$ between the regions $X,Y$, where $d$ could refer to the lattice or graph distance, i.e. a bound of the form
\begin{align}
  \norm{[A_X, O(t)_Y]}\leq C h(\operatorname{dist}(X,Y)) (e^{vt}-1),
\end{align}
where $C$ will typically depend on the size of the regions $|X|$ and $|Y|$ as well as on the operator norms of $A$ and $B$. However, in the context of Markovian dynamics and master equations, the bound is usually generalized by substituting the super-operator $[A_X, \cdot ]$ for an arbitrary bounded super-operator $\mathcal{K}_X:\M_{2^n}\to\M_{2^n}$ supported on $X$ leading to a Lieb-Robinson-bound of the form
\begin{align}\label{app:eq:LRI}
  \norm{\mathcal{K}_X(O(t)_Y)}\leq C h(\operatorname{dist}(X,Y)) (e^{vt}-1),
\end{align}
with $C$ depending on $\norm{\mathcal{K}_X}_{\infty\rightarrow\infty,cb}$. 
However, if $\mathcal{K}_X$ is of the form $\mathcal{K}_X=[A_X, \cdot]$, we have $\norm{\mathcal{K}_X}_{\infty\rightarrow\infty,cb}\leq 2 \norm{A_X}_{\infty}$, which allows us to recover the commutator  \cite{barthel2012quasilocality,cubitt2015stability}. In the following, we consider a regular lattice $\Lambda$ and assume that the dynamics is generated by a Lindbladian that decomposes according to
\begin{align}\label{app:eq:locL}
  \cL = \sum_{X\subset\Lambda} \mathcal{L}_X\;.
\end{align}
Following \cite{nachtergaele2011lieb,poulin_lieb-robinson_2010,cubitt2015stability}, we define the maximal interaction strength $J= \sup_{X\subset\Lambda} \norm{\mathcal{L}_X}_{1\rightarrow 1,cb}$ as well as the decay behaviour of the interactions $\mu(r) = \sup_{X\subset\Lambda: \operatorname{diam}(X)=r} \frac{\norm{\mathcal{L}_X}_{1\rightarrow 1,cb} }{J}$ in terms of the stabilized 1-to-1-norm $\norm{T}_{1\rightarrow 1,cb} = \sup_n \norm{T\otimes \id_n}_{r\rightarrow 1}$. We can then characterize $\cL$ as finite range if $\mu(r)=0$ for $r\geq R>0$, exponentially decaying if $\mu(r)\leq e^{-\mu r}$ and algebraically decaying if $\mu(r)\leq (1+r)^{-\alpha}$ for $\alpha>0$ and state the following Lieb-Robinson-bound for Lindbladians
\begin{thm}[dissipative LR-bound \cite{cubitt2015stability}]\label{app:eq:LRbthm}
  Let $\cL$ be a Lindbladian of the form \eqref{app:eq:locL}, $O_Y$ an observable supported on $Y\subset\Lambda$ and $\mathcal{K}_X:\M_{2^n}\to\M_{2^n}$ with $\mathcal{K}_X(\id_X) = 0$. Then
  \begin{align}
     \norm{\mathcal{K}_X(O(t)_Y)}\leq \norm{\mathcal{K}_X}_{\infty\rightarrow\infty,cb}\norm{O_Y} \min( |X|,|Y|)  h(\operatorname{dist}(X,Y)) (e^{vt}-1),
  \end{align}
  with $h(r) = e^{-\nu r}$ for $\cL$ exponentially decaying or finite range and $h(r)=(1+r)^\nu$ if $\cL$ is algebraically decaying with $\alpha>2D+1$ with $\nu<\alpha-(2D+1)$. 
\end{thm}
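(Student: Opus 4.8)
The plan is to follow the now-standard iteration (Dyson-series) route for dissipative Lieb--Robinson bounds, keeping careful track of the completely bounded norms so that every estimate is independent of the total number of qubits. Throughout I write $\alpha_t=e^{t\cL}$ for the Heisenberg evolution and exploit the decomposition \eqref{app:eq:locL}. This reproduces the argument of \cite{cubitt2015stability}; I only sketch its skeleton.

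First I would set up an integral equation. Fix $O_Y$ and note that, since $\mathcal{K}_X(\id_X)=0$ and $\mathcal{K}_X$ acts as the identity on the complement of $X$, the superoperator $\mathcal{K}_X$ annihilates every operator whose support is disjoint from $X$; in particular $\mathcal{K}_X(O_Y)=0$ whenever $X\cap Y=\varnothing$. Using $\tfrac{d}{dt}\alpha_t(O_Y)=\alpha_t(\cL(O_Y))$ together with the fact that $\cL_Z(A)=0$ for any $A$ supported outside $Z$ (which follows from $\cL_Z(\id)=0$ and locality of $H_Z,L_k$), only terms touching $Y$ contribute, and I obtain
\[ \mathcal{K}_X(\alpha_t(O_Y))=\sum_{Z:Z\cap Y\neq\varnothing}\int_0^t \mathcal{K}_X\big(\alpha_s(\cL_Z(O_Y))\big)\,ds. \]
Here each $\cL_Z$ enlarges the effective support by $\operatorname{diam}(Z)$, and $\norm{\cL_Z}_{1\to1,cb}\leq J\,\mu(\operatorname{diam}(Z))$ by definition of $J$ and $\mu$.

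Next I would iterate this identity. The $k$-th order term is a sum over ordered chains $Z_1,\dots,Z_k$ of interaction sets with $Z_1\cap Y\neq\varnothing$ and consecutive sets overlapping, carrying a factor $t^k/k!$ from the nested time integrals, a product $\prod_i J\mu(\operatorname{diam}(Z_i))$ of interaction strengths, and the overall prefactor $\norm{\mathcal{K}_X}_{\infty\to\infty,cb}\,\norm{O_Y}$. The only chains that survive are those connecting $Y$ to $X$, which forces their total geometric length to be at least $\operatorname{dist}(X,Y)$; running the expansion from whichever of $X,Y$ is smaller produces the $\min(|X|,|Y|)$ prefactor. Summing over $k\geq1$ yields the $(e^{vt}-1)$ dependence, the series starting at $k=1$ precisely because the $k=0$ term vanishes by $\mathcal{K}_X(\id_X)=0$.

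The hard part is the spatial bookkeeping: one must show that the nested sums over chains are controlled by a single decay function $h(\operatorname{dist}(X,Y))$. This rests on a reproducibility (sub-convolution) estimate for the interaction decay on a $D$-dimensional lattice, of the form
\[ \sum_{z\in\Lambda}\mu(d(x,z))\,\mu(d(z,y))\leq c\,\mu(d(x,y)). \]
For finite-range or exponentially decaying $\mu$ this holds with $h(r)=e^{-\nu r}$, while for $\mu(r)\leq(1+r)^{-\alpha}$ the intermediate lattice sum over $z$ converges only when $\alpha>2D+1$, and the convolution then reproduces a power law with the reduced exponent $\nu<\alpha-(2D+1)$, exactly the stated threshold. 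I would finish by absorbing the summable geometric constants into $v$ and $C$. The completely bounded norms are essential throughout: the operators appearing at intermediate orders act on regions of growing size, and the $1\to1$ and $\infty\to\infty$ cb-norms are stable under tensoring with the identity on the complementary qubits, which is what makes the per-step bounds dimension-independent.
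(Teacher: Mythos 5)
You should know at the outset that the paper does not actually prove this statement: Theorem~\ref{app:eq:LRbthm} is imported verbatim from \cite{cubitt2015stability}, and the surrounding text of the appendix only introduces the definitions ($J$, $\mu(r)$, the $1\to1$ and $\infty\to\infty$ cb norms) needed to state it, before using it as a black box to derive Lemma~\ref{lem:LR_locDyn}. So there is no in-paper proof to compare against; what you have written is a sketch of the cited reference's proof, and the only question is whether that sketch is faithful. In structure it is: the integral equation obtained from $\mathcal{K}_X(O_Y)=0$ for disjoint supports and $\cL_Z(A)=0$ for $A$ supported outside $Z$, the iteration into sums over connected chains $Z_1,\dots,Z_k$ weighted by $t^k/k!\,\prod_i J\mu(\operatorname{diam}(Z_i))$, the use of cb norms so that every per-step estimate is independent of the spectator qubits, and the resummation into $(e^{vt}-1)$ times a decay function evaluated at $\operatorname{dist}(X,Y)$ — this is exactly the strategy of \cite{cubitt2015stability}, and the remark that the series starts at $k=1$ because $\mathcal{K}_X(\id_X)=0$ kills the zeroth-order term is the right way to see where $(e^{vt}-1)$, rather than $e^{vt}$, comes from.

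The one place where your reasoning (rather than your citation) is wrong is the explanation of the algebraic-decay threshold. The two-point sub-convolution $\sum_{z}(1+d(x,z))^{-\alpha}(1+d(z,y))^{-\alpha}\leq c\,(1+d(x,y))^{-\alpha}$ on a $D$-dimensional lattice already holds whenever $\alpha>D$ (split the sum according to whether $d(x,z)\geq d(x,y)/2$ or $d(z,y)\geq d(x,y)/2$; the remaining factor is summable precisely for $\alpha>D$), and it reproduces the \emph{same} exponent $\alpha$, not a reduced one. So it is not true that this lattice sum "converges only when $\alpha>2D+1$", nor that the convolution itself degrades the exponent. The stronger hypothesis $\alpha>2D+1$ and the loss down to $\nu<\alpha-(2D+1)$ in the dissipative theorem come instead from the set-indexed, many-body bookkeeping of \cite{cubitt2015stability}: the interactions are indexed by arbitrary finite sets $Z$ with decay measured in $\operatorname{diam}(Z)$, so one must sum over all interaction sets containing given sites (with $|Z|$-type weights) and convert diameter-based decay into distance-based decay, and that is where the extra $D$'s are spent — as is also visible in the further exponent loss ($\beta=\alpha-3D$, etc.) recorded in Lemma~\ref{lem:LR_locDyn}. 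Since you defer the "spatial bookkeeping" to the reference anyway, this does not invalidate the sketch as a whole, but the sentence as written would not survive being made precise, and it is exactly the step that distinguishes the dissipative set-based result from the elementary two-body Hastings--Koma argument.
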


As stated above, in this work, we require a slightly different formulation of the LR-bound as given in \eqref{equ:LRexpo_main_text}, namely

  \begin{align}
        \|(e^{t\cL_\Lambda}-e^{t\cL_{\Lambda_r(Y)}})(O_Y)\|\leq c_1 h(\operatorname{diam}(\Lambda_{r(Y)})) (e^{vt}-1),
        \end{align}
which reflects directly that the dynamics of the system can already be described by a generator $\cL_{\Lambda_{r(Y)}}$ restricted to a region $\Lambda_{r(Y)}$ of diameter $r$ around the initial support $Y$ of the observable $O_Y$. To convert a bound of the form \eqref{app:eq:LRI}, we follow the reasoning given in \cite{barthel2012quasilocality, cubitt2015stability}. 

We can express the difference of the dynamics generated by the full Lindblad generator $\cL$ as compared to a restriction $\cL_{\Lambda_r}$ to the subset $\Lambda_r\subset\Lambda$ according to
\begin{align}
  (e^{t \cL} - e^{t\cL_{\Lambda_r}})O_Y = -\int_{0}^t \text{d}s\, \partial_s \left(e^{s\mathcal{L}_{\Lambda_r}}e^{(t-s) \mathcal{L}}\right) O_Y = \int_{0}^t \text{d}s\, e^{s\mathcal{L}_{\Lambda_r}}\left( \mathcal{L}-\mathcal{L}_{\Lambda_r} \right)e^{(t-s) \mathcal{L}} O_Y
\end{align}
Taking norms on both sides, we therefore obtain an upper bound of the form
\begin{align}
\norm{(e^{t \linGen} - e^{t\cL_{\Lambda_r}})O_Y}\leq  \sum_{X\not\subset \Lambda_r} \int_{0}^t \text{d}s\, \norm{\cL_X e^{(t-s) \cL O_Y}} \;.
\end{align}
We notice, that the term inside the integral is exactly of the form of the left-hand side of \eqref{app:eq:LRI} with $\mathcal{K}_X = \cL_X$. Hence, we can insert the standard LR-bound for dissipative dynamics from \eqref{app:eq:LRI} here and are left with a combinatorial problem in terms of the decay bounds. This can be done explicitly for several standard interaction decays, such as finite range, exponentially decaying or algebraically decaying interactions \cite{barthel2012quasilocality, cubitt2015stability}. In particular, based on the Lieb-Robinson bound in Thm.~\ref{app:eq:LRbthm}, we obtain
\begin{lem}[\cite{cubitt2015stability}]\label{lem:LR_locDyn}
Let $\mathcal{L}$ be a Lindbladian of the form \eqref{app:eq:locL}, $O_Y$ an observable supported on $Y\subset\Lambda$ and $r>0$ then for $\mathcal{L}_{\Lambda_r(y)} = \sum_{X\subset \Lambda_r(y)} \Lambda_X$ we have 
\begin{align}
\norm{\left(e^{t\cL} - e^{t\cL_{\Lambda_r(y)}}\right) O_Y}\leq \norm{O_Y} |Y| J \frac{e^{v t} - 1 - vt}{v} \, h(r)
\end{align}
with $h(r)$ exponentially decaying in $r$ for $\cL$ finite range or exponentially decaying and $h(r)$ decaying as $(1+r)^{-\beta}$ if $\cL$ is algebraically decaying with $\alpha>2D+1$ and $\beta = \alpha - 3D$ for $\alpha\geq 5D-1$ and $\beta = \frac{1}{2}(\alpha-D-1)$ if $\alpha\leq 5D-1$.  
\end{lem}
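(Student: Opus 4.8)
The plan is to build on the Duhamel (interpolation) identity already displayed above, which expresses the difference of the two semigroups as a time integral of the boundary generator $\cL-\cL_{\Lambda_r(Y)}=\sum_{X\not\subset\Lambda_r(Y)}\cL_X$ sandwiched between the restricted and the full evolution. Since $e^{s\cL_{\Lambda_r(Y)}}$ is a Heisenberg-picture quantum channel and therefore a contraction in operator norm, a triangle inequality gives precisely the third displayed bound, so the task reduces to controlling $\sum_{X\not\subset\Lambda_r(Y)}\int_0^t \norm{\cL_X e^{(t-s)\cL}(O_Y)}\,ds$.

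The first substantive step is to recognize each integrand as an instance of the left-hand side of \eqref{app:eq:LRI} with the choice $\mathcal{K}_X=\cL_X$; this is legitimate because every Lindbladian term annihilates the identity, $\cL_X(\id_X)=0$, which is exactly the hypothesis of Theorem~\ref{app:eq:LRbthm}. Inserting that theorem gives
\[
\norm{\cL_X e^{(t-s)\cL}(O_Y)}\leq \norm{\cL_X}_{\infty\rightarrow\infty,cb}\,\norm{O_Y}\,\min(|X|,|Y|)\,h_{\mathrm{LR}}(\operatorname{dist}(X,Y))\,(e^{v(t-s)}-1),
\]
and the elementary integral $\int_0^t(e^{v(t-s)}-1)\,ds=(e^{vt}-1-vt)/v$ produces the time profile appearing in the statement. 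Bounding $\norm{\cL_X}_{\infty\rightarrow\infty,cb}\le J\,\mu(\operatorname{diam} X)$ through the definitions of $J$ and $\mu$ (up to the routine duality between the stabilized $1\rightarrow1$ and $\infty\rightarrow\infty$ norms) then isolates the purely geometric sum
\[
\Sigma(r)=\sum_{X\not\subset\Lambda_r(Y)}\mu(\operatorname{diam} X)\,\min(|X|,|Y|)\,h_{\mathrm{LR}}(\operatorname{dist}(X,Y))
\]
as the only remaining quantity.

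The crux, and the step I expect to be the main obstacle, is estimating $\Sigma(r)$ and extracting the decay function $h(r)$ quoted in the statement. The difficulty is that the constraint $X\not\subset\Lambda_r(Y)$ mixes two qualitatively different families of terms: sets far from $Y$, whose contribution is suppressed by $h_{\mathrm{LR}}(\operatorname{dist}(X,Y))$, and sets anchored near $Y$ but of large diameter, which must reach past radius $r$ and so satisfy $\operatorname{diam} X\gtrsim r-\operatorname{dist}(X,Y)$, whose contribution is suppressed instead by $\mu(\operatorname{diam} X)$. I would organize the sum into shells at distance $\ell=\operatorname{dist}(X,Y)$, use that a $D$-dimensional regular lattice has $\cO(\ell^{D-1})$ sites per shell together with the analogous polynomial count of diameter-$s$ sets through a fixed anchor, and split the sum at $\ell\approx r$: for $\ell\ge r$ the raw LR decay controls the tail, while for $\ell<r$ the factor $\mu(r-\ell)$ does. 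Summing the two resulting series and optimizing the crossover is the combinatorial heart of the argument, carried out in detail in \cite{barthel2012quasilocality,cubitt2015stability}.

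It then remains to specialize the decays. For finite-range ($\mu(r)=0$ for $r\ge R$) and exponentially decaying interactions both $\mu$ and $h_{\mathrm{LR}}$ decay exponentially, so the polynomial shell weights are absorbed and $\Sigma(r)$ inherits an exponential bound $h(r)=e^{-\nu r}$. For algebraic decay $\mu(r)\le(1+r)^{-\alpha}$ with $\alpha>2D+1$, the two-regime split with the $\ell^{D-1}$ volume weights turns $\Sigma(r)$ into a convolution of power laws; balancing the crossover radius yields the stated exponents, namely $\beta=\alpha-3D$ when $\alpha\ge5D-1$ (the far regime dominates) and $\beta=\tfrac12(\alpha-D-1)$ when $\alpha\le5D-1$ (the balance point lies in the interior). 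Absorbing $\norm{O_Y}$, $|Y|$ and $J$ into the prefactor then gives exactly the bound of Lemma~\ref{lem:LR_locDyn}.
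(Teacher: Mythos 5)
Your proposal follows essentially the same route as the paper: the Duhamel identity combined with contractivity of $e^{s\cL_{\Lambda_r(Y)}}$, the identification of the integrand with the left-hand side of \eqref{app:eq:LRI} for $\mathcal{K}_X=\cL_X$ (legitimate since $\cL_X(\id_X)=0$), insertion of Theorem~\ref{app:eq:LRbthm}, and deferral of the remaining combinatorial sum over sets $X\not\subset\Lambda_r(Y)$ to \cite{barthel2012quasilocality,cubitt2015stability}, exactly as the paper does. Your additional sketch of the shell decomposition and the crossover analysis yielding the exponents $\beta=\alpha-3D$ and $\beta=\tfrac{1}{2}(\alpha-D-1)$ is consistent with the cited references and goes slightly beyond the level of detail the paper itself provides.
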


We remark that all these bound give us the required independence of the right-hand side from the overall system size. We expect that with the help of recent more stringent estimates on Hamiltonians with algebraic decay, it will most likely be possible to extend and strengthen these bounds for other algebraic decays.

\section{Parallelizing the Measurements: shadow process tomography}\label{sec:shadows_parallel}
\par In this section we introduce a method to parallelize the estimation of the Pauli overlaps required for our protocol. For a quantum channel $\Phi_t$ and two colletions of Pauli strings $P_{a}^1,\ldots,P_{a}^{K_1}$ and $P_b^1,\ldots,P_{b}^{K_2}$ that have combined weight $\omega_a+\omega_b$ at most $\omega$ it allows us to estimate all the overlaps of the form
\begin{align}
2^{-n}\tr{P_a^j\Phi_t(P_b^k)}
\end{align}
up to an error $\epsilon$ with probability at least $1-\delta$ from a number of samples that grows like $\cO(3^w\log(K_1K_2\delta^{-1})\epsilon^{-2})$. Moreover, it only requires us to prepare simple Pauli eigenstates and measure in Pauli eigenbases. Recently two works~\cite{processtomo,processtomo2} have considered how to generalize the shadows protocol to the setting of process tomography. Unfortunately, it was noticed that the proofs in~\cite{processtomo,processtomo2} are incorrect. From a high-level perspective, the issue with their argument was that it was based on the Choi-Jamilkowski isomorphism to map the problem into a state tomography problem. However, this unfortunately incurs in an exponential prefactor the authors missed.

Here we present a correction of the proof that also obtains a better exponent for the sample complexity in terms of the locality. Thus, we believe that this section may be of independent interest.

The protocol we will introduce now allows us to estimate all the data required for our Hamiltonian learning protocol in parallel. Every round $i$ of the protocol is performed as follows:
\begin{enumerate}
\item Draw two Pauli strings $B_i,S_i\in\{X,Y,Z\}^n$ uniformly at random and also a sign vector $E_i=\{-1,+1\}^n$ uniformly at random. 
\item Prepare the quantum state $\rho_i=\otimes_{j=1}^n\phi_j(S_i,E_i)$, where $\phi_j(S_i,E_i)$ is an eigenstate of the $j-$th Pauli on the string $S_i$ corresponding to the eigenvalue in the $j$-th entry of $E_i$. 
\item Evolve $\rho_i$ by $\Phi_t$.
\item Measure in the Pauli basis defined by $B_i$. Denote the measurement outcome by $M_i$.
\item Output $(B_i,S_i,E_i,M_i)$.
\end{enumerate}

Let us now introduce some notation to explain how to postprocess the samples obtained from the protocol above.

\par Let $P_a$ and $P_b$ be the given Pauli operators on  $n$ qubits.
We say that a basis $B_i$ overlaps with $P_a$ if all qubits on which $P_a$ acts non trivially are also measured in the same basis.
For example, $P_a=X\otimes Y\otimes I$ and $B_i=\{X,Y,Z\}$ overlap. However, $P_a=X\otimes Y\otimes I$ and $B_i=\{X,Z,Z\}$ do not overlap.
Moreover, if the basis and $P_a$ overlap, we say that the measurement outcome $M_i$ overlaps positively if we measure a positive eigenstate of $P_a$.
Otherwise, we say it is negative.
\par We will also define a similar notion of the state overlap given the Pauli $P_b$, the basis $S_i$ and sign $E_i$. We say that $P_b$ and 
$(S_i,E_i)$ overlap positively if $S_i$ coincides with $P_b$ on all qubits it acts non trivially and the state $\rho_i$ is a positive eigenstate of $P_b$.
We say we overlap negatively if it is a negative eigenstate. We will also define $\omega(P_a)$ to be the number of qubits on which $P_a$ acts non trivially. 

\par We can now finally introduce a random variable given the Paulis  $P_a$,  $P_b$ and the data from the experiment $B_i$, $S_i$, $E_i$ and $M_i$. 
Let us define a function in terms of the outcomes and inputs of one round of the protocol:
\begin{eqnarray}\label{Xab}
X_{a,b}(B_i,S_i,E_i,M_i)= \begin{cases}
 0&\text{ if $B_i$ does not overlap with $P_a$ or $S_i$ does not overlap with $P_b$,}\\
3^{\omega(P_a)+\omega(P_b)}/2 &\text{ if $(B_i,S_i)$  overlap with $(P_a,P_b)$, and both do so positively,}\\
 3^{\omega(P_a)+\omega(P_b)}/2 &\text{ if $(B_i,S_i)$  overlap with $(P_a,P_b)$, and both do so negatively,}\\
-3^{\omega(P_a)+\omega(P_b)}/2 &\text{ if $(B_i,S_i)$   overlap with $(P_a,P_b)$, one positively, the other negatively}.
 \end{cases}.
\end{eqnarray}

We will now show that:
\begin{eqnarray}\label{EXab}
\mathbb{E}(X_{a,b})=2^{-n}\Tr{(P_a\phi_t(P_b))}.
\end{eqnarray}
and 
\begin{align}\label{VarXab}
    \mathbb{E}(X_{a,b}^2)\leq 3^{w(P_a)+w(P_b)}
\end{align}

Before we prove that, let us discuss how these estimates on moments on $X_{a,b}$ suffice to obtain the claimed sample complexity. As in other works on classical shadows, it will be crucial to use the method of median of means estimator~\cite{Devroye2016} to estimate the expectation value of $X_{a,b}$. The method of medians of means works as follows. We take a sample of size $S$ and divide it into $K$ subsets of size $B$, i.e. $S=KB$. We then compute the empirical mean on each of the $K$ subsamples. Denote them by $\hat{\mu}_i$, with $1\leq i\leq K$. We then set our estimator of the mean to be:
\begin{align}
    \hat{\mu}_{\operatorname{MoM}}=\operatorname{median}(\hat{\mu}_1,\ldots,\hat{\mu}_K).
\end{align}
The main proeprty of this estimator is that we have that if the variance of $X_{a,b}$ is $\sigma^2$, then:
\begin{align}
\mathbb{P}(|\hat{\mu}_{\operatorname{MoM}}-\mathbb{E}(X_{a,b})|\geq\epsilon)\leq e^{-2K\left(\frac{1}{2}-\frac{\sigma^2}{B\epsilon^2}\right)}.
\end{align}
In particular, if we pick $B=4\sigma^2\epsilon^{-2}$ and $K=2\log(\delta^{-1})$, then:
\begin{align}
\mathbb{P}(|\hat{\mu}_{\operatorname{MoM}}-\mathbb{E}(X_{a,b})|\geq\epsilon)\leq \delta.
\end{align}
We see that the median of means method allows to have a logarithmic scaling of the error probability from a bound on the variance. On the other hand, just using the empirical mean directly combined with an estimate on the variance gives a polynomial dependance only.

Armed with these facts about the median of means estimator, the following result  immediately follows from \eqref{EXab} and \eqref{VarXab}:
\begin{cor}\label{equ:sample_complexity_shadows_process}
Let $\{P_a^l\}_{l=1}^{K_1}$ and $\{P_b^j\}_{j=1}^{K_2}$ be two collections of Pauli matrices on $n$ qubits such that
for any $l,j$ the following condition
\begin{eqnarray}\label{omega}\omega(P_a^l)+\omega(P_b^j)\leq \omega,\end{eqnarray}
holds.
Then 
\begin{eqnarray}\label{Oqw}
O(3^{\omega}\log{(K_1K_2\delta^{-1})}\epsilon^{-2})
\end{eqnarray}
runs of the protocol above suffice to obtain an estimate $e_{a,b}^{l,j}$ satisfying
\begin{eqnarray}\label{1204}
\Big|
|2^{-n}\tr{P_b^l\Phi_t(P_a^j)}-e_{a,b}^{l,j}\Big|\leq \epsilon
\end{eqnarray}
for all pairs $l,j$ with probability at least $1-\delta$.
\end{cor}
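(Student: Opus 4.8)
The plan is to treat each pair $(l,j)$ via a median-of-means estimator built on a \emph{single} shared batch of protocol runs, and then combine the per-pair failure probabilities with a union bound. The two moment estimates \eqref{EXab} and \eqref{VarXab} are precisely the inputs that the median-of-means machinery recalled above requires, so the corollary follows with essentially no additional work. First I would fix the whole sample once and for all: run the protocol $S$ times, producing data $(B_i,S_i,E_i,M_i)$ for $1\le i\le S$. For each pair $(l,j)$ I then post-process this \emph{same} data by evaluating the random variable $X^{l,j}_{a,b}=X_{P_a^l,P_b^j}(B_i,S_i,E_i,M_i)$ from \eqref{Xab}. By \eqref{EXab} its mean is the target overlap $\mathbb{E}(X^{l,j}_{a,b})=2^{-n}\tr{P_a^l\Phi_t(P_b^j)}$, and by \eqref{VarXab} together with the weight hypothesis \eqref{omega} its variance satisfies $\sigma_{l,j}^2\le\mathbb{E}((X^{l,j}_{a,b})^2)\le 3^{\omega(P_a^l)+\omega(P_b^j)}\le 3^{\omega}$, uniformly in $(l,j)$.

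Second, I would apply the median-of-means estimator. Partition the $S$ samples into $K$ groups of size $B$ using one fixed partition common to all pairs, and for each pair output $\hat{\mu}^{l,j}_{\operatorname{MoM}}$, the median of the $K$ subgroup empirical means of $X^{l,j}_{a,b}$. Choosing $B=4\cdot 3^{\omega}\epsilon^{-2}$ and $K=2\log(K_1K_2\delta^{-1})$ guarantees $\sigma_{l,j}^2/(B\epsilon^2)\le 1/4$, so the concentration bound stated in the excerpt gives, for each fixed pair,
\begin{align*}
\mathbb{P}\!\left(\left|\hat{\mu}^{l,j}_{\operatorname{MoM}}-2^{-n}\tr{P_a^l\Phi_t(P_b^j)}\right|\ge\epsilon\right)\le e^{-2K\left(\frac12-\frac{\sigma_{l,j}^2}{B\epsilon^2}\right)}\le e^{-K/2}=\frac{\delta}{K_1K_2}.
\end{align*}
A union bound over the $K_1K_2$ pairs then shows that all estimates $e_{a,b}^{l,j}:=\hat{\mu}^{l,j}_{\operatorname{MoM}}$ are $\epsilon$-accurate simultaneously with probability at least $1-\delta$, which is exactly \eqref{1204}. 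The total number of runs is $S=KB=O(3^{\omega}\epsilon^{-2}\log(K_1K_2\delta^{-1}))$, matching \eqref{Oqw}.

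The only point that deserves care, rather than being a genuine obstacle, is that the \emph{same} samples are reused to estimate every overlap. This is legitimate because the union bound is taken over the $K_1K_2$ (in general correlated) failure events, not over disjoint sample batches: the per-pair concentration statement holds for each estimator regardless of how the underlying data is subsequently reused, so no independence across pairs is needed. This reuse is precisely what makes the protocol parallel and is responsible for the merely logarithmic $\log(K_1K_2)$ dependence on the number of overlaps, as opposed to a multiplicative $K_1K_2$ blow-up that separate sampling would incur.
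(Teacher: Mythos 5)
Your proposal is correct and follows essentially the same route as the paper's own proof: a median-of-means estimator per pair, fed by the moment bounds \eqref{EXab} and \eqref{VarXab}, followed by a union bound over the $K_1K_2$ pairs on the shared sample. If anything, your choice $B=4\cdot 3^{\omega}\epsilon^{-2}$ is slightly more careful than the paper's $B=3^{\omega}\epsilon^{-2}$ (which, taken literally, makes the exponent in the concentration bound nonnegative), and your remark that the union bound needs no independence across the reused samples makes explicit a point the paper leaves implicit.
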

\begin{proof}
Let $X_{a,b}^{l,j}$ be the random variable we have defined above for a pair of Paulis $P_a^l$ and $P_b^j$. By the bound in \eqref{VarXab} we have that $\mathbb{E}((X_{a,b}^{l,j})^2)\leq 3^{\omega}$, holds. Thus, for the median of means estimator with $B=3^\omega\epsilon^{-2}$ samples per group and $K=2\log(\delta^{-1}K_1K_2)$ we have an estimate satisfying \eqref{1204} with probability of failure at most $\delta/K_1K_2$.
\par By a union bound, the median of means estimator of all $K_1K_2$ combinations of $P_a^l$ and $P_b^j$ satisfy \eqref{1204} with probability of failure at
most $\delta$.  The total number of samples required for this is
\begin{align}
S=KB=O(3^{\omega}\log{(K_1K_2\delta^{-1})}\epsilon^{-2}),
\end{align}
which yields the claim.
\end{proof}
\par To conclude we only need to show that \eqref{EXab} and \eqref{VarXab} hold. Let us start with the expectation value.
 First, note that
\begin{eqnarray}\label{1359}
\mathbb{E}(X_{a,b})=\mathbb{E}(X_{a,b}|B \,\,\mbox{and}\,\, S\,\, \mbox{overlap}) \mathbb{P}(B\,\, \text{and}\,\,  S\,\, \text{overlap}),
\end{eqnarray}
holds. Here we say that $B$ and $S$ overlap if $B$ overlaps with $P_a$ and $S$ overlaps with $P_b$. The latter formula holds, because if we do not have overlap then $X_{a,b}=0$ by the case $1$ of the definition \eqref{Xab}. Now observe that 
\begin{eqnarray}\label{1451}\mathbb{P}(B\,\, \text{and}\,\,  S\,\,  \text{overlap})=3^{-\omega(P_a)}3^{-\omega(P_b)}.\end{eqnarray}
This holds because we have a $1/3$ chance of "hitting the right Pauli" at each point of the support of either $P_a$ or $P_b$ and they are all independent.
Thus, all we need is to determine 
\begin{eqnarray}\label{1458}\mathbb{E}(X_{a,b}|B\,\,  \mbox{and}\,\,  S\,\,  \text{overlap}).\end{eqnarray}
Here we distinguish four cases:
\begin{enumerate}
    \item $S$ overlaps positively with $P_b$ and $B$ overlaps positively with $P_a$.
     \item $S$ overlaps negatively with $P_b$ and $B$ overlaps negatively with $P_a$,
     \item $S$ overlaps positively with $P_b$ and $B$ overlaps negatively with $P_a$,
     \item $S$ overlaps negatively with $P_b$ and $B$ overlaps positively with $P_a$.
\end{enumerate}
We can then break down the expectation of \eqref{1359} down into the cases:
\begin{eqnarray}\label{1506}
\mathbb{E}(X_{a,b}|B\,\, \mbox{and}\,\,  S\,\, \text{overlap}) \mathbb{P}(B\,\,  \text{and}\,\,  S\,\, \text{overlap})=
\sum\limits_{i=1}^{4}\mathbb{E}(X_{a,b}|B\,\,  \mbox{and}\,\, S\,\,  \text{overlap}, c=i)\mathbb{P}(c=i),
\end{eqnarray}
where $c$ is a random variable which keeps track of which case we have. By definition, if $c=1,2$, then 
\begin{eqnarray}\label{1510}
\mathbb{E}(X_{a,b}|B\,\,  \mbox{and}\,\,  S\,\, \text{overlap}, c=1,2) = 3^{\omega(P_a)+\omega(P_b)}/2.
\end{eqnarray}
If $c=3,4$:
\begin{eqnarray}\label{1511}
\mathbb{E}(X_{a,b}|B\,\,\mbox{and}\,\,  S\,\, \text{overlap}, c=3,4) = -3^{\omega(P_a)+\omega(P_b)}/2.
\end{eqnarray}
Note that the $3^{\omega(P_a)+\omega(P_b)}$ counteracts the $ \mathbb{P}(B\,\, \text{and}\,\,  S\,\, \text{overlap})$ term, up to an additional $1/2$ factor.
\par Thus, all that is left is to estimate 
\begin{eqnarray}\label{1514}\mathbb{P}(c=i),\quad i=1,2,3,4.\end{eqnarray}
Let us estimate $\mathbb{P}(c=1)$, the other cases will be analogous.
To this end, we will introduce $Q_a^{+}$ and $Q_a^{-}$, which are the projectors onto the positive and  negative eigenvalues, respectively, of the support of $P_a$. We will use analogous notation for $Q_b^+$ and $Q_b^-$.
Clearly, $P_a=(Q_a^{+}-Q_a^{-})\otimes I^{\otimes n-\omega(P_a)}$, holds.  Here and in what follows we will always assume for simplicity that the Pauli strings are supported on the first qubits.

Let us estimate the expected initial state for the case $c=1$. Conditioned on being a state  with positive overlap
with $P_b$, by constructiton we know that the state is uniformly distributed on the positive eigenspace of $P_b$.
This corresponds to the state
\begin{eqnarray}\label{1526}\sigma= \frac{Q^{+}_b}{2^{\omega(P_b)-1}}\otimes \left(\frac{\mathbb{I}}{2}\right)^{\otimes n-\omega(P_b)}.\end{eqnarray}
Given that this is the initial state and that we are measuring in the eigenbasis of $P_a$ (recall that we have overlap), the probability of measuring 
a positive outcome is $\Tr{(Q_a^{+}\otimes \mathbb{I}^{n-\omega(P_a)}\phi_t(\sigma))}$.
\par We conclude that
\begin{eqnarray}\label{1536} \mathbb{P}(c=1)=\frac{1}{2}\Tr{(Q_a^{+}\otimes \mathbb{I}^{n-\omega(P_a)}\phi_t(\sigma))}=
\frac{1}{2^{n-1}}\Tr{(Q_a^{+}\otimes \mathbb{I}^{n-\omega(P_a)}\phi_t(Q_b^{+}\otimes \mathbb{I}^{n-\omega(P_b)}))}.\end{eqnarray}
Similarly
\begin{eqnarray}\label{1539} \mathbb{P}(c=2)=
\frac{1}{2^{n-1}}\Tr{(Q_a^{-}\otimes 1^{n-\omega(P_a)}\phi_t(Q_b^{-}\otimes 1^{n-\omega(P_b)}))},\end{eqnarray}
\begin{eqnarray}\label{1537} \mathbb{P}(c=3)=
\frac{1}{2^{n-1}}\Tr{(Q_a^{+}\otimes 1^{n-\omega(P_a)}\phi_t(Q_b^{-}\otimes 1^{n-\omega(P_b)}))},\end{eqnarray}
\begin{eqnarray}\label{1537} \mathbb{P}(c=4)=
\frac{1}{2^{n-1}}\Tr{(Q_a^{-}\otimes 1^{n-\omega(P_a)}\phi_t(Q_b^{+}\otimes 1^{n-\omega(P_b)}))},\end{eqnarray}
As 
\begin{eqnarray}\label{1540} 2^{-n}\Tr{(P_a \phi_t(P_b) )}=2^{-n}(\Tr{(Q_a^{+}\phi_t(Q_b^{+}))}+\Tr{(Q_a^{-}\phi_t(Q_b^{-}))}-
\Tr{(Q_a^{+}\phi_t(Q_b^{-}))}-\Tr{(Q_a^{-}\phi_t(Q_b^{+}))}),
\end{eqnarray}
we conclude that
\begin{eqnarray}\label{1511}
\mathbb{E}(X_{a,b})=2^{-n}\Tr{(P_a\phi_t(P_b))}.
\end{eqnarray}

Computing the second moment of $X_{a,b}$ turns out to be quite simple. Note that the only nonzero value the random variable $X_{(a,b)}^2$ takes is $3^{2\omega(P_a)+2\omega(P_b)}/4$ with probability $3^{-\omega(P_a)-\omega(P_b)}$. Thus, we clearly have that:
\begin{align}
\mathbb{E}(X_{(a,b)}^2)=3^{\omega(P_a)+\omega(P_b)}/4,
\end{align}
which yields \eqref{VarXab}. This concludes all the computations required for Cor.~\eqref{equ:sample_complexity_shadows_process}.

Note that the protocol only requires the preparation of Pauli states and Pauli measurements. Thus, it should be feasible to implement it on near term devices. Additionally, the required postprocessing is efficient, as evaluating the value of $X_{a,b}$ can be efficiently given a sample.

Furthermore, note that if we further have the information that we do not to wish to recover certain bases (i.e. we do not wish to recover Pauli strings with $Y$ terms), it is possible to adapt the protocol and do not prepare initial states or measure in that basis. This will reduce the sample complexity accordingly.

\end{document}